\documentclass[aos,amsmath,amsthm,preprint]{imsart}
\setattribute{journal}{name}{}
\usepackage{graphicx}
\usepackage{amsfonts,amssymb,mathrsfs}
\usepackage{bbm}
\usepackage{subfig}
\usepackage{comment}
\usepackage[square,compress,sort,numbers]{natbib}
\bibliographystyle{plain}
\usepackage{pxfonts}
\usepackage{eulervm}
\usepackage{enumerate}
\usepackage{bm}
\usepackage{url}
\usepackage{algorithm,algorithmic}
\usepackage[colorlinks=true,pagebackref,linkcolor=magenta]{hyperref}
\usepackage{parskip}
\newtheorem{theorem}{Theorem}
\newtheorem{lemma}[theorem]{Lemma}
\newtheorem{corollary}{Corollary}

\theoremstyle{definition}
\newtheorem{definition}{Definition}
\newtheorem*{remark}{Remark}
\numberwithin{equation}{section}
\mathchardef\ordinarycolon\mathcode`\:
\mathcode`\:=\string"8000
\begingroup \catcode`\:=\active
  \gdef:{\mathrel{\mathop\ordinarycolon}}
\endgroup

\newcommand{\argmax}{\operatornamewithlimits{argmax}}

\renewcommand{\tilde}{\widetilde}
\renewcommand{\Pr}{\mathbb{P}}

\newcommand{\bU}{\mathbf{U}}
\newcommand{\bX}{\mathbf{X}}
\newcommand{\bLam}{\bm{\Lambda}}
\renewcommand{\Pr}{\mathbb{P}}

\newcommand{\bA}{\mathbf{A}}
\newcommand{\bB}{\mathbf{B}}

\newcommand{\bM}{\mathbf{M}}

\newcommand{\bP}{\mathbf{P}}

\makeatletter
\def\thm@space@setup{%
 \thm@preskip=\parskip \thm@postskip=0pt
}
\makeatother

\begin{document}
\begin{frontmatter}
\title{Asymptotically efficient estimators for stochastic blockmodels: the naive MLE, the rank-constrained MLE, and the spectral}
\runtitle{The naive, the rank-constrained, and the spectral}
\begin{aug} 
\author{\fnms{Minh}
\snm{Tang}\ead[label=e1]{mtang10@jhu.edu}}
\and
\author{\fnms{Joshua}
\snm{Cape}\ead[label=e2]{joshua.cape@jhu.edu}}
\and
\author{\fnms{Carey E.}
\snm{Priebe}\corref{}\ead[label=e3]{cep@jhu.edu}}
 \thankstext{t1}{Supported in part by Johns Hopkins
 University Human Language Technology Center of Excellence (JHU HLT
 COE), the XDATA, SIMPLEX, and D3M programs 
 of the Defense Advanced Research Projects
 Agency (DARPA) administered through contract FA8750-12-2-0303, contract N66001-15-C-4041, and contract FA8750-17-2-0112, respectively.}
 \runauthor{Tang, Cape and Priebe}
 \affiliation{Johns Hopkins University}
\address{Department of Applied Mathematics and Statistics,\\Johns
  Hopkins University,\\3400 N. Charles St,\\Baltimore, MD 21218, USA.\\
\printead{e1}\\
\phantom{Email: }
\printead*{e2} \\
\phantom{Email: }
\printead*{e3}} 
\end{aug}

\begin{abstract} 
We establish asymptotic normality results for estimation of the block probability matrix $\mathbf{B}$ in stochastic blockmodel graphs using spectral embedding when the average degrees grows at the rate of $\omega(\sqrt{n})$ in $n$, the number of vertices. As a corollary, we show that when $\mathbf{B}$ is of full-rank,  estimates of $\mathbf{B}$ obtained from spectral embedding are asymptotically efficient. When $\mathbf{B}$ is singular the estimates obtained from spectral embedding can have smaller mean square error than those obtained from maximizing the log-likelihood under no rank assumption, and furthermore, can be almost as efficient as the true MLE that assume known $\mathrm{rk}(\mathbf{B})$. Our results indicate, in the context of stochastic blockmodel graphs, that spectral embedding is not just computationally tractable, but that the resulting estimates are also admissible, even when compared to the purportedly optimal but computationally intractable maximum likelihood estimation under no rank assumption. 
\end{abstract}

\begin{keyword}[class=AMS]
  \kwd[Primary ]{62H12}
  \kwd[; secondary ]{62H30,62B10}
\end{keyword}
\begin{keyword}
  \kwd{asymptotic efficiency, random dot product graph, stochastic blockmodels, asymptotic normality, spectral embedding}
\end{keyword}
\end{frontmatter}

\section{Introduction}
\label{sec:intro}
Statistical inference on graphs is a burgeoning field of research in machine learning and statistics, with numerous applications to social network, neuroscience, etc. Many of the graphs in application domains are large and complex but nevertheless are believed to be composed of multiple smaller-scale communities. Thus, an essential task in graph inference is detecting/identifying local (sub)communities. 
The resulting problem of community detection on graphs is well-studied (see the survey \cite{Fortunato2010Community}), 
with many available techniques including those based on maximizing modularity and likelihood \cite{Bickel2009,Newman2004,Snijders1997Estimation,Celisse}, random walks \cite{pons05:_comput,rosvall08:_maps}, spectral clustering \cite{mcsherry,rohe2011spectral,chaudhuri12:_spect,sussman12,rinaldo_2013,von2007tutorial}, and semidefinite programming \cite{hajek,abbe}. It is widely known that under suitable models --- such as the popular stochastic blockmodel and its variants \cite{holland,karrer2011stochastic,lyzinski15_HSBM} --- one can consistently recover the underlying communities as the number of observed nodes increases, and furthermore, there exist deep and beautiful phase transitions phenomena with respect to the statistical and computational limits for recovery.

Another important question in graph inference is, subsequent to community detection, that of characterizing the nature and/or structure of these communities. One of the simplest and possibly most essential example is determining the (community specific) tendencies/probabilities for nodes to form link within and between communities. Consistent recovery of the underlying communities yields a straightforward and universally employed procedure for consistent estimation of these probabilities, namely averaging the number of edges within a community and/or between communities. This procedure, when intepreted in the context of estimating the parameters $\bm{\theta} = (\theta_1, \theta_2, \dots, \theta_r)$ of a collection of independent binomially distributed random variables $\{X_1,X_2, \dots, X_r\}$ with $X_i \sim \mathrm{Bin}(n_i, \theta_i)$, corresponds to maximum likelihood estimation with no restrictive assumption on 
the $\{\theta_i\}$. However, in the context of graphs and communities, there are often natural relationships among the communities; hence a graph with $K$ communities need not require $K(K+1)/2$ parameters to describe the within and between community connection probabilities. The above procedure is therefore potentially sub-optimal. 

Motivated by the above observation, our paper studies the asymptotic properties of three different estimators for $\mathbf{B}$, the matrix of edge probabilities among communities, of a stochastic blockmodel graph. Two estimators are based on maximum likelihood methods and the remaining estimator is based on spectral embedding. We show that, given an observed graph with adjacency matrix $\mathbf{A}$, the most commonly used estimator --- the MLE under no rank assumption on $\mathbf{B}$ --- is sub-optimal when $\mathbf{B}$ is not invertible. Moreover, when $\mathbf{B}$ is singular, the estimator based on spectral embedding $\mathbf{A}$ is often times better (smaller mean squared error) than the MLE under no rank assumption, and is almost as efficient as the asymptotically (first-order) efficient MLE whose parametrization depend on $\mathrm{rk}(\mathbf{B})$. Finally, when $\mathbf{B}$ is invertible, the three estimators are asymptotically first-order efficient.

\subsection{Background}
We now formalize the setting considered in this paper. We begin by recalling the notion of stochastic blockmodel graphs due to \cite{holland}. Stochastic blockmodel graphs and its variants, such as degree-corrected blockmodels and mixed membership models \cite{karrer2011stochastic,Airoldi2008} are the most popular models for graphs with intrinsic community structure. In addition, they are widely used as building blocks for constructing approximations (see e.g., \cite{gao,wolfe13:_nonpar,airoldi13:_stoch,klopp}) of the more general latent position graphs or graphons models \cite{Hoff2002,lovasz12:_large}. 
\label{subsec:background}
\begin{definition}
  \label{def:SBM}
  Let $K \geq 1$ be a positive integer and let $\bm{\pi} \in \mathcal{S}_{K-1}$ be a non-negative vector in $\mathbb{R}^{K}$
  with $\sum_{k} \pi_k = 1$; here $\mathcal{S}_{K-1}$ denote the $K-1$ dimensional simplex in $\mathbb{R}^{K}$. Let $\mathbf{B} \in [0,1]^{K \times K}$
  be symmetric. We say that $(\mathbf{A}, \bm{\tau}) \sim
  \mathrm{SBM}(\mathbf{B}, \bm{\pi})$ with sparsity factor $\rho$ if the following hold. First
  $\bm{\tau} = (\tau_1, \dots, \tau_n)$ where $\tau_i$ are
  i.i.d. with $\Pr[\tau_i = k] = \pi_k$. Then
  $\mathbf{A} \in \{0,1\}^{n \times n}$ is a symmetric matrix such
  that, conditioned on $\bm{\tau}$, for all $i \leq j$ the $\mathbf{A}_{ij}$
  are independent Bernoulli random variables with $\mathbb{E}[\mathbf{A}_{ij}] = \rho \mathbf{B}_{\tau_i, \tau_j}$. We write $\mathbf{A} \sim \mathrm{SBM}(\mathbf{B}, \bm{\pi})$ when only $\mathbf{A}$ is observed, i.e., $\bm{\tau}$ is integrated out from $(\mathbf{A}, \bm{\tau})$.
\end{definition}
For $(\mathbf{A}, \bm{\tau}) \sim \mathrm{SBM}(\mathbf{B}, \bm{\pi})$ or $\mathbf{A} \sim \mathrm{SBM}(\mathbf{B}, \bm{\pi})$
with $\mathbf{A}$ having $n$ vertices and (known) sparsity factor $\rho$, 
the likelihood of $(\mathbf{A}, \tau)$ and $\mathbf{A}$ are, respectively
\begin{gather} L(\mathbf{A}, \tau; \mathbf{B}, \bm{\pi}) = \Bigl(\prod_{i=1}^{n} \pi_{\tau_i} \Bigr) \Bigl(\prod_{i \leq j} (\rho \mathbf{B}_{\tau_i, \tau_j})^{\mathbf{A}_{ij}} (1 - \rho \mathbf{B}_{\tau_i,\tau_j})^{1 - \mathbf{A}_{ij}} \Bigr), \\
 L(\mathbf{A}; \mathbf{B}, \bm{\pi}) = \sum_{\bm{\tau} \in [K]^{n}} L(\mathbf{A}, \bm{\tau}; \mathbf{B}, \bm{\pi}). \end{gather}
When we observed $\mathbf{A} \sim \mathrm{SBM}(\mathbf{B}, \bm{\pi})$ with $\mathbf{B} \in [0,1]^{K}$ where $\rho$ and $K$ are assumed known, 
a maximum likelihood estimate of $\bm{\pi}$ and $\mathbf{B}$ is given by
\begin{equation}
\label{eq:MLE_naive}
 (\hat{\mathbf{B}}^{(N)}, \hat{\bm{\pi}}) = \argmax_{\mathbf{B} \in [0,1]^{K \times K}, \,\, \bm{\pi} \in \mathcal{S}_{K-1}} 
L(\mathbf{A}; \mathbf{B}, \bm{\pi}).
\end{equation}
The maximum likelihood estimate (MLE) $\hat{\mathbf{B}}^{(N)}$ in Eq.~\eqref{eq:MLE_naive} 
requires estimation of the $K(K+1)/2$ entries of $\mathbf{B}$ and is generally intractable as it requires marginalization over the latent vertex-to-block assignment vector $\bm{\tau}$. Another parametrization of $\mathbf{B}$ is via the eigendecomposition $\mathbf{B} = \mathbf{V} \mathbf{D} \mathbf{V}^{\top}$ where $d = \mathrm{rk}(\mathbf{B})$, $\mathbf{V}$ is a $K \times d$ matrix with $\mathbf{V}^{\top} \mathbf{V} = \mathbf{I}$ and $\mathbf{D}$ is diagonal.
This parametrization results in the estimation of $d(2K - d+1)/2 \leq K(K+1)/2$ parameters, with $Kd - d(d+1)/2$ parameters being estimated for $\mathbf{V}$ (as an element of the Stiefel manifold of orthonormal $d$ frames in $\mathbb{R}^{K}$) and $d$ parameters estimated for $\mathbf{D}$. Therefore, when $d = \mathrm{rk}(\mathbf{B})$ (in addition to $\rho$ and $K$) is also assumed known, another MLE of $\mathbf{B}$ and $\bm{\pi}$ is given by
\begin{equation}
\label{eq:MLE_rank}
 (\hat{\mathbf{B}}^{(M)}, \hat{\bm{\pi}}) = \argmax_{\mathbf{B} \in [0,1]^{K \times K},\,\, \mathrm{rk}(\mathbf{B}) = d, \,\, \bm{\pi} \in \mathcal{S}_{K-1}}
L(\mathbf{A}; \mathbf{B}, \bm{\pi}).
\end{equation}
The MLE parametrization in Eq.~\eqref{eq:MLE_naive} is the one that is universally used, see e.g., \cite{Choi2010,Celisse,Snijders1997Estimation,bickel_asymptotic_normality}; variants of MLE estimation such as maximization of the profile likelihood \cite{Bickel2009} or variational inference \cite{daudin} are also based solely on approximating the MLE in Eq.~\eqref{eq:MLE_naive}. In contrast, the MLE parametrization used in Eq.~\eqref{eq:MLE_rank} has, to the best of our knowledge, never been considered heretofore in the literature. We shall refer to $\hat{\mathbf{B}}^{(N)}$ and $\hat{\mathbf{B}}^{(M)}$ as the naive MLE and the true (rank-constrained) MLE, respectively. 

The estimator $\hat{\mathbf{B}}^{(N)}$ is asymptotically normal around $\mathbf{B}$; in particular Lemma~1 (and its proof) in \cite{bickel_asymptotic_normality} states that
\begin{theorem}[\cite{bickel_asymptotic_normality}]
\label{thm:bickel_choi}
Let $\mathbf{A}_n \sim \mathrm{SBM}(\mathbf{B}, \bm{\pi})$ for $n \geq 1$ be a sequence of stochastic blockmodel graphs with sparsity factors $\rho_n$. 
Let $\hat{\mathbf{B}}^{(N)}$ be the MLE of $\mathbf{B}$ obtained from $\mathbf{A}_n$ with $\rho_n$ assumed known. If $\rho_n \equiv 1$ for all $n$, then
\begin{gather}
\label{eq:naive_clt1}
n (\hat{\mathbf{B}}^{(N)}_{kk} - \mathbf{B}_{kk}) \overset{\mathrm{d}}{\longrightarrow} \mathcal{N}\Bigl(0, \frac{2 \mathbf{B}_{kk} (1 - \mathbf{B}_{kk})}{\pi_k^2}\Bigr), \quad \text{for $k \in [K]$} \\
\label{eq:naive_clt2}
n(\hat{\mathbf{B}}^{(N)}_{kl} - \mathbf{B}_{kl}) \overset{\mathrm{d}}{\longrightarrow} \mathcal{N}\Bigl(0, \frac{\mathbf{B}_{kl} (1 - \mathbf{B}_{kl})}{\pi_k \pi_l}\Bigr), \quad \text{for $k \in [K], l \in [K], k \not = l$}
\end{gather}
as $n \rightarrow \infty$, and that the $K(K+1)/2$ random variables $\{n(\hat{\mathbf{B}}^{(N)}_{kl}- \mathbf{B}_{kl})\}_{k \leq l}$ are asymptotically independent. If, however, $\rho_n \rightarrow 0$ with $n \rho_n = \omega(\log{n})$, then
\begin{gather}
\label{eq:naive_clt3}
n \sqrt{\rho_n} (\hat{\mathbf{B}}^{(N)}_{kk} - \mathbf{B}_{kk}) \overset{\mathrm{d}}{\longrightarrow} \mathcal{N}\Bigl(0, \frac{2 \mathbf{B}_{kk}}{\pi_k^2}
 \Bigr), \quad \text{for $k \in [K]$} \\
\label{eq:naive_clt4}
n \sqrt{\rho_n} (\hat{\mathbf{B}}^{(N)}_{kl} - \mathbf{B}_{kl}) \overset{\mathrm{d}}{\longrightarrow} \mathcal{N}\bigl(0, \frac{\mathbf{B}_{kl}}{\pi_k \pi_l}\Bigr), \quad \text{for $k \in [K], l \in [K], k \not = l$}
\end{gather}
as $n \rightarrow \infty$, and the $K(K+1)/2$ random variables $\{n \sqrt{\rho_n}(\hat{\mathbf{B}}^{(N)}_{kl} - \mathbf{B}_{kl} \}_{k \leq l}$ are asymptotically independent.
\end{theorem}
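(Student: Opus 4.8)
The plan is to show that the naive MLE $\hat{\mathbf{B}}^{(N)}$ is asymptotically equivalent to the \emph{oracle} estimator granted knowledge of the true block assignments $\bm{\tau}$, and then to prove the stated central limit theorem for this oracle directly. Writing $n_k = \sum_{i=1}^n \mathbbm{1}\{\tau_i = k\}$ for the block sizes, define the oracle estimator entrywise by
\[
\tilde{\mathbf{B}}_{kl} = \frac{1}{\rho_n n_k n_l} \sum_{\tau_i = k,\ \tau_j = l} \mathbf{A}_{ij} \ \ (k \neq l), \qquad \tilde{\mathbf{B}}_{kk} = \frac{1}{\rho_n \binom{n_k}{2}} \sum_{\substack{i<j \\ \tau_i = \tau_j = k}} \mathbf{A}_{ij},
\]
which is precisely the closed-form maximizer in $\mathbf{B}$ of the complete-data likelihood $L(\mathbf{A}, \bm{\tau}; \mathbf{B}, \bm{\pi})$. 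The argument then splits into (i) a CLT for $\tilde{\mathbf{B}}$ and (ii) a proof that $\hat{\mathbf{B}}^{(N)} = \tilde{\mathbf{B}}$ on an event of probability tending to one.

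For step (i) I would condition on $\bm{\tau}$. Each $\tilde{\mathbf{B}}_{kl}$ is then an affine function of a sum of independent $\mathrm{Bernoulli}(\rho_n \mathbf{B}_{kl})$ variables, and distinct pairs $(k,l)$ involve disjoint collections of the $\mathbf{A}_{ij}$, so the entries are conditionally independent. A direct computation gives $\mathrm{Var}(\tilde{\mathbf{B}}_{kl} \mid \bm{\tau}) = \mathbf{B}_{kl}(1 - \rho_n \mathbf{B}_{kl})/(\rho_n n_k n_l)$ for $k \neq l$, and the analogous expression with $\binom{n_k}{2}$ in place of $n_k n_l$ (hence the extra factor $2$) on the diagonal; substituting $n_k/n \to \pi_k$ (a.s., by the strong law for the multinomial counts) together with the normalization $n$ or $n\sqrt{\rho_n}$ reproduces the four variance formulas, with $1 - \rho_n \mathbf{B}_{kl} \to 1$ in the sparse regime. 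Entrywise asymptotic normality follows from the Lyapunov CLT for triangular arrays, the Lyapunov ratio being of order $(n^2 \rho_n)^{-1/2} \to 0$; joint convergence and the asserted asymptotic independence then follow from the Cram\'er--Wold device, since any linear combination of $\{\tilde{\mathbf{B}}_{kl} - \mathbf{B}_{kl}\}_{k \le l}$ is again a sum of independent bounded summands. The fluctuations of $n_k$ about $n\pi_k$ do not enter the limit, as they perturb the normalizing constants only at lower order.

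The main obstacle is step (ii), since the naive MLE maximizes the \emph{marginal} likelihood $L(\mathbf{A}; \mathbf{B}, \bm{\pi}) = \sum_{\bm{\tau}} L(\mathbf{A}, \bm{\tau}; \mathbf{B}, \bm{\pi})$ rather than the complete-data likelihood. I would first establish exact recovery: with probability tending to one the true labeling (up to a permutation of the blocks) strictly maximizes, over all $\bm{\tau} \in [K]^n$, the complete-data log-likelihood profiled over $(\mathbf{B}, \bm{\pi})$. For a fixed competing labeling differing from the truth in $m$ vertices, the resulting log-likelihood gap is a sum of independent centered terms whose upper tail is controlled by a Chernoff/Bernstein bound of order $\exp(-c\,m\,n\rho_n)$ for a constant $c = c(\mathbf{B},\bm{\pi}) > 0$; as there are at most $(Kn)^m$ such labelings, the union bound $\sum_{m\ge 1} (K n\, e^{-c n \rho_n})^m \to 0$ holds \emph{precisely} under the hypothesis $n\rho_n = \omega(\log n)$, which is exactly where the degree condition enters. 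On the exact-recovery event the marginal likelihood is dominated by its true-labeling summand uniformly over $\mathbf{B}$ in a shrinking neighborhood of the truth, the remaining terms being exponentially smaller, so that its maximizer coincides with that of the complete-data likelihood, namely $\tilde{\mathbf{B}}$, with the subdominant summands contributing negligibly to the score. Combining (i) and (ii) by Slutsky gives the theorem. The two points demanding genuine care are making the dominant-term approximation uniform enough to transfer the \emph{argmax} (not merely the maximum) from the complete-data to the marginal likelihood, and checking that the sparse-regime Lyapunov condition is unaffected by the vanishing per-edge success probabilities.
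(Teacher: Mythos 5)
Your proposal takes essentially the same route as the proof this paper relies on: Theorem~\ref{thm:bickel_choi} is imported from \cite{bickel_asymptotic_normality}, whose argument likewise combines (i) a CLT for the complete-data (oracle) estimator $\tilde{\mathbf{B}}$, which conditional on $\bm{\tau}$ is a normalized sum of independent Bernoullis with exactly the variances you compute, and (ii) a dominant-term/label-recovery argument showing the marginal-likelihood MLE is asymptotically equivalent to that oracle precisely under $n\rho_n = \omega(\log n)$. Your variance calculations, the Lyapunov--Cram\'er--Wold step for joint normality and asymptotic independence, and the $\sum_m (Kn e^{-c n\rho_n})^m$ union bound all match that argument, so the proposal is correct in outline and in approach.
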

Furthermore, $\hat{\mathbf{B}}^{(N)}$ is also purported to be optimal; Section~5 of \cite{bickel_asymptotic_normality} states that
\begin{quotation}
These results easily imply that classical optimality properties of these procedures\footnote{The procedures referred to here are the MLE $\hat{\mathbf{B}}^{(N)}$ and its variational approximation as introduced in \cite{daudin}.}, such as achievement of the information bound, hold.
\end{quotation}
We present a few examples in Section~\ref{sec:main_results} illustrating that $\hat{\mathbf{B}}^{(N)}$ is optimal only if $\mathbf{B}$ is invertible, and that in general, for singular $\mathbf{B}$, $\hat{\mathbf{B}}^{(N)}$ is dominated by the MLE $\hat{\mathbf{B}}^{(M)}$.

Another widely used technique, and computationally tractable alternative to $\hat{\mathbf{B}}^{(N)}$, for estimating $\mathbf{B}$ is based on spectral embedding $\mathbf{A}$. More specifically, given $\mathbf{A} \sim \mathrm{SBM}(\mathbf{B}, \bm{\pi})$ with known sparsity factor $\rho$, we consider the following procedure for estimating $\mathbf{B}$:
\begin{enumerate}
\item Assuming $d = \mathrm{rk}(\mathbf{B})$ is known, let $\mathbf{A} = \hat{\mathbf{U}} \hat{\bm{\Lambda}} \hat{\mathbf{U}}^{\top} + \hat{\mathbf{U}}_{\perp} \hat{\bm{\Lambda}}_{\perp} \hat{\mathbf{U}}^{\top}_{\perp}$ be the eigendecomposition of $\mathbf{A}$ where $\hat{\bm{\Lambda}}$ is the diagonal matrix containing the $d$ largest eigenvalues of $\mathbf{A}$ in modulus and $\hat{\mathbf{U}}$ is the $n \times d$ matrix whose columns are the corresponding eigenvectors of $\mathbf{A}$. 

\item Assuming $K$ is known, cluster the rows of $\hat{\mathbf{U}}$ into $K$ clusters using $K$-means, obtaining an ``estimate'' $\hat{\bm{\tau}}$ of $\bm{\tau}$.

\item For $k \in [K]$, let $\hat{\bm{s}}_{k}$ be the vector in $\mathbb{R}^{n}$ where the $i$-th entry of $\hat{\bm{s}}_k$ is $1$ if $\hat{\tau}_i = k$ and $0$ otherwise and let $\hat{n}_k = |\{i \colon \hat{\tau}_i = k\}|$ be the number of vertices assigned to block $k$. 

\item Estimate $\mathbf{B}_{kl}$ by $\hat{\mathbf{B}}^{(S)}_{kl} = \frac{1}{\hat{n}_k \hat{n}_l \rho∫} \hat{\bm{s}}_k^{\top} \hat{\mathbf{U}} \hat{\bm{\Lambda}} \hat{\mathbf{U}}^{\top} \hat{\bm{s}}_{l}$.
\end{enumerate}
The above procedure assumes $d $ and $K$ are known. When $d$ is unknown, it can be consistently estimated using the following approach: let $\hat{d}$ be the number of eigenvalues of $\mathbf{A}$ exceeding $4 \sqrt{\delta(\mathbf{A})}$ in modulus; here $\delta(\mathbf{A})$ denotes the max degree of $\mathbf{A}$. Then $\hat{d}$ is a consistent estimate of $d$ as $n \rightarrow \infty$. This follows directly from tail bounds for $\|\mathbf{A} - \mathbb{E}[\mathbf{A}]\|$ (see e.g., \cite{rinaldo_2013,oliveira2009concentration}) and Weyl's inequality. The estimation of $K$ is investigated in \cite{bickel13:_hypot,lei2014} among others, and is based on showing that if $\mathbf{A}$ is a $K$-block SBM, then there exists a consistent estimate $\widehat{\mathbb{E}[\mathbf{A}]} = (\hat{\bA}_{ij})$ of $\mathbb{E}[\mathbf{A}]$ such that the matrix $\tilde{\mathbf{A}}$ with entries $\tilde{\bA}_{ij} = (\bA_{ij} - \hat{\bA}_{ij})/\sqrt{(n-1)\hat{\bA}_{ij} (1 - \hat{\bA}_{ij})}$ has a limiting Tracy-Widom distribution, i.e., $n^{2/3}(\lambda_{1}(\tilde{\mathbf{A}}) - 2)$ converges to Tracy-Widom.

The above spectral embedding procedure (and related procedures based on eigendecomposition of other matrices such as the normalized Laplacian) is also well-studied, see e.g., \cite{rinaldo_2013,sussman12,rohe2011spectral,mcsherry,joseph_yu_2015,bickel_sarkar_2013,fishkind2013consistent,athreya2013limit,tang_priebe_16,perfect,coja-oghlan} among others; however, the available results mostly focused on showing that $\hat{\bm{\tau}}$ consistently recovers $\bm{\tau}$. Since $\hat{\bm{\tau}}$ is almost surely an exact recovery of $\bm{\tau}$ in the limit, i.e., $\hat{\tau}_i = \tau_i$ for all $i$ as $n \rightarrow \infty$ (see e.g., \cite{perfect,mcsherry}), the estimate of $\mathbf{B}_{k \ell}$ given by $\tfrac{1}{\hat{n}_k \hat{n}_{\ell} \rho} \hat{\bm{s}}_k^{\top} \mathbf{A} \hat{\bm{s}}_{\ell}$ is a consistent estimate of $\mathbf{B}$, and furthermore, coincides with $\hat{\mathbf{B}}^{(N)}$ as $n \rightarrow \infty$. 
The quantity $\hat{\mathbf{U}} \hat{\bm{\Lambda}} \hat{\mathbf{U}}^{\top} - \mathbb{E}[\mathbf{A}]$ is also widely analyzed in the context of matrix and graphon estimation using universal singular values thresholding \cite{chatterjee,gao,xu_spectral,klopp}; the focus there had been in showing the minimax rates of convergence of $\tfrac{1}{n^2}\|\hat{\mathbf{U}} \hat{\bm{\Lambda}} \hat{\mathbf{U}}^{\top} - \mathbb{E}[\mathbf{A}]\|_{F}$ to $0$ as $n \rightarrow \infty$. These rates of convergence, however, do not translate to results on the limiting distribution of $\hat{\mathbf{B}}^{(S)}_{k \ell} - \mathbf{B}_{k \ell} = \tfrac{1}{\hat{n}_k \hat{n}_{\ell} \rho} \hat{\bm{s}}_k^{\top} \hat{\mathbf{U}} \hat{\bm{\Lambda}} \hat{\mathbf{U}}^{\top} \hat{\bm{s}}_{\ell} - \tfrac{1}{n_{k} n_{\ell} \rho} \bm{s}_k^{\top}\mathbb{E}[\mathbf{A}] \bm{s}_{\ell}$.

In summary, formal comparisons between $\hat{\mathbf{B}}^{(N)}$ and $\hat{\mathbf{B}}^{(S)}$ are severely lacking. Our paper addresses this important void in the literature. The contributions of our paper are as follows. For stochastic blockmodel graphs with sparisty factors $\rho_n$ satisfying $n \rho_n = \omega(\sqrt{n})$, we establish asymptotic normality of $n \rho_n^{1/2} (\hat{\mathbf{B}}^{(S)} - \mathbf{B})$ in Theorem~\ref{THM:GEN_D} and Theorem~\ref{THM:GEN_D_SPARSE}. As a corollary of this result, we show that when $\mathbf{B}$ is of full-rank, that $n \sqrt{\rho_n}(\hat{\mathbf{B}}^{(S)} - \mathbf{B})$ has the same limiting distribution as $n \sqrt{\rho_n} (\hat{\mathbf{B}}^{(N)} - \mathbf{B})$ given in Eq.~\eqref{eq:naive_clt1} and Eq.~\eqref{eq:naive_clt2} and that both estimators are asymptotically efficient; the two estimators $\hat{\mathbf{B}}^{(M)}$ and $\hat{\mathbf{B}}^{(N)}$ are identical in this setting. When $\mathbf{B}$ is singular, we show that $n \sqrt{\rho_n}(\hat{\mathbf{B}}^{(S)} - \mathbf{B})$ can have smaller variances than $n \sqrt{\rho_n} (\hat{\mathbf{B}}^{(N)} - \mathbf{B})$, and thus a bias-corrected $\hat{\mathbf{B}}^{(S)}$ can have smaller mean square error than $\hat{\mathbf{B}}^{(N)}$, and furthermore, that the resulting bias-corrected $\hat{\mathbf{B}}^{(S)}$ can be almost as efficient as the asymptotically first-order efficient estimator $\hat{\mathbf{B}}^{(M)}$. Finally, we also provide some justification of the potential necessity of the condition that the average degree satisfies $n \rho_n = \omega(\sqrt{n})$; in essence, as $\rho_n \rightarrow 0$, the bias incurred by the low-rank representation $
\hat{\bU} \hat{\bm{\Lambda}} \hat{\bU}^{\top}$ of $\mathbf{A}$ overwhelms the reduction in variance resulting from the low-rank representation.

\section{Central limit theorem for $\hat{\mathbf{B}}^{(S)}_{k\ell}$}
\label{sec:main_results}
Let $(\mathbf{A}, \bm{\tau}) \sim \mathrm{SBM}(\mathbf{B}, \bm{\pi}, \rho)$ be a stochastic blockmodel graph 
on $n$ vertices with sparsity factor $\rho$ and $\mathrm{rk}(\mathbf{B}) = d$.
We first consider the setting wherein both $\bm{\tau}$ and $d$ are assumed known. The setting wherein $\bm{\tau}$ is unobserved and needs to be recovered will be addressed subsequently in Corollary~\ref{cor:in_practice}, while the setting when $d$ is unknown was previously addressed following the introduction of the estimator $\hat{\mathbf{B}}^{(S)}$ in Section~\ref{sec:intro}. When $\bm{\tau}$ is known, the spectral embedding estimate of $\mathbf{B}_{kl}$ (with $\rho$ assumed known) is $\hat{\mathbf{B}}_{kl}^{(S)} = \tfrac{1}{\rho n_k n_l}
\bm{s}_k^{\top} \hat{\mathbf{U}} \hat{\bm{\Lambda}} \hat{\mathbf{U}}^{\top} \bm{s}_l$ where $\bm{s}_k$ is the vector whose elements $\{s_{ki}\}$ are such that 
$s_{ki} = 1$ if vertex $i$ is assigned to block $k$ and $s_{ki} = 0$ otherwise; here $n_k$ denote the number of vertices $v_i$ assigned to block $k$.

We then have the following non-degenerate limiting distribution of $\hat{\mathbf{B}}^{(S)} - \mathbf{B}$. We shall present two variants of this limiting distribution. The first variant, Theorem~\ref{THM:GEN_D}, applies to the setting where the average degree grows linearly with $n$, i.e., the 
sparsity factor $\rho_n \rightarrow c > 0$; without loss of generality, we can assume $\rho_n \equiv c = 1$. The second variant applies to the setting where the average degree grows sub-linearly in $n$, i.e., $\rho_n \rightarrow 0$ with $n \rho_n = \omega(\sqrt{n})$. For ease of exposition, these variants (and their proofs) will be presented using the following parametrization of stochastic blockmodel graphs as a sub-class of the more general random dot product graphs model \cite{young2007random,grdpg1}.
\begin{definition}[Generalized random dot product graph]
\label{def:grdpg}
  Let $d$ be a positive integer and $p \geq 1$ and $q \geq 0$ be 
  such that $p + q = d$. Let $\mathbf{I}_{p,q}$ denote the
  diagonal matrix whose diagonal elements contains $p$ entries
  equaling $1$ and $q$ entries equaling $-1$. 
  Let $\mathcal{X}$ be a subset of 
  $\mathbb{R}^{d}$ such $x^{\top} \mathbf{I}_{p,q} y \in[0,1]$ for all $x,y\in
  \mathcal{X}$. Let $F$ be a distribution taking values in $\mathcal{X}$. 
  We say $(\mathbf{X},\mathbf{A}) \sim
  \mathrm{GRDPG}_{p,q}(F)$ with sparsity factor $\rho \in (0,1]$ if the following hold. First let $X_1, X_2,
  \dots, X_n \overset{\mathrm{i.i.d}}{\sim} F$ and set
  $\mathbf{X}=[X_1 \mid \cdots \mid X_n]^\top\in \mathbb{R}^{n\times
    d}$. Then $\mathbf{A}\in\{0,1\}^{n\times n}$ is a symmetric matrix
such that, conditioned on $\mathbf{X}$, for all $i \geq j$ the
$A_{ij}$ are independent and
\begin{equation}
 A_{ij} \sim \mathrm{Bernoulli}(\rho X_i^\top \mathbf{I}_{p,q} X_j).
\end{equation}
We therefore have
\begin{equation}
\Pr[\mathbf{A} \mid \mathbf{X}]=\prod_{i \leq j} (\rho X^{\top}_i \mathbf{I}_{p,q}
X_j)^{A_{ij}}(1- \rho X^{\top}_i \mathbf{I}_{p,q} X_j)^{(1-A_{ij})}.
\end{equation}
\end{definition}
It is straightforward to show that any stochastic blockmodel graph
$(\mathbf{A}, \bm{\tau}) \sim \mathrm{SBM}(\bm{\pi}, \mathbf{B})$
can also be represented as a (generalized) random dot product
graph $(\mathbf{X}, \mathbf{A}) \sim \mathrm{GRDPG}_{p,q}(F)$ where $F$ is a
mixture of point masses. Indeed, suppose $\mathbf{B}$ is a $K \times
K$ matrix and let $\mathbf{B} = \mathbf{U}
\bm{\Sigma} \mathbf{U}^{\top}$ be the eigendecomposition of
$\mathbf{B}$. Then, denoting by $\nu_1, \nu_2, \dots, \nu_K$ the rows
of $\mathbf{U} |\bm{\Sigma}|^{1/2}$, we can define $F = \sum_{k=1}^{K} \pi_k
\delta_{\nu_k}$ where $\delta$ is the Dirac delta function; $p$ and $q$ are given by the number of positive and negative
eigenvalues of $\mathbf{B}$, respectively. 

\begin{theorem}
\label{THM:GEN_D}
Let $\mathbf{A} \sim \mathrm{SBM}(\bm{\pi}, \mathbf{B}, \rho_n)$ be a $K$-block stochastic blockmodel graph on $n$ vertices with sparsity factor $\rho_n = 1$. Let $\nu_1, \dots, \nu_K$ be point masses in $\mathbb{R}^{d}$ such that $\mathbf{B}_{k \ell} = \nu_k^{\top} \mathbf{I}_{p,q} \nu_{\ell}$ and let $\Delta = \sum_{k} \pi_k \nu_k \nu_k^{\top}$. 
For $k \in [K]$ and $\ell \in [K]$, let $\theta_{k \ell}$ be given by
 \begin{equation}
 \label{eq:mu_kl}
 \begin{split}
\theta_{k\ell} &= \sum_{r=1}^{K} \pi_r \bigl(\mathbf{B}_{kr} (1 - \mathbf{B}_{kr}) + \mathbf{B}_{\ell r}(1 - \mathbf{B}_{\ell r})\bigr)\nu_k^{\top} \Delta^{-1} \mathbf{I}_{p,q} \Delta^{-1} \nu_{\ell} \\
& - \sum_{r=1}^{K} \sum_{s=1}^{K} \pi_r \pi_s \mathbf{B}_{sr} (1 - \mathbf{B}_{sr}) \nu_s^{\top} \Delta^{-1} \mathbf{I}_{p,q} \Delta^{-1} 
(\nu_{\ell} \nu_k^{\top} + \nu_k \nu_{\ell}^{\top}) \Delta^{-1} \nu_s.
 \end{split}
 \end{equation}
 Now let $\zeta_{k \ell} = \nu_{k}^{\top} \Delta^{-1} \nu_{\ell}$. Define $\sigma_{kk}^2$ for $k \in [K]$ to be
 \begin{equation}
 \begin{split}
 \label{eq:sigma_kk}
 \sigma_{k k}^2 &= 
 4 \mathbf{B}_{kk}(1 - \mathbf{B}_{kk}) \zeta_{kk}^2 + 4 \sum_{r} \pi_r \mathbf{B}_{kr} (1 - \mathbf{B}_{kr}) \zeta_{kr}^2 \bigl(\tfrac{1}{\pi_k} - 2 \zeta_{kk}\bigr) \\ &
+ 2 \sum_{r} \sum_{s} \pi_r \pi_s \mathbf{B}_{rs} (1 - \mathbf{B}_{rs}) \zeta_{kr}^2 \zeta_{ks}^2
 \end{split}
 \end{equation}
 and define $\sigma_{k \ell}^{2}$ for $k \in [K], \ell \in [K], k \not = \ell$ to be 
 \begin{equation}
 \begin{split}
 \label{eq:sigma_kl2}
 \sigma_{k\ell}^2 &= 
 \bigl(\mathbf{B}_{kk} (1 - \mathbf{B}_{kk}) + \mathbf{B}_{\ell \ell}(1 - \mathbf{B}_{\ell \ell}) \bigr)\zeta_{k \ell}^2 + 
 2 \mathbf{B}_{k \ell} (1 - \mathbf{B}_{k \ell}) \zeta_{kk} \zeta_{\ell \ell} 
\\ &+ 
\sum_{r}  \pi_{r} \mathbf{B}_{k r}(1 - \mathbf{B}_{kr}) \zeta_{\ell r}^2 \bigl(\tfrac{1}{\pi_{k}} - 2 \zeta_{kk} \bigr) 
\\ &+ \sum_{r} \pi_r \mathbf{B}_{\ell r} (1 - \mathbf{B}_{\ell r}) \zeta_{kr}^{2} \bigl(\tfrac{1}{\pi_{\ell}} - 2 \zeta_{\ell \ell} \bigr) \\
& - 2 \sum_{r}  \pi_r \bigl(\mathbf{B}_{k r} (1 - \mathbf{B}_{k r}) + \mathbf{B}_{\ell r} (1 - \mathbf{B}_{\ell r})\bigr) \zeta_{kr} \zeta_{r \ell} \zeta_{k \ell} \\
& + \frac{1}{2} \sum_{r} \sum_{s} \pi_r \pi_s
\mathbf{B}_{rs} (1 - \mathbf{B}_{rs}) (\zeta_{k r} \zeta_{\ell s} + \zeta_{\ell r} \zeta_{k s})^2. 
 \end{split}
 \end{equation}
Then for any $k \in [K]$ and $\ell \in [K]$, 
\begin{equation}
\label{eq:sbm_normal2}
  n \bigl(\hat{\mathbf{B}}^{(S)}_{k\ell} - \mathbf{B}_{k \ell} - \frac{\theta_{kl}}{n}\bigr) \overset{\mathrm{d}}{\longrightarrow} N(0, \sigma_{k \ell}^2) 
\end{equation}
as $n \rightarrow \infty$. 
\end{theorem}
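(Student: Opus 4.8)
The plan is to realize the block model as a generalized random dot product graph, writing $\bA = \bP + \bE$ with $\bP = \E[\bA] = \bX \mathbf{I}_{p,q} \bX^\top$, where the rows of $\bX$ are the point masses $X_i = \nu_{\tau_i}$ and, conditionally on $\bm{\tau}$, the entries $\bE_{ij} = \bA_{ij} - \bP_{ij}$ ($i \le j$) are independent with variance $\mathbf{B}_{\tau_i \tau_j}(1 - \mathbf{B}_{\tau_i \tau_j})$. Since $\hat{\bP} := \hU \hat{\bm{\Lambda}} \hU^\top$ is the rank-$d$ truncation of $\bA$, the estimator is the quadratic form $\hat{\mathbf{B}}^{(S)}_{k\ell} = \tfrac{1}{n_k n_\ell} \bm{s}_k^\top \hat{\bP} \bm{s}_\ell$, while $\mathbf{B}_{k\ell} = \nu_k^\top \mathbf{I}_{p,q} \nu_\ell = \tfrac{1}{n_k n_\ell}\bm{s}_k^\top \bP \bm{s}_\ell$, so everything reduces to analyzing $\bm{s}_k^\top(\hat{\bP} - \bP)\bm{s}_\ell$. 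Throughout I would condition on $\bm{\tau}$ and record the deterministic identities $\bX^\top \bm{s}_k = n_k \nu_k$ and $\tfrac1n \bX^\top \bX = \sum_r \tfrac{n_r}{n}\nu_r \nu_r^\top \to \Delta$, together with the concentration $\|\bE\| = O_P(\sqrt{n})$ and the fact that the nonzero eigenvalues of $\bP$ are of order $n$, so that the relative perturbation $\|\bE\|/\lambda_{\min}(\bP) = O_P(n^{-1/2}) \to 0$.

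The analytic core is a second-order perturbation expansion of the rank-$d$ truncation. Writing $\bP_{\bU} = \bU\bU^\top = \bX(\bX^\top\bX)^{-1}\bX^\top$ for the projection onto the range of $\bP$, I would establish $\hat{\bP} - \bP = \mathcal{L}(\bE) + \mathcal{Q}(\bE) + \bR$, where the first-order term is $\mathcal{L}(\bE) = \bP_{\bU}\bE + \bE\bP_{\bU} - \bP_{\bU}\bE\bP_{\bU}$, the second-order term $\mathcal{Q}(\bE)$ is the quadratic-in-$\bE$ contribution (involving the resolvent-type factor $\bU\bm{\Lambda}^{-1}\bU^\top = \bX(\bX^\top\bX)^{-1}\mathbf{I}_{p,q}(\bX^\top\bX)^{-1}\bX^\top$), and $\bR$ is the remainder. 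Such an expansion follows from a contour-integral (Neumann-series) representation of the spectral projector, and the relative-perturbation bound above guarantees its convergence. The three quantitative claims I would prove are that, after sandwiching by $\tfrac{1}{n_k n_\ell}\bm{s}_k^\top(\cdot)\bm{s}_\ell$ and multiplying by $n$, the linear term is asymptotically Gaussian, the second-order term concentrates at the deterministic value $\theta_{k\ell}$, and the remainder is $o_P(1)$.

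For the linear term I would substitute $\bX^\top\bm{s}_k = n_k\nu_k$ and $(\bX^\top\bX)^{-1} \approx n^{-1}\Delta^{-1}$ to reduce $\tfrac{n}{n_k n_\ell}\bm{s}_k^\top \mathcal{L}(\bE)\bm{s}_\ell$ to an explicit linear functional $\sum_{i \le j} c_{ij}\bE_{ij}$ of the independent entries, with coefficients $c_{ij}$ (deterministic given $\bm{\tau}$) built from $\zeta_{k\ell} = \nu_k^\top \Delta^{-1}\nu_\ell$; here the three pieces of $\mathcal{L}$ contribute terms scaling like $\zeta_{\tau_j \ell}/n_k$, $\zeta_{k\tau_i}/n_\ell$, and $\zeta_{k\tau_i}\zeta_{\tau_j\ell}/n$. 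Since each $c_{ij} = O(n^{-1})$ while the total conditional variance $\sum_{i \le j} c_{ij}^2 \mathbf{B}_{\tau_i\tau_j}(1-\mathbf{B}_{\tau_i\tau_j})$ is $\Theta(1)$, the Lindeberg condition holds and a triangular-array CLT (conditionally on $\bm{\tau}$) delivers asymptotic normality. Evaluating the limiting variance by collecting squared coefficients, using $n_r/n \to \pi_r$ and the symmetry $\bE_{ij} = \bE_{ji}$, should reproduce exactly $\sigma_{kk}^2$ and $\sigma_{k\ell}^2$, whose several summands correspond to the self- and cross-products of the three pieces of $\mathcal{L}$.

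Finally, the bias $\theta_{k\ell}/n$ comes from the conditional mean of the second-order term: using $\E[\bE_{ij}^2] = \mathbf{B}_{\tau_i\tau_j}(1 - \mathbf{B}_{\tau_i\tau_j})$ and the identity $\bU\bm{\Lambda}^{-1}\bU^\top \approx n^{-2}\bX\Delta^{-1}\mathbf{I}_{p,q}\Delta^{-1}\bX^\top$ produces the factor $\nu_k^\top \Delta^{-1}\mathbf{I}_{p,q}\Delta^{-1}\nu_\ell$ and the two sums defining $\theta_{k\ell}$; one also checks that $\mathcal{Q}(\bE)$ fluctuates only at order $o_P(1/n)$ in the relevant quadratic form, so that it contributes to the centering but not to the variance. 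Slutsky's theorem then combines the Gaussian linear term with the deterministic bias, and since the limits depend on $\bm{\tau}$ only through $n_r/n \to \pi_r$, the conditional statement upgrades to the unconditional one. I expect the main obstacle to be exactly this second-order step: the first-order expansion already yields the CLT, but pinning down the correct centering $\theta_{k\ell}$ and proving the remainder is $o_P(1/n)$ after the $n$-scaling requires sharp operator-norm and row-wise (two-to-infinity) control of the spectral perturbation in the dense regime, which is the delicate part of the argument.
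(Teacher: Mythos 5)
Your proposal is correct and is essentially the paper's own argument: your first-order term $\mathcal{L}(\mathbf{E})=\bm{\Pi}_{\mathbf{U}}\mathbf{E}+\mathbf{E}\bm{\Pi}_{\mathbf{U}}-\bm{\Pi}_{\mathbf{U}}\mathbf{E}\bm{\Pi}_{\mathbf{U}}$, sandwiched by $\bm{s}_k,\bm{s}_{\ell}$, is exactly the linear statistic $Z_{k\ell}$ of Eq.~\eqref{eq:Zkl}; your quadratic term built from $\mathbf{P}^{\dagger}=\mathbf{U}\bm{\Lambda}^{-1}\mathbf{U}^{\top}\approx n^{-2}\mathbf{X}\Delta^{-1}\mathbf{I}_{p,q}\Delta^{-1}\mathbf{X}^{\top}$ is the paper's $\bm{\Pi}_{\mathbf{U}}^{\perp}\mathbf{E}^{2}\mathbf{P}^{\dagger}$ bias term of Eq.~\eqref{eq:aux_theta_kl1}; and both proofs conclude with a conditional Lindeberg--Feller CLT for the linear part plus concentration of the quadratic part around its conditional mean, which converges to $\theta_{k\ell}$. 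The only inessential differences are that the paper obtains the second-order expansion by solving the Sylvester equation $\hat{\mathbf{U}}\hat{\bm{\Lambda}}-(\mathbf{A}-\mathbf{P})\hat{\mathbf{U}}=\mathbf{P}\hat{\mathbf{U}}$ and truncating the resulting von Neumann series (Lemma~\ref{lem:von_Neuman_hatU}) rather than via a Kato/contour-integral expansion of the spectral projector, and that it supplies a specific tool (the log-Sobolev concentration inequality of Boucheron et al.) for the quadratic-form concentration step you rightly flag as the delicate one.
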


\begin{theorem}
\label{THM:GEN_D_SPARSE}
Let $\mathbf{A} \sim \mathrm{SBM}(\bm{\pi}, \mathbf{B}, \rho_n)$ be a $K$-block stochastic blockmodel graph on $n$ vertices with sparsity factor $\rho_n$.
Let $\nu_1, \dots, \nu_K$ be point masses in $\mathbb{R}^{d}$ such that $\mathbf{B}_{k \ell} = \nu_k^{\top} \mathbf{I}_{p,q} \nu_{\ell}$ and let $\Delta = \sum_{k} \pi_k \nu_k \nu_k^{\top}$. 
For $k \in [K]$ and $\ell \in [K]$, let $\tilde{\theta}_{k \ell}$ be given by
\begin{equation}
\begin{split}
\label{eq:tilde_theta_kl}
\tilde{\theta}_{k\ell} & = \sum_{r=1}^{K} \pi_r \bigl(\mathbf{B}_{kr} + \mathbf{B}_{\ell r} \bigr)\nu_k^{\top} \Delta^{-1} \mathbf{I}_{p,q} \Delta^{-1} \nu_{\ell} 
 \\ &- \sum_{r=1}^{K} \sum_{s=1}^{K} \pi_r \pi_s \mathbf{B}_{sr} \nu_s^{\top} \Delta^{-1} \mathbf{I}_{p,q} \Delta^{-1} 
(\nu_{\ell} \nu_k^{\top} + \nu_k \nu_{\ell}^{\top}) \Delta^{-1} \nu_s,
\end{split}
\end{equation}
let $\tilde{\sigma}_{kk}^{2}$ for $k \in [K]$ be
\begin{equation}
\begin{split}
\label{eq:tilde_sigma_kk}
\tilde{\sigma}_{kk}^{2} &=
4 \mathbf{B}_{kk} \zeta_{kk}^2 + 4 \sum_{r} \pi_r \mathbf{B}_{kr} \zeta_{kr}^2 \bigl(\tfrac{1}{\pi_k} - 2 \zeta_{kk}\bigr)^{2} 
\\ &+ 2 \sum_{r} \sum_{s} \pi_r \pi_s \mathbf{B}_{rs} \zeta_{kr}^2 \zeta_{ks}^2,
\end{split}
\end{equation}
and let $\tilde{\sigma}_{k\ell}^{2}$ for $k \in [K]$, $\ell \in [K]$, $k \not = \ell$ be
\begin{equation}
 \begin{split}
 \label{eq:tilde_sigma_kl}
 \tilde{\sigma}_{k\ell}^2 &= 
 \bigl(\mathbf{B}_{kk}  + \mathbf{B}_{\ell \ell} \bigr)\zeta_{k \ell}^2 +
2 \mathbf{B}_{k \ell} \zeta_{kk} \zeta_{\ell \ell} 
 - 2 \sum_{r}  \pi_r \bigl(\mathbf{B}_{k r}  + \mathbf{B}_{\ell r} \bigr) \zeta_{kr} \zeta_{r \ell} \zeta_{k \ell} 
\\ &  + 
\sum_{r}  \pi_{r} \mathbf{B}_{k r} \zeta_{\ell r}^2 \bigl(\tfrac{1}{\pi_{k}} - 2 \zeta_{kk} \bigr)^{2} 
+ \sum_{r} \pi_r \mathbf{B}_{\ell r} \zeta_{kr}^{2} \bigl(\tfrac{1}{\pi_{\ell}} - 2 \zeta_{\ell \ell} \bigr)^{2} 
\\ &+ \frac{1}{2} \sum_{r} \sum_{s} \pi_r \pi_s
\mathbf{B}_{rs} (\zeta_{k r} \zeta_{\ell s} + \zeta_{\ell r} \zeta_{k s})^2. 
 \end{split}
 \end{equation}
If $\rho_n \rightarrow 0$ and $n \rho_n = \omega(\sqrt{n})$, then for any $k \in [K]$ and $\ell \in [K]$, 
\begin{equation}
\label{eq:sbm_normal3}
  n \sqrt{\rho_n} \bigl(\hat{\mathbf{B}}^{(S)}_{k\ell} - \mathbf{B}_{k \ell} - \frac{\tilde{\theta}_{kl}}{n \rho_n}\bigr) \overset{\mathrm{d}}{\longrightarrow} N(0, \tilde{\sigma}_{k \ell}^2) 
\end{equation}
as $n \rightarrow \infty$.
\end{theorem}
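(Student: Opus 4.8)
The plan is to follow the same route as the proof of Theorem~\ref{THM:GEN_D}, while tracking the dependence on $\rho_n$ at each step. I would condition throughout on $\bm\tau$ (equivalently on $\bX$, the matrix with rows $\nu_{\tau_i}$), so that $n_k/n\to\pi_k$ and $\tfrac1n\bX^\top\bX\to\Delta$ almost surely and the only remaining randomness is the family $\{\bE_{ij}:=\bA_{ij}-\rho_n\bB_{\tau_i\tau_j}\}_{i\le j}$ of independent, mean-zero, bounded variables with $\mathrm{Var}(\bE_{ij})=\rho_n\bB_{\tau_i\tau_j}(1-\rho_n\bB_{\tau_i\tau_j})$. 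Writing $\bP:=\E[\bA\mid\bm\tau]=\rho_n\bX\mathbf I_{p,q}\bX^\top$ and $\hat\bP:=\hU\hat{\bLam}\hU^\top$, the exact identity $\bm s_k^\top\bP\bm s_\ell=\rho_n n_k n_\ell\bB_{k\ell}$ reduces everything to a bilinear form in the spectral perturbation,
\begin{equation*}
\hat{\bB}^{(S)}_{k\ell}-\bB_{k\ell}=\frac{1}{\rho_n n_k n_\ell}\,\bm s_k^\top\bigl(\hat\bP-\bP\bigr)\bm s_\ell .
\end{equation*}
Working directly with $\hat\bP$ rather than with estimated latent positions is convenient: the indefinite-orthogonal alignment that complicates row-wise embedding CLTs is absent here, since $\bm s_k^\top\hat\bP\bm s_\ell$ is manifestly invariant under it.

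The second step is to expand $\hat\bP-\bP$ to second order in $\bE$. With $\bm\Pi:=\bU\bU^\top$ the orthogonal projection onto the rank-$d$ column space of $\bP$, $\bm\Pi^\perp:=\mathbf I-\bm\Pi$, and $\bP^\dagger$ the pseudoinverse, the identity $\hat\bP=\hat{\bm\Pi}\bA\hat{\bm\Pi}$ together with the perturbation series for the spectral projector gives
\begin{equation*}
\hat\bP-\bP=\bm\Pi\bE+\bE\bm\Pi-\bm\Pi\bE\bm\Pi+\mathcal Q+R,
\end{equation*}
where $\mathcal Q$ is a sum of quadratic terms each of the form $a\,\bE\,\bP^\dagger\bE\,b$ with $a,b\in\{\bm\Pi,\bm\Pi^\perp\}$, and $R$ collects the third- and higher-order contributions, with $\|R\|\lesssim\|\bE\|^3/\lambda_{\min}(\bP)^2$; here $\|\bE\|=O_P(\sqrt{n\rho_n})$ (concentration) and $\lambda_{\min}(\bP)\asymp n\rho_n$ govern all magnitudes. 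The structural fact active precisely when $\bB$ is singular ($d<K$) is that $\bm s_\ell$ need not lie in the range of $\bP$: one computes $\bm\Pi\bm s_\ell\approx\tfrac{n_\ell}{n}\bX\Delta^{-1}\nu_\ell$, whose $i$-th entry is $\tfrac{n_\ell}{n}\zeta_{\tau_i\ell}$, and $\bP^\dagger\approx\tfrac{1}{\rho_n n^2}\bX\Delta^{-1}\mathbf I_{p,q}\Delta^{-1}\bX^\top$. These are exactly the objects that generate the $\zeta$'s and the factor $\Delta^{-1}\mathbf I_{p,q}\Delta^{-1}$ appearing in $\tilde\theta_{k\ell}$.

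Third, the linear part, inserted into the bilinear form, gives $L:=\tfrac{1}{\rho_n n_k n_\ell}\bm s_k^\top(\bm\Pi\bE+\bE\bm\Pi-\bm\Pi\bE\bm\Pi)\bm s_\ell=\sum_{i\le j}c_{ij}\bE_{ij}$, a weighted sum of the independent variables with coefficients $c_{ij}$ built from the entries of $\bm s_k,\bm s_\ell,\bm\Pi\bm s_k,\bm\Pi\bm s_\ell$. Multiplying by $n\sqrt{\rho_n}$ and applying a Lyapunov/Lindeberg CLT (the summands are uniformly bounded and no single coordinate dominates) yields asymptotic normality, with limiting variance read off from
\begin{equation*}
\mathrm{Var}\bigl(n\sqrt{\rho_n}\,L\bigr)=n^2\rho_n\sum_{i\le j}c_{ij}^2\,\mathrm{Var}(\bE_{ij}).
\end{equation*}
Substituting $\mathrm{Var}(\bE_{ij})=\rho_n\bB_{\tau_i\tau_j}(1-\rho_n\bB_{\tau_i\tau_j})$, passing to the limits $n_k/n\to\pi_k$, and using $\rho_n\to0$ to simplify the moment structure reduces the (lengthy but mechanical) bookkeeping to $\tilde\sigma^2_{k\ell}$; this is the sparse analogue of the corresponding computation in Theorem~\ref{THM:GEN_D}, and the $\rho_n\to0$ asymptotics are what produce Eqs.~\eqref{eq:tilde_sigma_kk} and~\eqref{eq:tilde_sigma_kl} rather than their dense counterparts.

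Finally, $\mathcal Q$ supplies the deterministic bias: taking the conditional expectation and using that $\E[\bE_{ij}\bE_{i'j'}]$ vanishes unless $\{i,j\}=\{i',j'\}$ turns $\bE\bP^\dagger\bE$ into a weighted trace that, with the approximations above, equals $\tilde\theta_{k\ell}/(n\rho_n)$ to leading order, while a second-moment bound shows $\mathcal Q$ concentrates on this mean with fluctuations that are $o_P\bigl(1/(n\sqrt{\rho_n})\bigr)$. The main obstacle, and the only place the hypothesis $n\rho_n=\omega(\sqrt n)$ is genuinely needed, is the remainder: the operator-norm estimate gives
\begin{equation*}
\frac{1}{\rho_n n_k n_\ell}\bigl|\bm s_k^\top R\,\bm s_\ell\bigr|\lesssim\frac{\|\bm s_k\|\,\|\bm s_\ell\|}{\rho_n n_k n_\ell}\cdot\frac{\|\bE\|^3}{\lambda_{\min}(\bP)^2}\asymp(n\rho_n)^{-3/2},
\end{equation*}
so after scaling by $n\sqrt{\rho_n}$ the remainder is of order $\sqrt n/(n\rho_n)$, which is $o_P(1)$ exactly when $n\rho_n=\omega(\sqrt n)$. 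The same computation makes plausible the necessity of this rate claimed in the introduction, since for $n\rho_n=O(\sqrt n)$ the neglected terms no longer vanish at the scale of the fluctuations. Because the conditional limit depends on $\bm\tau$ only through the deterministic quantities $\bB,\bm\pi,\zeta$, integrating out $\bm\tau$ yields the stated unconditional convergence.
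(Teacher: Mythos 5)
Your overall architecture coincides with the paper's: reduce everything to the bilinear form $\hat{\bB}^{(S)}_{k\ell}-\bB_{k\ell}=\tfrac{1}{\rho_n n_k n_\ell}\bm{s}_k^{\top}(\hat\bP-\bP)\bm{s}_\ell$ with $\hat\bP=\hU\hat{\bLam}\hU^{\top}$, split into a term linear in $\bE$, a quadratic term supplying the bias, and a remainder. Your linear part $\bm{\Pi}_{\bU}\bE+\bE\bm{\Pi}_{\bU}-\bm{\Pi}_{\bU}\bE\bm{\Pi}_{\bU}$ is exactly the paper's $Z_{k\ell}$ of Eq.~\eqref{eq:Zkl}, and your remainder accounting --- after scaling by $n\sqrt{\rho_n}$ it is of order $\sqrt{n}/(n\rho_n)$, hence $o_{\mathbb{P}}(1)$ precisely when $n\rho_n=\omega(\sqrt{n})$ --- matches the paper's residual $O_{\mathbb{P}}(n^{-1/2}\rho_n^{-1})$ and correctly locates where that hypothesis is used.

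The genuine gap is your identification of the quadratic term $\mathcal{Q}$, and it is not cosmetic. You claim $\mathcal{Q}$ is a sum of terms $a\,\bE\,\bP^{\dagger}\bE\,b$ with $a,b\in\{\bm{\Pi}_{\bU},\bm{\Pi}_{\bU}^{\perp}\}$ and that its conditional expectation gives $\tilde{\theta}_{k\ell}/(n\rho_n)$. Carrying out the second-order perturbation of $\hat\bP=\hat{\bm{\Pi}}\bA\hat{\bm{\Pi}}$ (equivalently, the paper's von Neumann expansion in Lemma~\ref{lem:von_Neuman_hatU}) gives instead
\begin{equation*}
\mathcal{Q}=\bm{\Pi}_{\bU}^{\perp}\bE\bm{\Pi}_{\bU}^{\perp}\bE\bP^{\dagger}+\bP^{\dagger}\bE\bm{\Pi}_{\bU}^{\perp}\bE\bm{\Pi}_{\bU}^{\perp}+\bm{\Pi}_{\bU}^{\perp}\bE\bP^{\dagger}\bE\bm{\Pi}_{\bU}^{\perp},
\end{equation*}
i.e., in the dominant terms the pseudoinverse sits on the \emph{outside} and the two copies of $\bE$ are separated only by $\bm{\Pi}_{\bU}^{\perp}\approx\mathbf{I}$; this is what the paper writes as $\bm{\Pi}_{\bU}^{\perp}\bE^{2}\bP^{\dagger}$ in Eq.~\eqref{eq:xi_kl1}. (Your form looks like part of the expansion of the projector $\hat{\bm{\Pi}}$ itself; multiplying through by $\bA=\bP+\bE$ converts it into the $\bE^{2}\bP^{\dagger}$ structure.) The only term of your form that survives, $\bm{\Pi}_{\bU}^{\perp}\bE\bP^{\dagger}\bE\bm{\Pi}_{\bU}^{\perp}$, is negligible. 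The distinction decides the bias: $\E[\bE^{2}\mid\bm{\tau}]$ is diagonal with entries $\sum_j \bP_{ij}(1-\bP_{ij})\asymp n\rho_n$, so $\bm{s}_k^{\top}\bm{\Pi}_{\bU}^{\perp}\E[\bE^{2}]\bP^{\dagger}\bm{s}_\ell$, after the prefactor and the $n\sqrt{\rho_n}$ scaling, produces the divergent centering $\tilde{\theta}_{k\ell}/\sqrt{\rho_n}$; whereas $\E[\bE\bP^{\dagger}\bE\mid\bm{\tau}]$ involves the diagonal entries of $\bP^{\dagger}$, which are $O(1/(\rho_n n^{2}))$, so every term of your claimed $\mathcal{Q}$ has conditional expectation $O(1/(\rho_n n^{2}))$ --- a factor $n$ too small --- and vanishes after scaling. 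From your expansion one could only conclude that $n\sqrt{\rho_n}(\hat{\bB}^{(S)}_{k\ell}-\bB_{k\ell})$ is asymptotically normal with \emph{no} centering, contradicting the theorem, since the centering $\tilde{\theta}_{k\ell}/(n\rho_n)$ is not removable (it blows up at rate $\rho_n^{-1/2}$ on the fluctuation scale). So the step ``taking the conditional expectation \dots equals $\tilde{\theta}_{k\ell}/(n\rho_n)$ to leading order'' cannot be carried out from your $\mathcal{Q}$; the expansion must first be corrected to the $\bE^{2}\bP^{\dagger}$ form, after which the paper's computation of $\E[\bm{s}_k^{\top}\bm{\Pi}_{\bU}^{\perp}\bE^{2}\bP^{\dagger}\bm{s}_\ell]$ and the log-Sobolev concentration step go through.
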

The proofs of Theorem~\ref{THM:GEN_D} and Theorem~\ref{THM:GEN_D_SPARSE} are given in the appendix. As a corollary of Theorem~\ref{THM:GEN_D} and Theorem~\ref{THM:GEN_D_SPARSE}, we have the following result for the asymptotic efficiency of $\hat{\mathbf{B}}^{(S)}$ whenever $\mathbf{B}$ is invertible (see also Theorem~\ref{thm:bickel_choi}).
\begin{corollary}
\label{cor:full-rank}
Let $\mathbf{A} \sim \mathrm{SBM}(\bm{\pi}, \mathbf{B}, \rho_n)$ be a $K$-block stochastic blockmodel graph 
 on $n$ vertices with sparsity factor $\rho_n$. Suppose $\mathbf{B}$ is invertible. Then for all $k \in [K], \ell \in [K]$, $\sigma_{k \ell}$ and $\tilde{\sigma}_{k \ell}$ as defined in Theorem~\ref{THM:GEN_D} and Theorem~\ref{THM:GEN_D_SPARSE} satisfy
 \begin{gather}
 \theta_{k \ell} = 0; \quad \tilde{\theta}_{k \ell} = 0 \\
 \sigma_{kk}^2 = \frac{2 \mathbf{B}_{kk} (1 - \mathbf{B}_{kk})}{\pi_k^2}; \quad
 \sigma_{k\ell}^2 = \frac{\mathbf{B}_{k \ell} (1 - \mathbf{B}_{k\ell})}{\pi_k \pi_{\ell}} \,\, \text{if $k \not = \ell$} \\
 \tilde{\sigma}_{kk} = \frac{2 \mathbf{B}_{kk}}{\pi_k^2}; \quad \tilde{\sigma}_{k \ell} = \frac{2 \mathbf{B}_{k \ell}}{\pi_k \pi_{\ell}} \,\, \text{if $k \not = \ell$}.
 \end{gather}
 Therefore, for all $k \in [K], \ell \in [K]$, if $\rho_n \equiv 1$, then
 \begin{equation}
  n (\hat{\mathbf{B}}^{(S)}_{k\ell} - \mathbf{B}_{k\ell}) \overset{\mathrm{d}}{\longrightarrow} N(0, \sigma_{k\ell}^2).
 \end{equation}
 as $n \rightarrow \infty$. If $\rho_n \rightarrow 0$ and $n \rho_n = \omega(\sqrt{n})$, then
  \begin{equation}
  n \rho_n^{1/2}(\hat{\mathbf{B}}^{(S)}_{k\ell} - \mathbf{B}_{k\ell}) \overset{\mathrm{d}}{\longrightarrow} N(0, \tilde{\sigma}_{k\ell}^2).
 \end{equation}
as $n \rightarrow \infty$. $\hat{\mathbf{B}}^{(S)}_{k \ell}$ is therefore {\em asymptotically efficient} for all $k, \ell$.
\end{corollary}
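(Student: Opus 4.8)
The plan is to treat this entirely as an algebraic simplification of the formulas in Theorem~\ref{THM:GEN_D} and Theorem~\ref{THM:GEN_D_SPARSE}: no new probabilistic input is required beyond those two central limit theorems, so the corollary follows once the defining quantities $\theta_{k\ell}, \sigma_{k\ell}^2$ (and their sparse analogues) are evaluated under the invertibility hypothesis. The engine of the whole computation is a single linear-algebraic identity that holds precisely when $\mathbf{B}$ is nonsingular.

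First I would record that invertibility of $\mathbf{B}$ forces $d = \mathrm{rk}(\mathbf{B}) = K$, so the $K \times K$ matrix $\mathbf{N} := [\nu_1 \mid \cdots \mid \nu_K]^\top$ whose rows are the latent positions is itself nonsingular (it satisfies $\mathbf{N}\mathbf{I}_{p,q}\mathbf{N}^\top = \mathbf{B}$, which has full rank). Writing $\mathbf{D} = \mathrm{diag}(\pi_1,\dots,\pi_K)$, one has $\Delta = \mathbf{N}^\top \mathbf{D}\, \mathbf{N}$, hence $\Delta^{-1} = \mathbf{N}^{-1}\mathbf{D}^{-1}\mathbf{N}^{-\top}$. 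Substituting this into $\zeta_{k\ell} = \nu_k^\top \Delta^{-1}\nu_\ell = \mathbf{e}_k^\top \mathbf{N}\Delta^{-1}\mathbf{N}^\top \mathbf{e}_\ell$ telescopes to $\mathbf{e}_k^\top \mathbf{D}^{-1}\mathbf{e}_\ell$, giving the key simplification
\begin{equation}
\zeta_{k\ell} = \frac{\delta_{k\ell}}{\pi_k},
\end{equation}
where $\delta_{k\ell}$ is the Kronecker delta; that is, $\zeta_{kk} = 1/\pi_k$ and $\zeta_{k\ell} = 0$ for $k \neq \ell$. The same manipulation, together with $\mathbf{N}^{-\top}\mathbf{I}_{p,q}\mathbf{N}^{-1} = \mathbf{B}^{-1}$ (which follows from $\mathbf{B} = \mathbf{N}\mathbf{I}_{p,q}\mathbf{N}^\top$ and $\mathbf{I}_{p,q}^{-1} = \mathbf{I}_{p,q}$), yields the companion identity $\nu_k^\top \Delta^{-1}\mathbf{I}_{p,q}\Delta^{-1}\nu_\ell = (\mathbf{B}^{-1})_{k\ell}/(\pi_k\pi_\ell)$.

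Next I would feed these two identities into the definitions \eqref{eq:mu_kl}--\eqref{eq:sigma_kl2} and \eqref{eq:tilde_theta_kl}--\eqref{eq:tilde_sigma_kl}. For the bias terms $\theta_{k\ell}$ and $\tilde\theta_{k\ell}$, the diagonal structure of $\zeta$ collapses the double sum: the companion identity turns the first sum into $(\mathbf{B}^{-1})_{k\ell}/(\pi_k\pi_\ell)$ times $\sum_r \pi_r(\mathbf{B}_{kr}(1-\mathbf{B}_{kr}) + \mathbf{B}_{\ell r}(1-\mathbf{B}_{\ell r}))$, while the Kronecker deltas in $\nu_k^\top\Delta^{-1}\nu_s = \zeta_{ks}$ force $s \in \{k,\ell\}$ in the double sum, which after using symmetry of $\mathbf{B}^{-1}$ reproduces exactly this same quantity with the opposite sign, so $\theta_{k\ell} = \tilde\theta_{k\ell} = 0$. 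For the variances, the vanishing of the off-diagonal $\zeta_{k\ell}$ annihilates every cross term and reduces each surviving sum to a single summand; for $\sigma_{kk}^2$ the three contributions combine as $(4 - 4 + 2)\mathbf{B}_{kk}(1-\mathbf{B}_{kk})/\pi_k^2$, and for $\sigma_{k\ell}^2$ the surviving terms combine as $(2 - 1 - 1 + 1)\mathbf{B}_{k\ell}(1-\mathbf{B}_{k\ell})/(\pi_k\pi_\ell)$, matching the claimed values; the sparse variances reduce identically with each factor $\mathbf{B}_{rs}(1-\mathbf{B}_{rs})$ replaced throughout by $\mathbf{B}_{rs}$. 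With the biases equal to zero, the displayed convergences follow by specializing \eqref{eq:sbm_normal2} and \eqref{eq:sbm_normal3}, and efficiency follows because the resulting variances coincide with those of $\hat{\mathbf{B}}^{(N)}$ in Theorem~\ref{thm:bickel_choi}, which attains the information bound in the full-rank regime.

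The one genuinely delicate point---and the place where invertibility of $\mathbf{B}$ is indispensable---is the identity $\zeta_{k\ell} = \delta_{k\ell}/\pi_k$. It hinges on $\mathbf{N}$ being square and nonsingular; when $\mathbf{B}$ is singular we have $d < K$, $\mathbf{N}$ is $K \times d$ and cannot be inverted, $\zeta$ is no longer diagonal, and precisely the cross terms that cancel above instead survive to produce the nonzero bias and the altered variances seen in the singular case. Everything else in the corollary is bookkeeping once this identity is in hand.
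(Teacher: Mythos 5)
Your proof is correct and takes essentially the same route as the paper: both rest on the single observation that invertibility of $\mathbf{B}$ makes the latent-position matrix $\bm{\nu}$ square and nonsingular, so that $\Delta^{-1} = \bm{\nu}^{-1}\mathbf{D}^{-1}(\bm{\nu}^{-1})^{\top}$ and hence $\zeta_{rs} = \tfrac{1}{\pi_r}\mathbbm{1}\{r=s\}$, after which the bias and variance formulas of Theorem~\ref{THM:GEN_D} and Theorem~\ref{THM:GEN_D_SPARSE} collapse by direct substitution. Your companion identity $\nu_k^{\top}\Delta^{-1}\mathbf{I}_{p,q}\Delta^{-1}\nu_{\ell} = (\mathbf{B}^{-1})_{k\ell}/(\pi_k\pi_{\ell})$ and the explicit coefficient bookkeeping $(4-4+2)$ and $(2-1-1+1)$ are simply a more detailed write-up of the same cancellation the paper records in one example.
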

\begin{proof}[Proof of Corollary~\ref{cor:full-rank}]
The proof follows trivially from the observation that $\zeta_{r s} = \tfrac{1}{\pi_r}$ for $r = s$ and $\zeta_{rs} = 0$ otherwise. 
Indeed, $\Delta = \sum_{k} \pi_k \nu_k \nu_k^{\top} = \bm{\nu}^{\top} \mathbf{D} \bm{\nu}$ 
where $\bm{\nu}$ is a $K \times K$ matrix with $\bm{\nu} \mathbf{I}_{p,q} \bm{\nu}^{\top} = \mathbf{B}$ and $\mathbf{D} = \mathrm{diag}(\bm{\pi})$. Hence
\begin{equation} 
\label{key:reduction}
\zeta_{rs} = \nu_r^{\top} \Delta^{-1} \nu_{s} = \nu_r^{\top} \bm{\nu}^{-1} \mathbf{D}^{-1} (\bm{\nu}^{-1})^{\top} \nu_s = \frac{1}{\pi_r} \mathbbm{1}\{r = s\}.
\end{equation}
As an example, the expression for $\theta_{k \ell}$ in Eq.~\eqref{eq:mu_kl} reduces to
\begin{equation*}
\begin{split} \theta_{k\ell} &= \sum_{r=1}^{K} \pi_r \bigl(\mathbf{B}_{kr} (1 - \mathbf{B}_{kr}) + \mathbf{B}_{\ell r}(1 - \mathbf{B}_{\ell r})\bigr)\nu_k^{\top} \Delta^{-1} \mathbf{I}_{p,q} \Delta^{-1} \nu_{\ell}  \\ &- \sum_{r=1}^{K} \sum_{s=1}^{K} \pi_r \pi_s \mathbf{B}_{sr} (1 - \mathbf{B}_{sr}) \nu_s^{\top} \Delta^{-1} \mathbf{I}_{p,q} \Delta^{-1}
 \Bigl(\tfrac{\mathbbm{1}\{s = k\}}{\pi_k} \nu_{\ell} + \tfrac{\mathbbm{1}\{s = \ell\}}{\pi_{\ell}} \nu_{k} \Bigr) 
\\ &= 0
\end{split}
\end{equation*}
for all $k \in [K], \ell \in [K]$. The expression for $\sigma_{kk}^{2},\tilde{\sigma}_{kk}^2$, $\sigma_{k \ell}^2$ and $\tilde{\sigma}_{k\ell}^2$ also follows directly from Eq.~\eqref{key:reduction}. 
\end{proof}

\begin{remark}
As a special case of Theorem~\ref{THM:GEN_D}, we consider the two blocks stochastic blockmodel with block probability matrix $\mathbf{B} = \Bigl[\begin{smallmatrix} p^2 & pq \\ pq & q^2 \end{smallmatrix} \Bigr]$ and block assignment probabilities $\bm{\pi} = (\pi_p, \pi_q), \pi_p + \pi_q = 1.$ Then 
$\Delta = \pi_p p^2 + \pi_q q^2$ and Eq.~\eqref{eq:mu_kl} reduces to 
\begin{gather*}
\theta_{11} = \tfrac{2 \pi_q p^2 q^2}{\Delta^3} \bigl(\pi_p p^2 (1 - p^2) + (\pi_q - \pi_p) p q (1 - pq) -  \pi_q q^2 (1 - q^2) \bigr), \\
\theta_{12} = \tfrac{pq}{\Delta^3} \bigl(\pi_p p^2 ( 1- p^2) (\pi_q q^2 - \pi_p p^2) + (\pi_p - \pi_q) pq (1 - pq) (\pi_p q^2 - \pi_q p^2) \\ + \pi_q q^2 ( 1- q^2) (\pi_p p^2 - \pi_q q^2) \bigr),	 \\
\theta_{22} = \tfrac{2 \pi_p p^2 q^2}{\Delta^3} \bigl(\pi_q q^2 (1 - q^2) + (\pi_p - \pi_q) p q (1 - pq) - \pi_p p^2 (1 - p^2) \bigr). 
\end{gather*}
Meanwhile, we also have
\begin{gather*}
\sigma^{2}_{11} = \tfrac{8 p^6 (1 - p^2)}{\Delta^2} \bigl(1 - \tfrac{\pi_p p^2}{2 \Delta}\bigr)^2 + \tfrac{4 \pi_q p^3 q^3 (1 - pq)}{\pi_p \Delta^2} \bigl(1 - \tfrac{\pi_p p^2}{\Delta}\bigr)^2 + \tfrac{2 \pi_q^2 p^4 q^6 (1 - q^2)}{\Delta^4} \\
\sigma^{2}_{12} = \tfrac{2 \pi_q^2 p^4 q^6 (1 - p^2)}{\Delta^4} + \tfrac{\pi_p \pi_q pq (1 - pq)}{\Delta^4} \bigl(\tfrac{\pi_q q^4}{\pi_p} + \tfrac{\pi_p p^4}{\pi_q}\bigr)^2 + \tfrac{2 \pi_p^2 p^6 q^4 (1 - q^2)}{\Delta^4} \\
\sigma^2_{22} = \tfrac{8 q^6 (1 - q^2)}{\Delta^2} \bigl(1 - \tfrac{\pi_q q^2}{2 \Delta}\bigr)^2 + \tfrac{4 \pi_p p^3 q^3 (1 - pq)}{\pi_q \Delta^2} \bigl(1 - \tfrac{\pi_q q^2}{\Delta}\bigr)^2 + \tfrac{2 \pi_p^2 q^4 p^6 (1 - p^2)}{\Delta^4}.
\end{gather*}
The naive (MLE) estimator $\hat{\mathbf{B}}^{(N)}$ has asymptotic variances 
\begin{gather*}
\mathrm{Var}[\hat{\mathbf{B}}^{(N)}_{11}] = \tfrac{2 p^2 (1 - p^2)}{\pi_p^2}; \quad \mathrm{Var}[\hat{\mathbf{B}}^{(N)}_{12}] = \tfrac{pq (1 - pq)}{\pi_p \pi_q}; \quad \mathrm{Var}[\hat{\mathbf{B}}^{(N)}_{22}] = \tfrac{2 q^2 (1 - q^2)}{\pi_q^2}.
\end{gather*}
We now evaluate the asymptotic variances for the true (rank-constrained) MLE. Suppose for simplicity that $n \pi_p$ vertices are assigned to block $1$ and $n \pi_q$ vertices are assigned to block $2$. Let $n_{11} = \tbinom{(n \pi_p + 1)}{2}$, $n_{12} = n^2 \pi_p \pi_q$ and $n_{22} = \tbinom{(n \pi_p + 1)}{2}$. 
Let $\mathbf{A}$ be given and assume for the moment that $\bm{\tau}$ is observed. Then the log-likelihood for $\mathbf{A}$ is equivalent to the log-likelihood for observing $m_{11} \sim \mathrm{Bin}(n_{11}, p^2)$,  $m_{12} \sim \mathrm{Bin}(n_{12}, pq)$ and $m_{22} \sim \mathrm{Bin}(n_{22}, q^2)$ with $m_{11}$, $m_{12}$, and $m_{22}$ mutually independent. More specifically, ignoring terms of the form $\tbinom{n_{ij}}{m_{ij}}$, we have
\begin{equation*}
\begin{split}
 \ell(\mathbf{A} \mid p,q) &= m_{11} \log p^2 + (n_{11} - m_{11}) \log (1 - p^2) + m_{12} \log pq
 \\ &+ (n_{12} - m_{12}) \log (1 - pq) + m_{22} \log q^2 + (n_{22} - m_{22}) \log (1 - q^2) 
 \end{split}
 \end{equation*}
 We therefore have
 \begin{equation*}
 \begin{split}
 \mathrm{Var}\Bigl(\tfrac{\partial \ell}{\partial p}\Bigr) &= \mathrm{Var}\Bigl(\tfrac{2m_{11}p}{p^2} - \tfrac{2(n_{11} - m_{11})p}{1 - p^2} + 
\tfrac{m_{12}q}{pq} - \tfrac{(n_{12} - m_{12})q}{1 - pq}\Bigr)
= \tfrac{4n_{11}}{1 - p^2} + \tfrac{n_{12} pq}{p^2(1 - pq)}.
 \end{split}
 \end{equation*}
 Similarly,
 \begin{gather*}
 \mathrm{Var}\Bigl(\tfrac{\partial \ell}{\partial q}\Bigr) = \tfrac{n_{12}pq}{q^2(1-pq)} + \tfrac{4n_{22}}{1 - q^2}; \quad
 \mathrm{Cov}\Bigl(\tfrac{\partial \ell}{\partial p}, \tfrac{\partial \ell}{\partial q} \Bigr) = \tfrac{n_{12}}{1 - pq}.
 \end{gather*}
 Next we note that $n_{11}/n^2 \rightarrow \pi_p^2/2$, 
$n_{12}/n^2 \rightarrow \pi_p \pi_q$ and 
 $n_{22}/n^2 \rightarrow \pi_q^2/2$. We therefore have
 $$\frac{1}{n^2} \mathrm{Var}\Bigl[\Bigl(\tfrac{\partial \ell}{\partial p}, \tfrac{\partial \ell}{\partial p}\Bigr) \Bigr] \overset{\mathrm{a.s.}}{\longrightarrow} \mathcal{I} := \begin{bmatrix} \frac{2 \pi_p^2}{1 - p^2} + \frac{\pi_p \pi_q q}{p(1 - pq)} & \frac{\pi_p \pi_q}{1 - pq} \\
 \frac{\pi_p \pi_q}{1 - pq} & \frac{2 \pi_q^2}{1 - q^2} + \frac{\pi_p \pi_q p}{q(1 - pq)} \end{bmatrix} $$
 as $n \rightarrow \infty$. Let $(\hat{p}, \hat{q})$ be the MLE of $(p,q)$ (we emphasize that there is no close form formula for the MLE $\hat{p}$ and $\hat{q}$ in this setting) and let $\mathcal{J}$ be the Jacobian of $\mathbf{B}$ with respect to $p$ and $q$, i.e.,
$ \mathcal{J}^{\top} = \Bigl[\begin{smallmatrix} 2p & q & 0 \\ 0 & p & 2q \end{smallmatrix}\Bigr].$
 The classical theory of maximum likelihood estimation (see e.g., Theorem~5.1 of \cite{point_estimation}) implies that
 $$ n\bigl( \mathrm{vech}(\hat{\mathbf{B}}^{(M)} - \mathbf{B}) \bigr) \overset{\mathrm{d}}{\longrightarrow} \mathrm{MVN}\bigl(\bm{0}, \mathcal{J} \mathcal{I}^{-1} \mathcal{J}^{\top} \bigr)$$
 as $n \rightarrow \infty$, i.e., $\hat{\mathbf{B}}^{(M)}$ is asymptotically (first-order) efficient.

\begin{figure}[tp!]
\center
\subfloat[$\mathrm{MSE}(\hat{\bB}^{(S)})/\mathrm{MSE}(\hat{\bB}^{(N)})$]{
\includegraphics[width=0.48\textwidth]{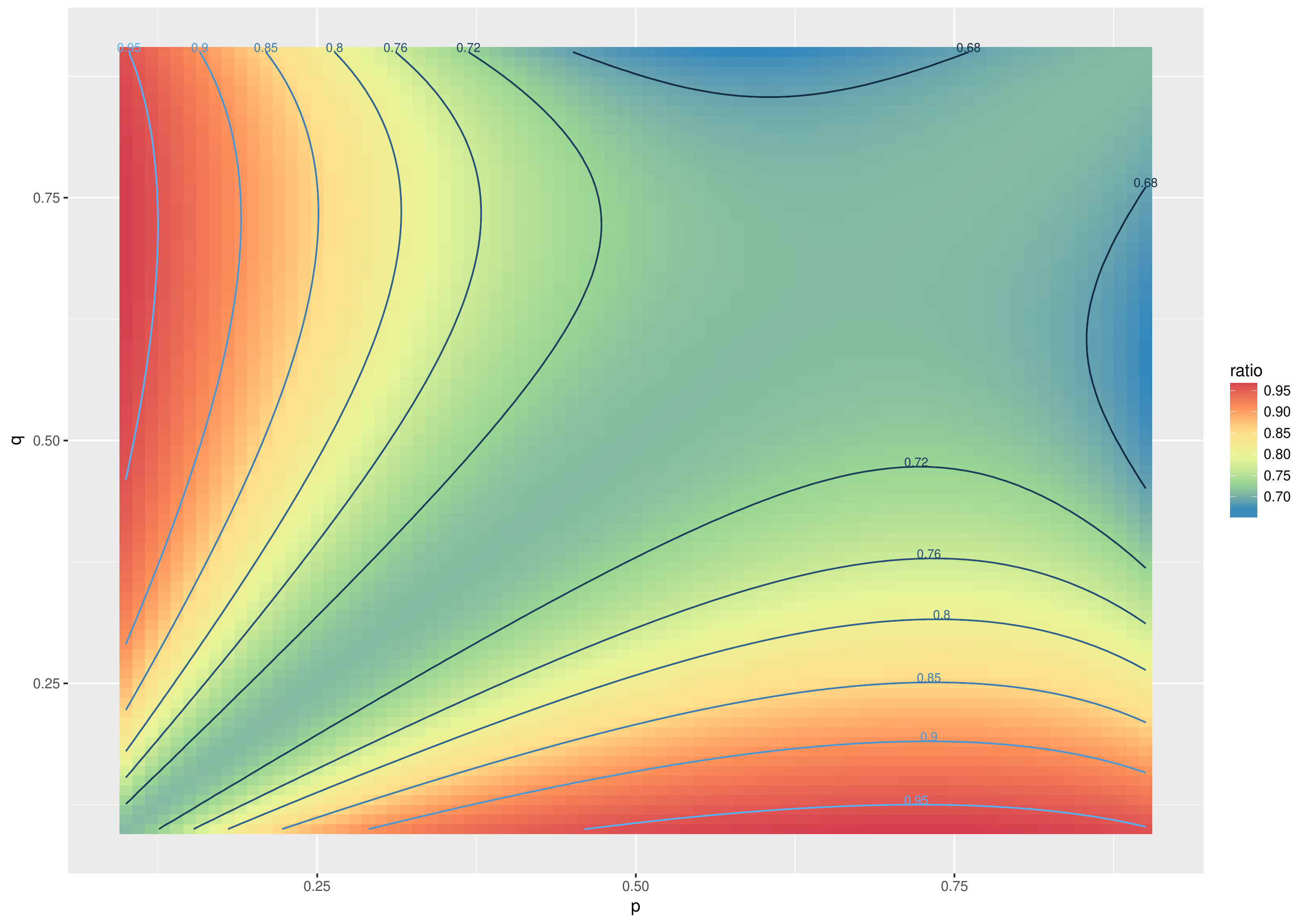}} 
\subfloat[$\mathrm{MSE}(\hat{\bB}^{(S)})/\mathrm{MSE}(\hat{\bB}^{(M)})$]{
\includegraphics[width=0.48\textwidth]{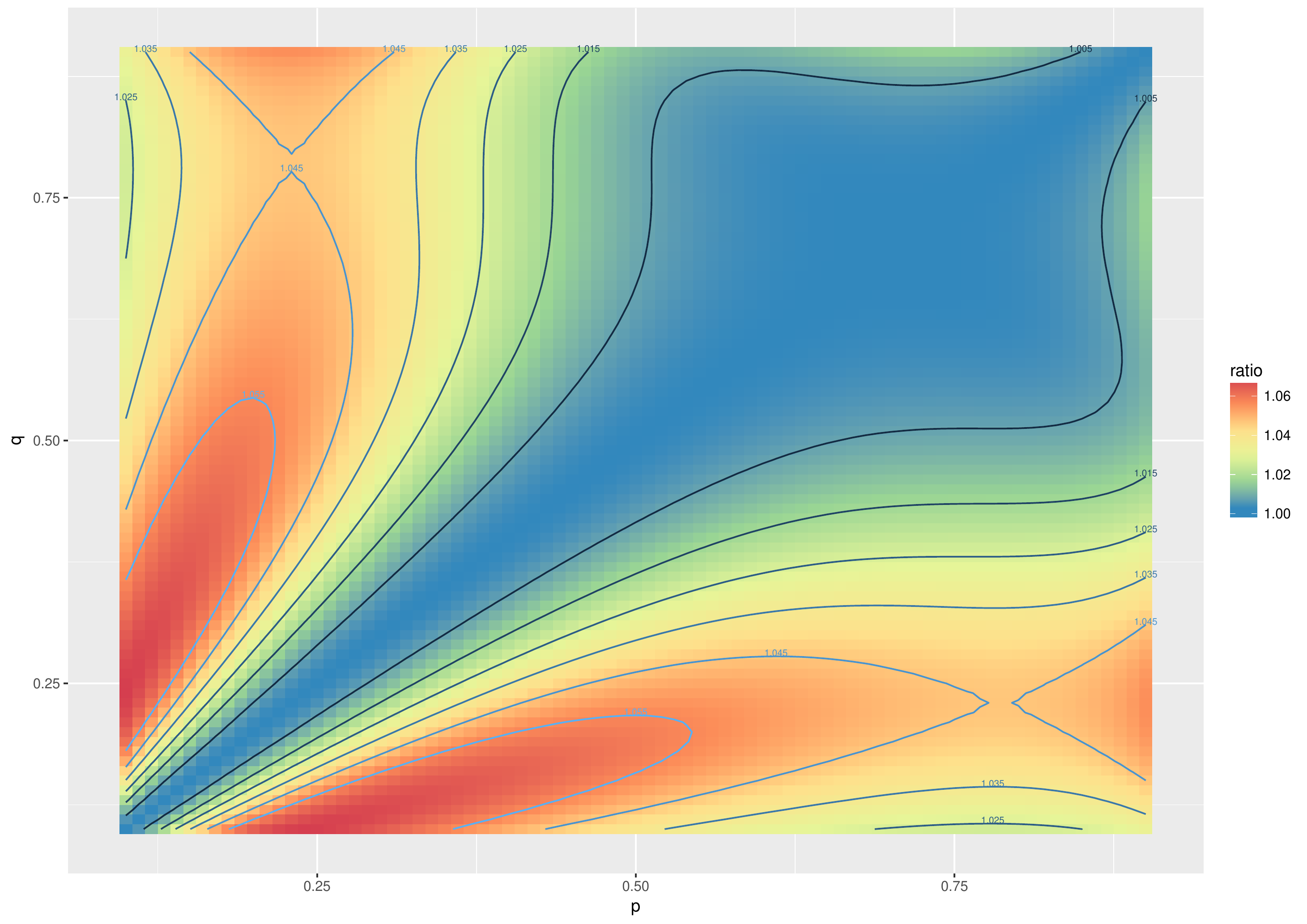}} 
\caption{The ratios $\mathrm{MSE}(\hat{\bB}^{(S)})/\mathrm{MSE}(\hat{\bB}^{(N)})$
and $\mathrm{MSE}(\hat{\bB}^{(S)})/\mathrm{MSE}(\hat{\bB}^{(M)})$ for values of 
$p \in [0.1, 0.9]$ and $q \in [0.1, 0.9]$ in a $2$-blocks rank $1$ SBM.
The labeled lines in each plot are the contour lines for the $\mathrm{MSE}$ ratios. 
The $\mathrm{MSE}(\hat{\bB}^{(S)})$ is computed using the bias-adjusted estimates $\{\hat{\mathbf{B}}^{(S)}_{ij} - \hat{\theta}_{ij}\}_{i \leq j}$ (see Corollary~\ref{cor:in_practice}).
}
\label{fig:ratio-plot}
\end{figure}
We now compare the naive estimator $\hat{\mathbf{B}}^{(N)}$, the ASE estimator $\hat{\mathbf{B}}^{(S)}$ and the true MLE $\hat{\mathbf{B}}^{(M)}$ for the case when $\pi_p  = 0.5$ and $\rho_n \equiv 1$.
Plots of the (ratio) of the MSE (equivalently the sum of variances $\sigma_{11}^2 + \sigma_{12}^2 + \sigma_{22}^2$) 
for $\hat{\mathbf{B}}^{(S)}$ (adjusted for the bias terms $\theta_{k \ell}$; see Corollary~\ref{cor:in_practice}) against the MSE (equivalently the sum of variances) of $\hat{\mathbf{B}}^{(N)}$ and the MSE of $\hat{\mathbf{B}}^{(M)}$ for
$p \in [0.1,0.9]$ and $q \in [0.1,0.9]$ are given in Figure~\ref{fig:ratio-plot}. For this example,  $\hat{\mathbf{B}}^{(S)}$ have smaller mean squared error than $\hat{\mathbf{B}}^{(N)}$ over the whole range of $p$ and $q$. In addition $\hat{\mathbf{B}}^{(S)}$ has mean squared error almost as small as that of $\hat{\mathbf{B}}^{(M)}$ for a large range of $p$ and $q$.
\end{remark}

\begin{remark}
We next compare the three estimators $\hat{\mathbf{B}}^{(S)}$, $\hat{\mathbf{B}}^{(M)}$ and $\hat{\mathbf{B}}^{(N)}$ in the setting of stochastic blockmodels with $3 \times 3$ block probability matrix $\mathbf{B}$ where $\mathbf{B}$ is positive semidefinite and $\mathrm{rk}(\mathbf{B}) = 2$. The minimal parametrization of $\mathbf{B}$ requires $5$ parameters $(r_1,r_2,r_3,\theta,\gamma)$, namely
\begin{gather*}
\mathbf{B}_{11} = r_1^2; \quad \mathbf{B}_{22} = r_2^2; \quad \mathbf{B}_{33} = r_3^2; \\ \mathbf{B}_{12} = r_1 r_2 \cos \theta; \quad \mathbf{B}_{13} = r_1 r_3 \cos \gamma ; \quad \mathbf{B}_{23} = r_2 r_3 \cos(\theta - \gamma).
\end{gather*}
Now let $\bm{\pi} = (\pi_1, \pi_2, \pi_3)$ be the block assignment probability vector. Let $\mathbf{A} \sim \mathrm{SBM}(\mathbf{B}, \bm{\pi})$ be a graph on $n$ vertices and suppose for simplicity that the number of vertices in block $i$ is $n_i = n \pi_i$. Let $n_{ii} = n_i^{2}/2$ for $i=1,2,3$ 
and $n_{ij} = n_i n_j$ if $i \not = j$. Let $m_{ij}$ for $i \leq j$ be independent random variables with $m_{ij} \sim \mathrm{Bin}(n_{ij}, \mathbf{B}_{ij})$.
Then, assuming $\bm{\tau}$ is known, the log-likelihood for $\mathbf{A}$ is  
\begin{equation*}
\begin{split}
\ell(\mathbf{A}) &= m_{11} \log(r_1^2) + (n_{11} - m_{11}) \log(1 - r_1^2) + m_{22} \log (r_2^2) + (n_{22} - m_{22}) \log (1 - r_2^2) \\ &+ m_{33} \log (r_3^2) + (n_{33} - m_{33}) \log (1 - r_3^2) + m_{12} \log (r_1 r_2 \cos \theta) \\ &+ (n_{12} - m_{12}) \log (1 - r_1 r_2 \cos \theta) + m_{13} \log (r_1 r_3 \cos \gamma) \\ & + (n_{13} - m_{13}) \log (1 - r_1 r_3 \cos \gamma) + m_{23} \log (r_2 r_3 \cos (\theta - \gamma)) \\ &+ (n_{23} - m_{23}) \cos(1 - r_2 r_3 \cos(\theta - \gamma))
\end{split}
\end{equation*}
The Fisher information matrix $\mathcal{I}$ for $(r_1, r_2, r_3, \theta, \gamma)$ in this setting is straightforward, albeit tedious, to derive. 

\begin{figure}[tp!]
\center
\subfloat[$\mathrm{MSE}(\hat{\bB}^{(S)})/\mathrm{MSE}(\hat{\bB}^{(N)})$]{
\includegraphics[width=0.48\textwidth]{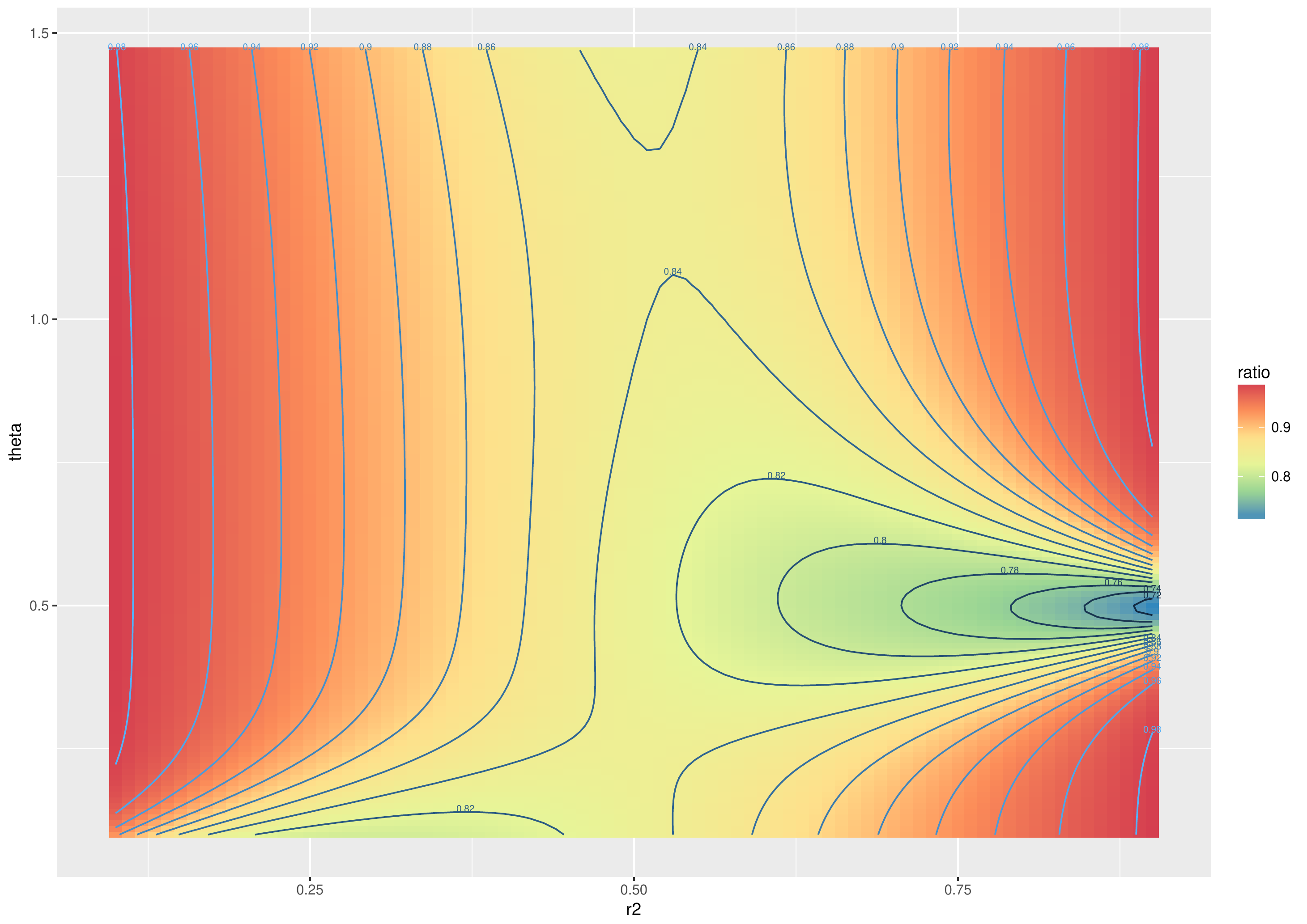}} 
\subfloat[$\mathrm{MSE}(\hat{\bB}^{(S)})/\mathrm{MSE}(\hat{\bB}^{(M)})$]{
\includegraphics[width=0.48\textwidth]{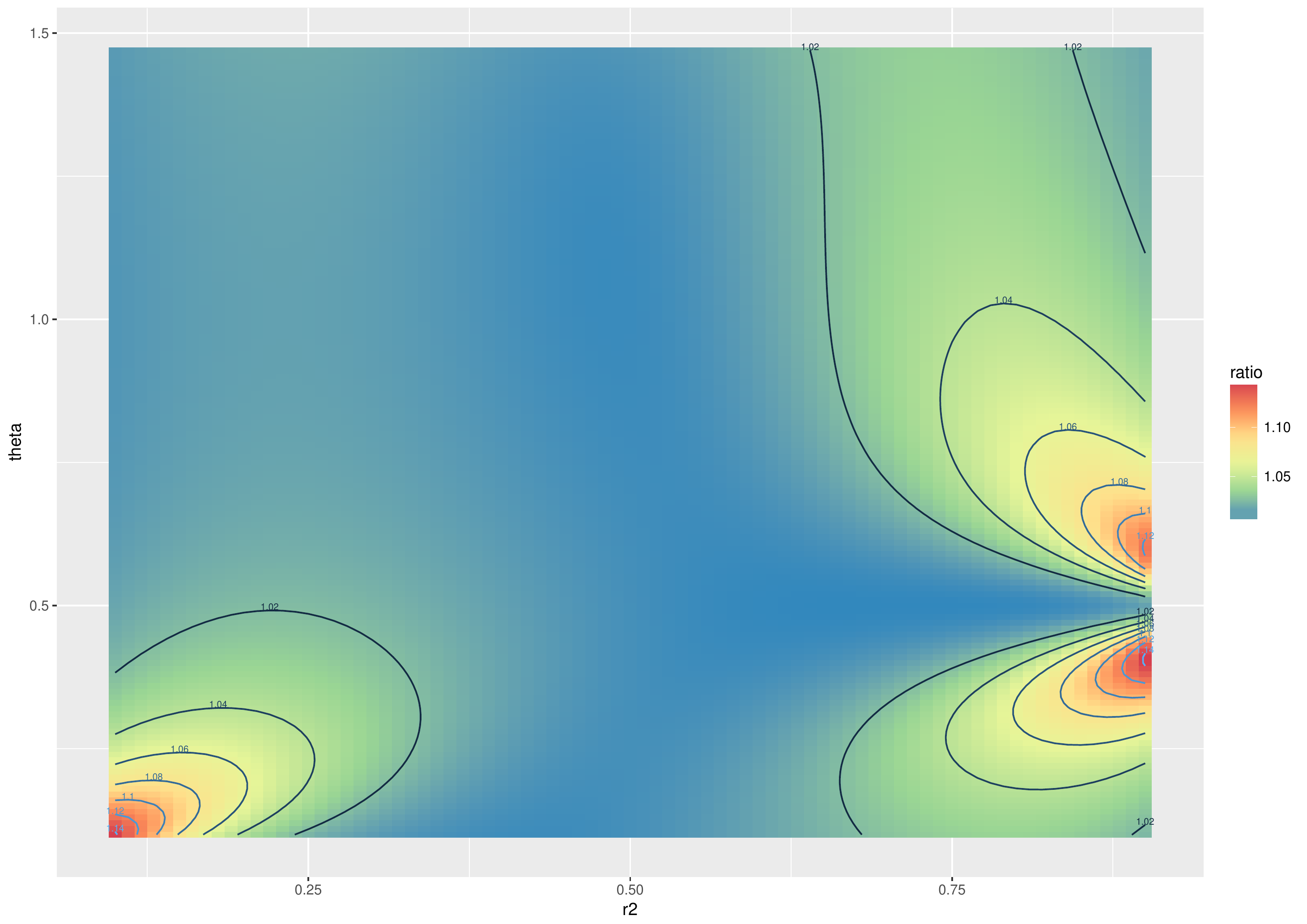}} 
\caption{The ratios $\mathrm{MSE}(\hat{\bB}^{(S)})/\mathrm{MSE}(\hat{\bB}^{(N)})$ and $\mathrm{MSE}(\hat{\bB}^{(S)})/\mathrm{MSE}(\hat{\bB}^{(M)})$ for values of 
$r_2 \in [0.1, 0.9]$ and $\theta \in [0.1, \tfrac{\pi}{2} - 0.1]$ in a $3$-block rank $2$ SBM. 
The labeled lines in each plot are the contour lines for the $\mathrm{MSE}$ ratios. 
The $\mathrm{MSE}(\hat{\bB}^{(S)})$ is computed using the bias-adjusted estimates $\{\hat{\mathbf{B}}^{(S)}_{ij} - \hat{\theta}_{ij}\}_{i \leq j}$ (see Corollary~\ref{cor:in_practice}). 
}
\label{fig:ratio-plot2}
\end{figure}
Plots of the MSE (equivalently the sum of variances
) for the bias-adjusted estimates $\hat{\mathbf{B}}^{(S)}$ against the MSE (equivalently the sum of variances) of the naive estimates $\hat{\mathbf{B}}^{(N)}$ and the true MLE estimates $\hat{\mathbf{B}}^{(M)}$ for a subset of the parameters of a $3$-blocks SBM with $\rho_n \equiv 1$ given in Figure~\ref{fig:ratio-plot2}. In Figure~\ref{fig:ratio-plot2}, we fix $\bm{\pi} = (1/3,1/3,1/3)$, $r_3 = 0.7$, $\gamma = 0.5$, $r_1 = 1- r_2$, letting $r_2$ and $\theta$ vary in the intervals $[0.1,0.9]$ and $[0.1, \tfrac{\pi}{2} - 0.1]$, respectively. Once again, $\hat{\mathbf{B}}^{(S)}$ has smaller mean squared error than $\hat{\mathbf{B}}^{(N)}$ over the whole range of $r_2$ and $\theta$, and has mean squared error almost as small as that of $\hat{\mathbf{B}}^{(M)}$ for a large range of $r_2$ and $\theta$.

\end{remark}

We emphasize that the rank assumptions placed on $\mathbf{B}$ in the previous two examples are natural assumptions, i.e., there is no primafacie reason why $\mathbf{B}$ needs to be invertible, and hence procedures that can both estimate $\mathrm{rk}(\mathbf{B})$ and incorporate it in the subsequent estimation of $\mathbf{B}$ are equally flexible and generally more efficient.  This is in contrast to other potentially more restrictive assumptions such as assuming that $\mathbf{B}$ is of the form $q \bm{1} \bm{1}^{\top} + (p - q) \mathbf{I}$ for $p > q$ (i.e., the planted-partitions model). Indeed, a $K$-block SBM from the planted-partitions model is parametrized by two parameters, irrespective of $K$ and as such the three estimators considered in this paper are provably sub-optimal for estimating the parameters of the planted partitions model. 
\section{Discussions}
Theorem~\ref{THM:GEN_D} and Theorem~\ref{THM:GEN_D_SPARSE} were presented in the context wherein the vertices to block assignments $\bm{\tau}$ are assumed known. For unknown $\bm{\tau}$, Lemma~\ref{LEM:PERFECT} (presented below) implies that $\hat{\bm{\tau}}$ obtained using $K$-means (or Gaussian mixture modeling) on the rows of $\hat{\mathbf{U}}$ is an exact recovery of $\bm{\tau}$, provided that $n \rho_n \ = \omega(\log n)$. The lemma implies Corollary~\ref{cor:in_practice} showing that we can replace the quantities $\theta_{k \ell}$ and $\tilde{\theta}_{k \ell}$ in Eq.~\eqref{eq:sbm_normal2} and Eq.~\eqref{eq:sbm_normal3} of Theorem~\ref{THM:GEN_D} and Theorem~\ref{THM:GEN_D_SPARSE} by consistent estimates $\hat{\theta}_{k \ell}$ without changing the resulting limiting distribution. We emphasize that it is essential for Corollary~\ref{cor:in_practice} that $\hat{\bm{\tau}}$ is an exact recovery of $\bm{\tau}$ in order for the limiting distributions in Eq.~\eqref{eq:sbm_normal2} and Eq.~\eqref{eq:sbm_normal3} to remain valid when $\hat{\theta}_{k \ell}$ is substituted for $\theta_{k \ell}$ and $\tilde{\theta}_{k \ell}$. Indeed, if there is even a single vertex that is mis-clustered by $\hat{\bm{\tau}}$, then $\hat{\theta}_{k \ell}$ as defined will introduce an additional (random) bias term in the limiting distribution of Eq.~\eqref{eq:sbm2_inpractice}. 

\begin{remark}
For ease of exposition, bounds in this paper are often written as holding ``with high probability''. A random variable $\xi \in \mathbb{R}$ is $O_{\mathbb{P}}(f(n))$ if, for any positive constant $c > 0$ there exists a $n_0 \in \mathbb{N}$ and a constant $C > 0$ (both of which possibly depend on $c$) such that for all $n \geq n_0$, $|\xi| \leq C f(n)$ with probability at least $1 - n^{-c}$; moreover, a random variable $\xi \in \mathbb{R}$ is $o_{\mathbb{P}}(f(n))$ if for any positive constant $c > 0$ and any $\epsilon > 0$ there exists a $n_0 \in \mathbb{N}$ such that for all $n \geq n_0$, $|\xi| \leq \epsilon f(n)$ with probability at least $1 - n^{-c}$. Similarly, when $\xi$ is a random vector in $\mathbb{R}^{d}$ or a random matrix in $\mathbb{R}^{d_1 \times d_2}$, $\xi = O_{\mathbb{P}}(f(n))$ or $\xi = o_{\mathbb{P}}(f(n))$ if $\|\xi\| = O_{\mathbb{P}}(f(n))$  or $\|\xi\| = o_{\mathbb{P}}(f(n))$, respectively. Here $\|x\|$ denotes the Euclidean norm of $x$ when $x$ is a vector and the spectral norm of $x$ when $x$ is a matrix. We write $\xi = \zeta + O_{\mathbb{P}}(f(n))$ or $\xi = \zeta + o_{\mathbb{P}}(f(n))$ if $\xi - \zeta = O_{\mathbb{P}}(f(n))$ or $\xi - \zeta = o_{\mathbb{P}}(f(n))$, respectively. 
\end{remark}

\begin{lemma}
\label{LEM:PERFECT}
Let $(\mathbf{A}_n, \mathbf{X}_n) \sim \mathrm{GRDPG}_{p,q}(F)$ be a generalized random dot product graph on $n$ vertices
with sparsity factor $\rho_n$. 
Let $\hat{\mathbf{U}}_n(i)$ and $\mathbf{U}_n(i)$ be the $i$-th row of $\hat{\mathbf{U}}_n$ and $\mathbf{U}_n$, respectively. Here $\hat{\mathbf{U}}_n$ and $\mathbf{U}_n$ are the eigenvectors of $\mathbf{A}_n$ and 
$\mathbf{X}_n \mathbf{X}_n^{\top}$ corresponding to the $p + q$ largest eigenvalues (in modulus) of $\mathbf{A}_n$ and $\mathbf{X}_n \mathbf{X}_n^{\top}$. Then there exist a (universal) constant $c > 0$ and a $d \times d$ 
orthogonal matrix $\mathbf{W}_n$ such that, for $n \rho_n = \omega(\log^{2c}(n)$, 
\begin{equation}
\label{eq:perfect} 
\max_{i \in [n]} \|\mathbf{W}_n \hat{\mathbf{U}}_n(i) - \mathbf{U}_n(i)\| = O_{\mathbb{P}}\Bigl(\frac{\log^{c}{n}}{n \sqrt{\rho_n}}\Bigr).
\end{equation}
\end{lemma}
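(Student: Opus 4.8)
The plan is to read \eqref{eq:perfect} as a two-to-infinity norm statement and prove it through a first-order perturbation expansion of $\hU_n$ whose error is controlled row by row. Write $\bP = \rho_n \bX_n \mathbf{I}_{p,q} \bX_n^\top = \E[\bA_n \mid \bX_n]$ for the rank-$d$ population matrix and let $\bP = \bU_P \bLam \bU_P^\top$ be its eigendecomposition on the nonzero eigenvalues. Since (generically) the columns of $\bU_P$, of $\bU_n$ (the leading eigenvectors of $\bX_n \bX_n^\top$), and of $\bX_n$ all span the same $d$-dimensional subspace, any two of these orthonormal bases differ by a fixed $d \times d$ orthogonal matrix; this is exactly what the rotation $\bW_n$ absorbs. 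It therefore suffices to bound $\max_i \|\bW_n \hU_n(i) - \bU_P(i)\| = \|\hU_n \bW_n^\top - \bU_P\|_{2 \to \infty}$ and then compose with the deterministic orthogonal map relating $\bU_P$ to $\bU_n$.

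First I would record the two preliminary estimates that drive everything: (i) the spectral concentration $\|\bA_n - \bP\| = O_{\mathbb{P}}(\sqrt{n\rho_n})$, which holds once $n\rho_n = \omega(\log n)$ by standard tail bounds for $\|\bA - \E \bA\|$ (e.g. \cite{oliveira2009concentration,rinaldo_2013}); and (ii) the eigenvalue estimate $|\lambda_d(\bP)| \asymp n\rho_n$ with a comparable gap, which follows because $\tfrac1n \bX_n^\top \bX_n \to \E[X_1 X_1^\top]$ by the law of large numbers, so $\bX_n \bX_n^\top$ has its $d$ nonzero eigenvalues of order $n$. Weyl's inequality then places the leading eigenvalues $\hat{\bLam}$ of $\bA_n$ within $O_{\mathbb{P}}(\sqrt{n\rho_n})$ of those of $\bP$, giving $\|\hat{\bLam}^{-1}\| = O_{\mathbb{P}}((n\rho_n)^{-1})$, and the Davis--Kahan theorem yields the crude subspace bound $\|\hU_n \bW_n^\top - \bU_P\| = O_{\mathbb{P}}((n\rho_n)^{-1/2})$, where $\bW_n$ is built from the singular value decomposition of $\bU_P^\top \hU_n$.

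Next I would expand. Using $\bA_n \hU_n = \hU_n \hat{\bLam}$ and $\bP = \bU_P \bLam \bU_P^\top$ one obtains the standard identity
\[
\hU_n \bW_n^\top - \bU_P = (\bA_n - \bP)\bU_P \bLam^{-1} \bW_n^\top + \bR_n,
\]
where $\bR_n$ collects the second-order terms such as $(\bA_n - \bP)(\hU_n \bW_n^\top - \bU_P)\hat{\bLam}^{-1}$. The leading term is handled by a row-wise Bernstein inequality: the $i$-th row of $(\bA_n - \bP)\bU_P$ is $\sum_j (\bA_{ij} - \bP_{ij})\bU_P(j)$, a sum of independent mean-zero vectors with summands bounded by $\|\bU_P\|_{2\to\infty} = O(n^{-1/2})$ and covariance $\preceq \rho_n\, \bU_P^\top \bU_P = \rho_n \mathbf{I}_d$, so $\max_i \|\sum_j (\bA_{ij} - \bP_{ij})\bU_P(j)\| = O_{\mathbb{P}}(\sqrt{\rho_n \log n})$; multiplying by $\|\bLam^{-1}\| = O((n\rho_n)^{-1})$ gives the target rate $O_{\mathbb{P}}(\sqrt{\log n}/(n\sqrt{\rho_n}))$ for this term.

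The main obstacle is bounding $\|\bR_n\|_{2\to\infty}$, specifically $(\bA_n - \bP)(\hU_n \bW_n^\top - \bU_P)\hat{\bLam}^{-1}$: the naive submultiplicative bound $\|\bA_n - \bP\|_{2\to\infty}\,\|\hU_n \bW_n^\top - \bU_P\|\,\|\hat{\bLam}^{-1}\|$ evaluates to $O_{\mathbb{P}}(\sqrt{n\rho_n}\cdot (n\rho_n)^{-1/2}\cdot (n\rho_n)^{-1}) = O_{\mathbb{P}}((n\rho_n)^{-1})$, which exceeds the target by a factor $\rho_n^{-1/2}$ as $\rho_n \to 0$. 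To recover the missing $\sqrt{\rho_n}$ I would decouple via a leave-one-out construction: let $\bA_n^{(i)}$ agree with $\bA_n$ except that its $i$-th row and column are replaced by their expectations, with leading eigenvectors $\hU_n^{(i)}$ independent of the $i$-th row of $\bA_n - \bP$, and write $e_i^\top (\bA_n - \bP)(\hU_n \bW_n^\top - \bU_P)$ as the sum of $e_i^\top (\bA_n - \bP)(\hU_n \bW_n^\top - \hU_n^{(i)} \bO_i)$ and $e_i^\top (\bA_n - \bP)(\hU_n^{(i)} \bO_i - \bU_P)$. The second summand is, conditionally on $\bA_n^{(i)}$, a sum of independent centered terms whose covariance is $\preceq \rho_n \|\hU_n^{(i)}\bO_i - \bU_P\|_F^2 = O_{\mathbb{P}}(n^{-1})$, so Bernstein bounds it at $O_{\mathbb{P}}(\sqrt{\log n/n})$; the first summand is controlled by the eigenvector stability $\|\hU_n \bW_n^\top - \hU_n^{(i)}\bO_i\| = O_{\mathbb{P}}(\sqrt{\log n}/(n\sqrt{\rho_n}))$ obtained by applying the same linear analysis to the single-row perturbation, against $\|e_i^\top(\bA_n - \bP)\| = O_{\mathbb{P}}(\sqrt{n\rho_n})$. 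Both pieces, after multiplication by $\|\hat{\bLam}^{-1}\|$, are $O_{\mathbb{P}}(\sqrt{\log n}/(n^{3/2}\rho_n)) = o_{\mathbb{P}}(\sqrt{\log n}/(n\sqrt{\rho_n}))$, so $\bR_n$ is of strictly smaller order than the leading term; the accumulated polylogarithmic factors across these steps are what produce the universal $\log^{c} n$ in the statement, and the hypothesis $n\rho_n = \omega(\log^{2c} n)$ guarantees that every error estimate vanishes. (Alternatively, one may short-circuit this step by invoking the two-to-infinity subspace bounds already developed for generalized random dot product graphs in \cite{grdpg1,perfect}.)
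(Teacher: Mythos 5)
Your proposal is sound, but it takes a genuinely different route from the paper. The paper's proof never linearizes: it writes the eigenvector equation as a matrix Sylvester equation $\hat{\mathbf{U}}\hat{\bm{\Lambda}} - (\mathbf{A}-\mathbf{P})\hat{\mathbf{U}} = \mathbf{P}\hat{\mathbf{U}}$ and expands the \emph{full} von Neumann series $\hat{\mathbf{U}} = \sum_{k\geq 0}(\mathbf{A}-\mathbf{P})^{k}\,\mathbf{U}\bm{\Lambda}\mathbf{U}^{\top}\hat{\mathbf{U}}\hat{\bm{\Lambda}}^{-(k+1)}$, splits off the projection $\mathbf{U}\mathbf{U}^{\top}\hat{\mathbf{U}}$ (whose distance to $\mathbf{U}\mathbf{W}$ is handled by the $O_{\mathbb{P}}((n\rho_n)^{-1})$ bound on $\|\mathbf{U}^{\top}\hat{\mathbf{U}}-\mathbf{W}\|$), and then bounds every term of the series in $2\to\infty$ norm using a delocalization lemma imported from the random-matrix literature (a restatement of Lemma~7.10 of \cite{erdos}): $\|(\mathbf{A}-\mathbf{P})^{k}\mathbf{U}\|_{2\to\infty} = O_{\mathbb{P}}\bigl(\sqrt{d}\,(n\rho_n)^{k/2}\log^{kc}(n)/\sqrt{n}\bigr)$ for all $k\leq\log n$. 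The geometric summation of these bounds under $n\rho_n=\omega(\log^{2c}n)$ is exactly where the universal constant $c$ and the $\log^{c}n$ factor in the statement originate. You instead stop at the first-order term $(\mathbf{A}-\mathbf{P})\mathbf{U}\bm{\Lambda}^{-1}$, handle it by row-wise Bernstein, and control the quadratic remainder by leave-one-out decoupling; this is precisely the strategy of \cite{abbe2,mao_sarkar,belkin_2_infty}, which the paper explicitly cites as a "similar in flavor" alternative, so the approach is known to deliver the claimed rate. What each buys: the paper's series argument avoids any decoupling construction but leans on the nontrivial extension of the Erd\H{o}s et al.\ power-delocalization bound from $n^{-1/2}\bm{1}$ to general delocalized eigenvectors; yours uses only Bernstein and Davis--Kahan but pays with the leave-one-out machinery.

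One caveat you should make explicit: your leave-one-out step, as sketched, is mildly circular. The Bernstein bound for $e_i^{\top}(\mathbf{A}-\mathbf{P})(\hat{\mathbf{U}}^{(i)}\mathbf{O}_i-\mathbf{U})$ has a variance term of order $d/n$ as you say, but its max term is controlled by $\|\hat{\mathbf{U}}^{(i)}\mathbf{O}_i-\mathbf{U}\|_{2\to\infty}$ (equivalently $\|\hat{\mathbf{U}}^{(i)}\|_{2\to\infty}$), which is, up to a single-row perturbation, the very quantity the lemma is trying to bound; likewise your asserted stability estimate $\|\hat{\mathbf{U}}\mathbf{W}^{\top}-\hat{\mathbf{U}}^{(i)}\mathbf{O}_i\|=O_{\mathbb{P}}(\sqrt{\log n}/(n\sqrt{\rho_n}))$ itself requires a delocalization bound for $\hat{\mathbf{U}}^{(i)}$ to control $\|(\mathbf{A}-\mathbf{A}^{(i)})\hat{\mathbf{U}}^{(i)}\|$. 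In \cite{abbe2} this is resolved by running the estimates simultaneously over all $n$ leave-one-out copies as a self-consistent (bootstrap) argument rather than as a one-shot application of Bernstein, and your write-up would need that loop to be closed. This is a known-resolvable technicality within the framework you chose, not a flaw in the strategy, but as stated the two halves of your remainder bound each presuppose a piece of the conclusion.
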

The proof of Lemma~\ref{LEM:PERFECT} is given in the appendix.
If $\mathbf{A}_n \sim \mathrm{SBM}(\mathbf{B}, \bm{\pi})$ with sparsity factor $\rho_n$ and $\mathbf{B}$ is $K \times K$, then the rows of $\mathbf{U}_n$ take on at most $K$ possible distinct values. Moreover, for any vertices $i$ and $j$ with $\tau_i \not = \tau_j$, $\|\mathbf{U}_n(i) - \mathbf{U}_n(j)\| \geq Cn^{-1/2}$ for some constant $C$ depending only on $\mathbf{B}$. Now if $n \rho_n = \omega(\log^{2c}(n))$, then Lemma~\ref{LEM:PERFECT} implies, for sufficiently large $n$,
$$\|\mathbf{W}_n \hat{\mathbf{U}}_n(i) - \mathbf{U}_n(i)\| < \min_{j \colon \tau_j \not = i}\|\mathbf{W}_n \hat{\mathbf{U}}_n(i) - \mathbf{U}_n(j)\|; \quad \text{for all $i \in [n]$}.$$
Hence, since $\mathbf{W}_n$ is an orthogonal matrix, $K$-means clustering of the rows of $\hat{\mathbf{U}}_n$ yield an assignment $\hat{\bm{\tau}}$ that is indeed, up to a permutation of the block labels, an exact recovery of $\bm{\tau}$ as $n \rightarrow \infty$. We note that Lemma~\ref{LEM:PERFECT} is an extension of our earlier results on bounding the perturbation $\hat{\mathbf{U}}_n - \mathbf{U}_n \mathbf{W}$ using the $2 \to \infty$ matrix norm \cite{perfect,lyzinski15_HSBM,ctp_2_to_infty}. Lemma~\ref{LEM:PERFECT} is very similar in flavor to a few recent results by other researchers \cite{abbe2,mao_sarkar,belkin_2_infty} where eigenvector perturbations of $\mathbf{A}_n$ (compared to the eigenvectors of $\mathbf{X}_n \mathbf{X}_n^{\top}$)
in the $\ell_{\infty}$ norm is established in the regime where $n \rho_n = \omega(\log^{c}(n))$ for some constant $c > 0$.  
\begin{corollary}
\label{cor:in_practice}
Assume the setting and notations of Theorem~\ref{THM:GEN_D}. Assume $K$ known, let $\hat{\bm{\tau}} \colon [n] \mapsto [K]$ be the vertex to cluster assignments when the rows of $\hat{\mathbf{U}}$ are clustered into $K$ clusters. For $k \in [K]$, let $\hat{\bm{s}}_k \in \{0,1\}^{n}$ where the $i$-th entry of $\hat{\bm{s}}_k$ is $1$ if $\hat{\tau}_i = k$ and $0$ otherwise. Let $\hat{n}_k = |\{i \colon \hat{\tau}_i = k \}|$ and let $\hat{\pi}_k = \tfrac{\hat{n}_k}{n}$. For $k \in [K]$, let $\hat{\nu}_k = \tfrac{1}{\hat{n}_k} \hat{\bm{s}}_k^{\top} \hat{\bU} \hat{\bm{\Lambda}}^{1/2}$, 
let $\hat{\mathbf{B}}_{k \ell} = \hat{\mathbf{B}}_{k \ell}^{(S)} = \hat{\nu}_k^{\top} \mathbf{I}_{p,q} \hat{\nu}_{\ell}$, and
let $\hat{\Delta} = \sum_{k} \hat{\pi}_k \hat{\nu}_k \hat{\nu}_k^{\top}$. For $k \in [K]$ and $\ell \in [K]$, let $\hat{\theta}_{k \ell}$ be given by
\begin{equation}
 \label{eq:hat.mu_kl}
 \begin{split}
\theta_{k\ell} &= \sum_{r=1}^{K} \hat{\pi}_r \bigl(\hat{\mathbf{B}}_{kr} (1 - \hat{\mathbf{B}}_{kr}) + \hat{\mathbf{B}}_{\ell r}(1 - \hat{\mathbf{B}}_{\ell r})\bigr)\hat{\nu}_k^{\top} \hat{\Delta}^{-1} \mathbf{I}_{p,q} \hat{\Delta}^{-1} \nu_{\ell} \\
& - \sum_{r=1}^{K} \sum_{s=1}^{K} \hat{\pi}_r \hat{\pi}_s \hat{\mathbf{B}}_{rs} (1 - \hat{\mathbf{B}}_{sr}) \hat{\nu}_s^{\top} \hat{\Delta}^{-1} \mathbf{I}_{p,q} \hat{\Delta}^{-1} 
(\hat{\nu}_{\ell} \hat{\nu}_k^{\top} + \hat{\nu}_k \hat{\nu}_{\ell}^{\top}) \hat{\Delta}^{-1} \hat{\nu}_s.
 \end{split}
 \end{equation}
 Then there exists a (sequence of) permutation(s) $\psi \equiv \psi_n$ on $[K]$ such that for any $k \in [K]$ and $\ell \in [K]$,
 \begin{equation}
 \label{eq:sbm2_inpractice}
 n(\hat{\mathbf{B}}^{(S)}_{\psi(k), \psi(\ell)} - \mathbf{B}_{k \ell} - \tfrac{\hat{\theta}_{k \ell}}{n}) \overset{\mathrm{d}}{\longrightarrow} \mathcal{N}(0, \sigma_{k \ell}^2)
 \end{equation}
as $n \rightarrow \infty$.
\end{corollary}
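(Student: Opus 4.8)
The plan is to reduce the unknown-$\bm{\tau}$ problem to the known-$\bm{\tau}$ setting of Theorem~\ref{THM:GEN_D} via exact recovery of the block assignments, and then to justify replacing the deterministic bias $\theta_{k\ell}$ by its plug-in estimate $\hat{\theta}_{k\ell}$ by a consistency argument combined with Slutsky's theorem. First I would invoke Lemma~\ref{LEM:PERFECT}: since $\rho_n \equiv 1$ the hypothesis $n\rho_n = \omega(\log^{2c} n)$ is satisfied, and the discussion following the lemma shows that $K$-means clustering of the rows of $\hat{\mathbf{U}}$ recovers $\bm{\tau}$ exactly up to a relabeling. Concretely, there is a sequence of permutations $\psi \equiv \psi_n$ on $[K]$ and an event $E_n$ with $\Pr[E_n] \to 1$ on which $\hat{\bm{s}}_{\psi(k)} = \bm{s}_k$, and hence $\hat{n}_{\psi(k)} = n_k$, for every $k$ (shrinking $E_n$ if necessary so that the $p$ positive and $q$ negative leading eigenvalues of $\mathbf{A}$ are correctly signed).

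On $E_n$ the empirical quantities collapse onto their true-label counterparts. Using the identity $\hat{\bm{\Lambda}}^{1/2}\mathbf{I}_{p,q}\hat{\bm{\Lambda}}^{1/2} = \hat{\bm{\Lambda}}$ (with $\hat{\bm{\Lambda}}^{1/2}$ the entrywise square root of $|\hat{\bm{\Lambda}}|$, whose $d$ signs are matched by $\mathbf{I}_{p,q}$) together with $\rho_n = 1$, the Corollary's estimate $\hat{\mathbf{B}}^{(S)}_{\psi(k),\psi(\ell)} = \hat{\nu}_{\psi(k)}^{\top}\mathbf{I}_{p,q}\hat{\nu}_{\psi(\ell)}$ coincides exactly, on $E_n$, with the known-$\bm{\tau}$ estimate $\tfrac{1}{n_k n_\ell}\bm{s}_k^{\top}\hat{\mathbf{U}}\hat{\bm{\Lambda}}\hat{\mathbf{U}}^{\top}\bm{s}_\ell$ analyzed in Theorem~\ref{THM:GEN_D}. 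Since $\Pr[E_n]\to 1$, the random variable $n(\hat{\mathbf{B}}^{(S)}_{\psi(k),\psi(\ell)} - \mathbf{B}_{k\ell} - \theta_{k\ell}/n)$ and its known-$\bm{\tau}$ counterpart differ only on the asymptotically null set $E_n^{c}$, so the former inherits the $\mathcal{N}(0,\sigma_{k\ell}^2)$ limit from Theorem~\ref{THM:GEN_D}.

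Next I would establish that the plug-in bias is consistent, $\hat{\theta}_{k\ell} = \theta_{k\ell} + o_{\mathbb{P}}(1)$. The expression in Eq.~\eqref{eq:hat.mu_kl} is a fixed smooth function of $(\hat{\pi}_r, \hat{\mathbf{B}}_{rs}, \hat{\nu}_r, \hat{\Delta})$, so it suffices to show each of these converges in probability to its population analogue. On $E_n$ one has $\hat{\pi}_{\psi(r)} = n_r/n \to \pi_r$ by the law of large numbers; $\hat{\mathbf{B}}_{rs} \to \mathbf{B}_{rs}$ because $\hat{\mathbf{B}}^{(S)}_{rs} - \mathbf{B}_{rs} = O_{\mathbb{P}}(1/n)$ by Theorem~\ref{THM:GEN_D}; and, by Lemma~\ref{LEM:PERFECT}, $\hat{\nu}_{\psi(r)}$ and $\hat{\Delta}$ converge to $\nu_r$ and $\Delta$ up to a common transformation $\mathbf{Q}$ in the indefinite orthogonal group $\{\mathbf{Q}: \mathbf{Q}^{\top}\mathbf{I}_{p,q}\mathbf{Q} = \mathbf{I}_{p,q}\}$. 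The key observation is that every term in Eq.~\eqref{eq:hat.mu_kl} is built from the $\mathbf{Q}$-invariant bilinear forms $\nu_a^{\top}\mathbf{I}_{p,q}\nu_b$ and $\nu_a^{\top}\Delta^{-1}\mathbf{I}_{p,q}\Delta^{-1}\nu_b$ (and their $\Delta^{-1}$-weighted relatives), each of which is unchanged under $\nu_r \mapsto \mathbf{Q}\nu_r$, $\Delta \mapsto \mathbf{Q}\Delta\mathbf{Q}^{\top}$; hence the ambiguity cancels and the continuous mapping theorem yields $\hat{\theta}_{k\ell} \overset{\mathbb{P}}{\longrightarrow} \theta_{k\ell}$.

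Finally I would combine the two ingredients through the decomposition
\begin{equation*}
n\bigl(\hat{\mathbf{B}}^{(S)}_{\psi(k),\psi(\ell)} - \mathbf{B}_{k\ell} - \tfrac{\hat{\theta}_{k\ell}}{n}\bigr) = n\bigl(\hat{\mathbf{B}}^{(S)}_{\psi(k),\psi(\ell)} - \mathbf{B}_{k\ell} - \tfrac{\theta_{k\ell}}{n}\bigr) - (\hat{\theta}_{k\ell} - \theta_{k\ell}),
\end{equation*}
where the first summand converges in distribution to $\mathcal{N}(0,\sigma_{k\ell}^2)$ and the second is $o_{\mathbb{P}}(1)$; Slutsky's theorem then gives Eq.~\eqref{eq:sbm2_inpractice}. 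I expect the main obstacle to be the consistency of $\hat{\theta}_{k\ell}$, and within it the bookkeeping needed to verify that the $\mathbf{Q}$-ambiguity of the spectral embedding genuinely cancels throughout Eq.~\eqref{eq:hat.mu_kl}. The exact-recovery step, by contrast, is essentially a citation of Lemma~\ref{LEM:PERFECT}, but it is indispensable: as the paragraph preceding the corollary stresses, even a single misclustered vertex would inject an uncontrolled random bias into $\hat{\theta}_{k\ell}$ and thereby invalidate the limit.
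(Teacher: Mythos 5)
Your proposal is correct and follows essentially the same route as the paper: the paper's (brief) justification is precisely that Lemma~\ref{LEM:PERFECT} yields exact recovery of $\bm{\tau}$ up to permutation, so that on the recovery event $\hat{\mathbf{B}}^{(S)}_{\psi(k),\psi(\ell)}$ coincides with the known-$\bm{\tau}$ estimator of Theorem~\ref{THM:GEN_D}, and the plug-in bias $\hat{\theta}_{k\ell}$ is consistent (the indefinite-orthogonal ambiguity cancelling in the invariant forms), whence Slutsky's theorem gives Eq.~\eqref{eq:sbm2_inpractice}. Your write-up in fact supplies more detail than the paper does, including the correct observation that $\hat{\theta}_{k\ell}-\theta_{k\ell}=o_{\mathbb{P}}(1)$ is exactly the rate needed after multiplication by $n$.
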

An almost identical result hold in the setting when $\rho_n \rightarrow 0$. More specifically, assume the setting and notations of Theorem~\ref{THM:GEN_D_SPARSE} and 
let $\hat{\nu}_k$, $\hat{\Delta}$ and $\hat{\mathbf{B}} = \hat{\mathbf{B}}^{(S)}$ be as defined in Corllary~\ref{cor:in_practice}. Now let $\hat{\theta}_{k \ell}$ be given by
\begin{equation}
 \label{eq:hat.mu_kl2}
 \begin{split}
\hat{\theta}_{k\ell} &= \sum_{r=1}^{K} \hat{\pi}_r \bigl(\hat{\mathbf{B}}_{kr} + \hat{\mathbf{B}}_{\ell r} \bigr)\hat{\nu}_k^{\top} \hat{\Delta}^{-1} \mathbf{I}_{p,q} \hat{\Delta}^{-1} \nu_{\ell} \\
& - \sum_{r=1}^{K} \sum_{s=1}^{K} \hat{\pi}_r \hat{\pi}_s \hat{\mathbf{B}}_{rs} \hat{\nu}_s^{\top} \hat{\Delta}^{-1} \mathbf{I}_{p,q} \hat{\Delta}^{-1} 
(\hat{\nu}_{\ell} \hat{\nu}_k^{\top} + \hat{\nu}_k \hat{\nu}_{\ell}^{\top}) \hat{\Delta}^{-1} \hat{\nu}_s.
 \end{split}
 \end{equation}
 Then there exists a (sequence of) permutation(s) $\psi \equiv \psi_n$ on $[K]$ such that for any $k \in [K]$ and $\ell \in [K]$,
 \begin{equation}
 n \rho_n^{1/2} (\hat{\mathbf{B}}_{\psi(k), \psi(\ell)} - \mathbf{B}_{k \ell} - \tfrac{\hat{\theta}_{k \ell}}{n \rho_n}) \overset{\mathrm{d}}{\longrightarrow} \mathcal{N}(0, \tilde{\sigma}_{k \ell}^2)
 \end{equation}
as $n \rightarrow \infty$, $\rho_n \rightarrow 0$ and $n \rho_n = \omega(\sqrt{n}).$

Finally, we provide some justification on the necessity of the assumption $n \rho_n = \omega(\sqrt{n})$ in the statement of Theorem~\ref{THM:GEN_D_SPARSE}, even though Lemma~\ref{LEM:PERFECT} implies that $\hat{\bm{\tau}}$ is an exact recovery of $\bm{\tau}$ for $n \rho_n = \omega(\log^{2c}(n))$. 
Consider the case of $\mathbf{A}$ being an Erd\H{o}s-R\'{e}nyi graph on $n$ vertices with edge probability $p$. 
The estimate $\hat{p}$ obtained from the spectral embedding in this setting is $\tfrac{1}{n^2} \hat{\lambda} (\bm{1}^{\top} \hat{\bm{u}})^2$ where $\hat{\lambda}$ is the largest eigenvalue of $\mathbf{A}$, $\bm{1}$ is the all ones vector, and $\hat{\bm{u}}$ is the associated (unit-norm) eigenvector. Let $\bm{e} = n^{-1/2} \bm{1}$.
We then have
\begin{equation*}
\begin{split}
n (\hat{p} - p) &= \tfrac{1}{n} \hat{\lambda} (\bm{1}^{\top} \hat{\bm{u}})^2 - np = \hat{\lambda} \bigl((\bm{e}^{\top} \hat{\bm{u}})^2 - 1\bigr) + \hat{\lambda} - np.
\end{split}
\end{equation*}
When $p$ remains constant as $n$ changes, then the results of \cite{furedi1981eigenvalues} implies $\bm{e}^{\top} \hat{\bm{u}} = 1 - \tfrac{1-p}{2np} + O_{\mathbb{P}}(n^{-3/2})$ and $\hat{\lambda} - np = \bm{e}^{\top} (\mathbf{A} - \mathbb{E}[\mathbf{A}]) \bm{e} + (1-p) + O_{\mathbb{P}}(n^{-1/2})$, from which we infer 
\begin{equation*}
\begin{split}
n (\hat{p} - p) &= - (1-p) \tfrac{\hat{\lambda}}{np} + \bm{e}^{\top} (\mathbf{A} - \mathbb{E}[\mathbf{A}]) \bm{e} + (1-p) + O_{\mathbb{P}}(n^{-1/2}) \\ &= \bm{e}^{\top} (\mathbf{A} - \mathbb{E}[\mathbf{A}]) \bm{e} + O_{\mathbb{P}}(n^{-1/2}) \overset{\mathrm{d}}{\longrightarrow} \mathcal{N}(0, 2p(1-p))
\end{split}
\end{equation*}
since $\bm{e}^{\top} (\mathbf{A} - \mathbb{E}[\mathbf{A}]) \bm{e}$ is a sum of $n(n+1)/2$ independent mean $0$ random variables with variance $p(1-p)$.
On the other hand, if $p \rightarrow 0$ as $n$ increases, then Theorem~6.2 of \cite{erdos} (more specifically Eq.~(6.9) and Eq.~(6.26) of \cite{erdos}) implies
\begin{gather}
\label{eq:et_hatu}
\bm{e}^{\top} \hat{\bm{u}} = 1 - \tfrac{1-p}{2np} + O_{\mathbb{P}}\bigl((np)^{-3/2} + \tfrac{\log^{c}n}{n \sqrt{p}}\bigr)
\end{gather}
and 
\begin{equation}
\label{eq:lambda_ER}
\begin{split}
 \hat{\lambda} - np &= \bm{e}^{\top} (\mathbf{A} - \mathbf{E}[\mathbf{A}]) \bm{e} + \tfrac{\bm{e}^{\top} (\mathbf{A} - \mathbb{E}[\mathbf{A}])^2 \bm{e}}{np} + O_{\mathbb{P}}\bigl((np)^{-1} + \tfrac{\log^{c}{n}}{n \sqrt{p}}) \\
 &= \bm{e}^{\top} (\mathbf{A} - \mathbf{E}[\mathbf{A}]) \bm{e} + (1 - p) + O_{\mathbb{P}}\bigl((np)^{-1} + \tfrac{\log^{c}{n}}{n \sqrt{p}}).
 \end{split}
\end{equation}
The second equality in Eq.~\eqref{eq:lambda_ER} follows from Lemma~6.5 of \cite{erdos} which states that $\bm{e}^{\top} (\mathbf{A} - \mathbf{E}[\mathbf{A}])^{k} \bm{e} = \bm{e}^{\top} \mathbb{E}[(\mathbf{A} - \mathbb{E}[\mathbf{A}])^{k}] \bm{e} + O_{\mathbb{P}}(\tfrac{(np)^{k/2} \log^{kc}(n)}{\sqrt{n}})$ for some universal constant $c > 0$ provided that $n p = \omega(\log{n})$. Hence
\begin{equation}
n(\hat{p} - p) = - (1-p) \tfrac{\hat{\lambda}}{np} + \bm{e}^{\top} (\mathbf{A} - \mathbf{E}[\mathbf{A}]) \bm{e} + (1-p) + O_{\mathbb{P}}((np)^{-1/2}).
\end{equation}
Once again $\bm{e}^{\top} (\mathbf{A} - \mathbf{E}[\mathbf{A}]) \bm{e}$ is a sum of $n(n+1)/2$ independent mean $0$ random variables with variance $p(1-p)$, but since $p \rightarrow 0$, the individual variance also vanishes as $n \rightarrow \infty$. In order to obtain a non-degenerate limiting distribution for $\bm{e}^{\top} (\mathbf{A} - \mathbf{E}[\mathbf{A}]) \bm{e}$, it is necessary that we consider $p^{-1/2} \bm{e}^{\top} (\mathbf{A} - \mathbf{E}[\mathbf{A}]) \bm{e}$. This, however, lead to non-trivial technical difficulties. In particular, 
\begin{equation*}
\begin{split}
n p^{-1/2}(\hat{p} - p) &= p^{-1/2} \bm{e}^{\top} (\mathbf{A} - \mathbf{E}[\mathbf{A}]) \bm{e} + (1-p)(\tfrac{\hat{\lambda} - np}{np^{3/2}}) + O_{\mathbb{P}}(n^{-1/2} p^{-1}) \\
&= p^{-1/2} \bm{e}^{\top} (\mathbf{A} - \mathbf{E}[\mathbf{A}]) \bm{e} \bigl(1 + \tfrac{1-p}{np}\bigr) + O_{\mathbb{P}}(n^{-1/2} p^{-1})
\end{split}
\end{equation*}
upon iterating the term $(\hat{\lambda} - np)$. To guarantee that $O_{\mathbb{P}}(n^{-1/2} p^{-1})$ vanishes in the above expression, 
it might be necessary to require $n p = \omega(\sqrt{n})$. That is to say, the expansions for $\bm{e}^{\top} \hat{\bm{u}}$ and $\hat{\lambda} - np$ in Eq.~\eqref{eq:et_hatu} and Eq.~\eqref{eq:lambda_ER} is not sufficiently refined. 

We surmise that to extend Theorem~\ref{THM:GEN_D_SPARSE}, even in the context of Erd\H{o}s-R\'{e}nyi graphs, to the setting wherein $n p = o(\sqrt{n})$, it is necessary to consider higher order expansion for $\bm{e}^{\top} \hat{\bm{u}}$ and $\hat{\lambda} - np$. But this necessitates evaluating $\bm{e}^{\top} \mathbb{E}[(\mathbf{A} - \mathbb{E}[\mathbf{A}])^{k}] \bm{e}$ for $k \geq 3$, a highly non-trivial task; in particular $n p = \omega(\log{n})$ potentially require evaluating $\mathbb{E}[\bm{e}^{\top}(\mathbf{A} - \mathbb{E}[\mathbf{A}])^{k} \bm{e}]$ for $k = O(\log{n})$. In a slightly related vein, 
\cite{zongmingma} evaluates $\mathrm{tr}[\mathbb{E}[(\mathbf{A} - \mathbb{E}[\mathbf{A}])^{k}$ in the case of Erd\H{o}s-R\'{e}nyi graphs and two-blocks planted partition SBM graphs. 

Exact recovery of $\bm{\tau}$ via $\hat{\bm{\tau}}$ is therefore not sufficient to guarantee control of $\hat{\mathbf{B}}^{(S)}_{k \ell} - \mathbf{B}_{k \ell} = \bm{s}_k^{\top} (\hat{\bU} \hat{\bm{\Lambda}} \hat{\bU}^{\top} - \mathbb{E}[\mathbf{A}]) \bm{s}_{\ell}$. In essence, as $\rho_n \rightarrow 0$, the bias incurred by the low-rank approximation $
\hat{\bU} \hat{\bm{\Lambda}} \hat{\bU}^{\top}$ of $\mathbf{A}$ overwhelms the reduction in variance resulting from the low-rank approximation. 



\appendix
\section{Proof of Theorem~\ref{THM:GEN_D} and Theorem~\ref{THM:GEN_D_SPARSE}}
We first provide an outline of the main steps in the proof of Theorem~\ref{THM:GEN_D} and Theorem~\ref{THM:GEN_D_SPARSE}.
We derive Eq.~\eqref{eq:sbm_normal3} (and analogously Eq.~\eqref{eq:sbm_normal2}) by considering the following decomposition of $(\hat{\mathbf{B}}^{(S)}_{k\ell} - \mathbf{B}_{k \ell})$ 
\begin{equation}
\label{eq:decomp_Bhat-B}
\begin{split}
n \rho_n^{1/2}(\hat{\mathbf{B}}^{(S)}_{k\ell} - \mathbf{B}_{k \ell}) & = \tfrac{n \rho_n^{1/2}}{\hat{n}_k \hat{n}_{\ell} \rho} \hat{\bm{s}}_k^{\top} \hat{\mathbf{U}} \hat{\bm{\Lambda}} \hat{\mathbf{U}}^{\top} \hat{\bm{s}}_{\ell} - \tfrac{n \rho_n^{1/2}}{n_{k} n_{\ell} \rho} \bm{s}_k^{\top}\mathbb{E}[\mathbf{A}] \bm{s}_{\ell} \\ &=
\tfrac{n \rho_n^{-1/2}}{n_k n_{\ell}} \bm{s}_k^{\top} (\hat{\mathbf{U}} \hat{\bm{\Lambda}} \hat{\mathbf{U}}^{\top} - \mathbf{U} \mathbf{U}^{\top} \hat{\mathbf{U}}
\hat{\bLam} \hat{\bU}^{\top} \bU \bU^{\top}) \bm{s}_{\ell} \\ & +
\tfrac{n \rho_n^{-1/2}}{n_k n_{\ell}} \bm{s}_k^{\top} \bU (\bU^{\top} \hat{\bU} \hat{\bLam} - \bLam \bU^{\top} \hat{\bU}) 
\hat{\bU}^{\top} \bU \bU^{\top} \bm{s}_{\ell} \\
&+  \tfrac{n \rho_n^{-1/2}}{n_k n_{\ell}} \bm{s}_k^{\top} \bU \bLam (\bU^{\top} \hat{\bU} \hat{\bU}^{\top} \bU - \mathbf{I}) \bU^{\top} \bm{s}_{\ell}.
\end{split}
\end{equation}
Our proof proceeds by writing each term on the right hand side of Eq.~\eqref{eq:decomp_Bhat-B} as, when conditioned on $\mathbf{P}$, linear combinations of the 
independent random variables $\{\mathbf{A}_{ij} - \mathbf{P}_{ij}\}_{i \leq j}$ and residual terms of smaller order. More specifically, letting $\mathbf{E} = \mathbf{A} - \mathbf{P}$, $\bm{\Pi}_{\mathbf{U}} = \mathbf{U} \mathbf{U}^{\top}$, $\bm{\Pi}^{\perp}_{\mathbf{U}} = \mathbf{I} - \bm{\Pi}_{\mathbf{U}}$ and $\mathbf{P}^{\dagger} = \mathbf{U} \bm{\Lambda}^{-1} \mathbf{U}^{\top}$ the Moore-Penrose pseudoinverse of $\mathbf{P}$, we show that 
\begin{equation}
\begin{split}
\label{eq:xi_kl1}
\xi_{k \ell}^{(1)} & := \tfrac{n \rho_n^{-1/2}}{n_k n_{\ell}} \bm{s}_k^{\top} (\hat{\mathbf{U}} \hat{\bm{\Lambda}} \hat{\mathbf{U}}^{\top} - \mathbf{U} \mathbf{U}^{\top} \hat{\mathbf{U}}
\hat{\bLam} \hat{\bU}^{\top} \bU \bU^{\top}) \bm{s}_{\ell}
\\ &= \tfrac{n \rho_n^{-1/2}}{n_{k} n_{\ell}} \bm{s}_k^{\top} \bm{\Pi}_{\bU}^{\perp} \bigl(\mathbf{E} \bm{\Pi}_{\bU} + \mathbf{E}^2 \mathbf{P}^{\dagger} \bigr) \bm{s}_{\ell} \\ &+ \tfrac{n \rho_n^{-1/2}}{n_{k} n_{\ell}} \bm{s}_{\ell}^{\top} \bm{\Pi}_{\bU}^{\perp} \bigl(\mathbf{E} \bm{\Pi}_{\bU}  + \mathbf{E}^2 \mathbf{P}^{\dagger} \bigr) \bm{s}_{k} + O_{\mathbb{P}}(n^{-1/2} \rho_n^{-1}),
\end{split}
\end{equation}
\begin{equation}
\begin{split}
\label{eq:xi_kl3}
\xi_{k \ell}^{(3)} & := 
\tfrac{n \rho_n^{-1/2}}{n_k n_{\ell}} \bm{s}_k^{\top} \bU \bLam (\bU^{\top} \hat{\bU} \hat{\bU}^{\top} \bU - \mathbf{I}) \bU^{\top} \bm{s}_{\ell}.
\\ &= -\tfrac{n \rho_n^{-1/2}}{n_{k} n_{\ell}} \bm{s}_k^{\top} \bm{\Pi}_{\bU} \mathbf{E}^2 \mathbf{P}^{\dagger} \bm{s}_{\ell} + 
O_{\mathbb{P}}(n^{-1/2} \rho_n^{-1}),
\end{split}
\end{equation}
\begin{equation}
\begin{split}
\label{eq:xi_kl2}
\xi_{k \ell}^{(2)} & := 
\tfrac{n \rho_n^{-1/2}}{n_k n_{\ell}} \bm{s}_k^{\top} \bU (\bU^{\top} \hat{\bU} \hat{\bLam} - \bLam \bU^{\top} \hat{\bU}) 
\hat{\bU}^{\top} \bU \bU^{\top} \bm{s}_{\ell} \\ &=
\tfrac{n \rho_n^{-1/2}}{n_{k} n_{\ell}} \bm{s}_{k}^{\top} \bm{\Pi}_{\bU} \mathbf{E} \bm{\Pi}_{\bU} \bm{s}_{\ell} - \xi_{kl}^{(3)} + 
O_{\mathbb{P}}(n^{-1/2} \rho_n^{-1}).
\end{split}
\end{equation}
The above expressions for $\xi_{k \ell}^{(1)}, \xi_{kl}^{(2)}$ and $\xi_{kl}^{(3)}$ implies
\begin{equation}
\label{eq:decomp_Bhat-B3}
\begin{split}
\tfrac{n \rho_n^{1/2}}{n_{k} n_{\ell}} (\hat{\mathbf{B}}_{k\ell}^{(S)} - \mathbf{B}_{k \ell}) &= 
\tfrac{n \rho_n^{-1/2}}{n_{k} n_{\ell}} \bigl(\bm{s}_k^{\top} \mathbf{E} \bm{\Pi}_{\bU} \bm{s}_{\ell} + \bm{s}_{\ell}^{\top} \bm{\Pi}_{\bU}^{\perp} \mathbf{E} \bm{\Pi}_{\bU} \bm{s}_{k} \bigr) \\ &+
\tfrac{n \rho_n^{-1/2}}{n_{k} n_{\ell}} \bigl(\bm{s}_{k}^{\top} \bm{\Pi}_{\bU}^{\perp} \mathbf{E}^{2} \mathbf{P}^{\dagger} \bm{s}_{\ell} + \bm{s}_{\ell}^{\top} \bm{\Pi}_{\bU}^{\perp} \mathbf{E}^{2} \mathbf{P}^{\dagger} \bm{s}_{k} \bigr) \\ &
 + O_{\mathbb{P}}(n^{-1/2} \rho_n^{-1}).
\end{split}
\end{equation}
We complete the proof of by showing that 
\begin{equation}
\label{eq:Zkl}
\begin{split}
Z_{kl} &:= \tfrac{n \rho_n^{-1/2}}{n_{k} n_{\ell}} \Bigl( \bm{s}_k^{\top} \mathbf{E} \bm{\Pi}_{\bU} \bm{s}_{\ell} + 
\bm{s}_{\ell}^{\top} \bm{\Pi}_{\bU}^{\perp} \mathbf{E} \bm{\Pi}_{\bU} \bm{s}_{k} \Bigr) \\ &
= \tfrac{n \rho_n^{-1/2}}{n_{k} n_{\ell}} \mathrm{tr} \,\, \mathbf{E} (\bm{\Pi}_{\bU} \bm{s}_{\ell} \bm{s}_k^{\top} + \bm{\Pi}_{\bU} \bm{s}_{k} \bm{s}_{\ell}^{\top} \bm{\Pi}_{\bU}^{\perp} ) \\
&= \tfrac{n \rho_n^{-1/2}}{n_{k} n_{\ell}} \mathrm{tr} \,\, \mathbf{E} (\bm{\Pi}_{\bU} \bm{s}_{\ell} \bm{s}_k^{\top} + \bm{\Pi}_{\bU} \bm{s}_{k} \bm{s}_{\ell}^{\top}  - \bm{\Pi}_{\bU} \bm{s}_{k} \bm{s}_{\ell}^{\top} \bm{\Pi}_{\bU}) 
\end{split}
\end{equation}
converges to a normally distributed random variable, and that
\begin{equation} 
\label{eq:aux_theta_kl1}
\tfrac{n}{n_{k} n_{\ell}} \bigl(\bm{s}_{k}^{\top} \bm{\Pi}_{\bU}^{\perp} \mathbf{E}^{2} \mathbf{P}^{\dagger}  + \bm{s}_{\ell}^{\top} \bm{\Pi}_{\bU}^{\perp} 
\mathbf{E}^{2} \mathbf{P}^{\dagger} \bm{s}_{k} \bigr) \overset{\mathrm{a.s.}}{\longrightarrow} \begin{cases} \theta_{k \ell} & \text{if $\rho_n \equiv 1$} \\ \tilde{\theta}_{k \ell} & \text{if $\rho_n \rightarrow 0$} 
\end{cases}
\end{equation}
as $n \rightarrow \infty$. Note the difference in scaling for the convergence of $Z_{kl}$ (scaling by $\tfrac{n \rho_n^{-1/2}}{n_{k} n_{\ell}}$) and the scaling in Eq.~\eqref{eq:aux_theta_kl1} (scaling by $\tfrac{n}{n_{k} n_{\ell}}$).

We now provide the necessary details for the proof sketch outlined above. 
We shall repeatedly make use of the following concentration bounds for $\|\mathbf{A} - \mathbf{P}\|$ and related quantities. We consolidated these bounds in the following lemma. 
\begin{lemma}
\label{lem:A-P}
Let $\mathbf{A} \sim \mathrm{GRDPG}_{p,q}(F)$ be a generalized random dot product graph on $n$ vertices with sparsity factor $\rho_n$. 
Suppose $n \rho_n = \omega(\log^{4}(n))$. Then
  \begin{gather}
  \label{eq:A-P}
  \|\mathbf{A} - \mathbf{P}\| = O_{\mathbb{P}}((n \rho_n)^{1/2}) \\
  \label{eq:UU-UhatUhat}
  \|\bU \bU^{\top} - \hat{\bU} \hat{\bU}^{\top} \| = 
  O_{\mathbb{P}}((n \rho_n)^{-1/2}) \\
  \label{eq:I-UU_Uhat}
  \|(\mathbf{I} - \bU \bU^{\top}) \hat{\bU}\| = O_{\mathbb{P}}((n \rho_n)^{-1/2}).
  \end{gather}
  In addition, there exists an orthogonal matrix $\mathbf{W}$ such that
  \begin{equation}
  \label{eq:UTU_hat-W}
  \|\bU^{\top} \hat{\bU} - \mathbf{W} \| = O_{\mathbb{P}}((n \rho_n)^{-1})
  \end{equation}
\end{lemma}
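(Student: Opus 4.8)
The plan is to establish the four bounds in sequence: the first three follow from a standard concentration estimate combined with a Davis--Kahan argument, while the fourth (the delicate one) is obtained by a second-order refinement of the subspace bound. Throughout, write $\bE = \bA - \bP$ where $\bP = \rho_n \bX \mathbf{I}_{p,q} \bX^{\top}$, and recall $\bm{\Pi}_{\bU} = \bU\bU^{\top}$, $\bm{\Pi}_{\bU}^{\perp} = \mathbf{I} - \bm{\Pi}_{\bU}$.

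First I would establish \eqref{eq:A-P}. The entries $\{\bE_{ij}\}_{i \leq j}$ are independent, mean-zero, bounded by $1$ in modulus, with variance at most $\rho_n$; under $n\rho_n = \omega(\log^{4} n)$ (indeed $\omega(\log n)$ suffices) the matrix concentration results of \cite{oliveira2009concentration,rinaldo_2013} give $\|\bE\| = O_{\mathbb{P}}((n\rho_n)^{1/2})$. The complementary ingredient is the scale of the nonzero spectrum of $\bP$. Since the rows of $\bX$ are i.i.d.\ from $F$, the strong law gives $n^{-1}\bX^{\top}\bX \to \Delta$ almost surely with $\Delta$ nonsingular, so the $d$ nonzero eigenvalues of $\bP$ (equivalently of $\rho_n \mathbf{I}_{p,q}\bX^{\top}\bX$) are each of order $\Theta(n\rho_n)$, while the remaining eigenvalues of $\bP$ are exactly $0$. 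Hence the gap separating the leading $d$ eigenvalues (in modulus) of $\bP$ from the rest is $\delta = \Theta(n\rho_n)$; by Weyl's inequality together with \eqref{eq:A-P}, the same separation persists for the top $d$ eigenvalues of $\bA$ up to an $O_{\mathbb{P}}((n\rho_n)^{1/2})$ perturbation, which guarantees that $\hU$ selects the subspace matching $\bU$.

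Next come \eqref{eq:UU-UhatUhat} and \eqref{eq:I-UU_Uhat}. Applying the Davis--Kahan $\sin\Theta$ theorem to the symmetric pair $(\bA,\bP)$ with perturbation $\|\bE\| = O_{\mathbb{P}}((n\rho_n)^{1/2})$ and gap $\delta = \Theta(n\rho_n)$ yields $\|\bU\bU^{\top} - \hU\hU^{\top}\| \lesssim \|\bE\|/\delta = O_{\mathbb{P}}((n\rho_n)^{-1/2})$, which is \eqref{eq:UU-UhatUhat}. For \eqref{eq:I-UU_Uhat} I would use the identity $\bm{\Pi}_{\bU}^{\perp}\hU = \bm{\Pi}_{\bU}^{\perp}(\hU\hU^{\top} - \bU\bU^{\top})\hU$, valid because $\bm{\Pi}_{\bU}^{\perp}\bU = \mathbf{0}$ and $\hU^{\top}\hU = \mathbf{I}$; taking spectral norms and using $\|\bm{\Pi}_{\bU}^{\perp}\| = \|\hU\| = 1$ gives $\|\bm{\Pi}_{\bU}^{\perp}\hU\| \leq \|\hU\hU^{\top} - \bU\bU^{\top}\| = O_{\mathbb{P}}((n\rho_n)^{-1/2})$, which is exactly \eqref{eq:I-UU_Uhat}.

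The main obstacle is \eqref{eq:UTU_hat-W}, which demands the sharper rate $(n\rho_n)^{-1}$ and therefore cannot follow directly from the first-order subspace bounds. The key observation is that $\bM := \bU^{\top}\hU$ is \emph{almost orthogonal to second order}: using $\bU\bU^{\top} = \mathbf{I} - \bm{\Pi}_{\bU}^{\perp}$ one computes
\begin{equation*}
\bM^{\top}\bM = \hU^{\top}\bU\bU^{\top}\hU = \mathbf{I} - \hU^{\top}\bm{\Pi}_{\bU}^{\perp}\hU,
\end{equation*}
and since $\bm{\Pi}_{\bU}^{\perp}$ is a symmetric projection, $\|\hU^{\top}\bm{\Pi}_{\bU}^{\perp}\hU\| = \|\bm{\Pi}_{\bU}^{\perp}\hU\|^{2} = O_{\mathbb{P}}((n\rho_n)^{-1})$ by \eqref{eq:I-UU_Uhat}. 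Thus $\bM^{\top}\bM = \mathbf{I} + \bR$ with $\bR$ symmetric and $\|\bR\| = O_{\mathbb{P}}((n\rho_n)^{-1})$. Taking $\bW$ to be the orthogonal polar factor of $\bM$, so that $\bM = \bW(\bM^{\top}\bM)^{1/2}$, gives $\|\bM - \bW\| = \|(\mathbf{I}+\bR)^{1/2} - \mathbf{I}\|$, and the elementary scalar estimate $|\sqrt{1+\lambda}-1| \leq |\lambda|$ (valid once $\|\bR\| < 1$, which holds with high probability) yields $\|(\mathbf{I}+\bR)^{1/2} - \mathbf{I}\| \leq \|\bR\| = O_{\mathbb{P}}((n\rho_n)^{-1})$, establishing \eqref{eq:UTU_hat-W}. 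The crux is precisely this squaring phenomenon — converting the $(n\rho_n)^{-1/2}$ rate for $\bm{\Pi}_{\bU}^{\perp}\hU$ into an $(n\rho_n)^{-1}$ rate for the deviation of $\bU^{\top}\hU$ from orthogonality — and everything else is routine.
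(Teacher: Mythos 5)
Your proposal is correct and follows essentially the same route as the paper: a cited concentration bound for $\|\mathbf{A}-\mathbf{P}\|$, Davis--Kahan for the two subspace bounds, and for $\|\bU^{\top}\hU - \mathbf{W}\|$ the same choice of $\mathbf{W}$ (your polar factor of $\bU^{\top}\hU$ is exactly the paper's $\mathbf{W}_1\mathbf{W}_2^{\top}$ from its SVD) exploited via the same squaring phenomenon, since your bound $\|\bM^{\top}\bM - \mathbf{I}\| = \|\bm{\Pi}_{\bU}^{\perp}\hU\|^2$ and the paper's bound $1-\sigma_i \leq 1 - \sigma_i^2 = \sin^2(\theta_i)$ are two phrasings of the identical second-order estimate. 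The only differences are cosmetic (polar decomposition and the matrix square-root inequality in place of principal angles, plus a more explicit derivation of \eqref{eq:I-UU_Uhat}), so no further comparison is needed.
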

The bound for $\|\mathbf{A} - \mathbf{P}\|$ in Eq.~\eqref{eq:A-P} is due to \cite{lu13:_spect}. For ease of exposition, we have stated Eq.~\eqref{eq:A-P} in the context of generalized random dot product graph and hence the upper bound is given in terms of the factor $n \rho_n$; the original bound holds for the more general inhomogeneous random graphs model where the upper bound is now given in terms of $\sqrt{\delta}$ where $\delta = \max_{i} \sum_{j} \mathbf{P}_{ij}$ is the maximum expected degree.
Similar upper bounds can be found in \cite{oliveira2009concentration,tropp,rinaldo_2013} with slightly different assumptions on $\mathbf{P}$. The bound for $\|\bU \bU^{\top} - \hat{\bU} \hat{\bU}^{\top}\|$ and $\|(\mathbf{I} - \bU \bU^{\top}) \hat{\bU}\|$ then follows from Eq.~\eqref{eq:A-P} and the Davis-Kahan theorem \cite{davis70,stewart90:_matrix,samworth}. Eq.~\eqref{eq:UTU_hat-W} follows from Eq.~\eqref{eq:UU-UhatUhat} via the following argument.
 Let $\sigma_1, \sigma_2, \dots, \sigma_d$ denote the singular values of
  $\bU^{\top} \hat{\bU}$. Then $\sigma_i = \cos(\theta_i)$ where
  the $\theta_i$ are the principal angles between the subspaces
  spanned by $\bU^{\top} \hat{\bU}$. Eq.~\eqref{eq:UU-UhatUhat} implies 
  \begin{equation*}
    \| \bU \bU^{\top} - \hat{\bU} \hat{\bU}^{\top} \| =
    \max_{i} | \sin(\theta_i) | = O_{\mathbb{P}}((n \rho_n)^{-1/2}).
  \end{equation*}
  Let $\mathbf{W}_1 \bm{\Sigma} \mathbf{W}_2^{\top}$ be the singular value decomposition of $\bU^{\top} \hat{\bU}$ and let $\mathbf{W} = \mathbf{W}_1 \mathbf{W}_2^{\top}$. We then have
  \begin{equation*}
    \begin{split}
    \|\bU^{\top} \hat{\bU} - \mathbf{W} \|_{F} = \|\bm{\Sigma} -
    \mathbf{I} \|_{F} &= \Bigl(\sum_{i=1}^{d} (1 - \sigma_i)^2\Bigr)^{1/2}  \leq \sum_{i=1}^{d} (1 -
    \sigma_i^{2}) = \sum_{i=1}^{d} \sin^{2}(\theta_i).
  \end{split}
  \end{equation*}
  Hence $\|\bU^{\top} \hat{\bU} - \mathbf{W} \|_{F}
     = O_{\mathbb{P}}((n \rho_n)^{-1})
  $ as desired.

We shall also repeatedly make use of a von-Neumann expansion for $\hat{\bU}$. More specifically, from $\mathbf{A} \hat{\bU} = \hat{\bU} \bm{\Lambda}$, we have
$$ \hat{\bU} \hat{\bm{\Lambda}} - (\mathbf{A} - \mathbf{P}) \hat{\bU} = \mathbf{P} \hat{\bU} $$
which is a matrix Sylvester equation. The spectrum of $\hat{\bm{\Lambda}}$ and the spectrum of  $\mathbf{A} - \mathbf{P}$ are disjoint with high probability and hence Theorem~VII.2.1 and Theorem~VII.2.2 in \cite{bhatia} implies
\begin{equation}
\label{eq:Uhat_Sylvester}
 \hat{\mathbf{U}} = \sum_{k=0}^{\infty} (\mathbf{A} - \mathbf{P})^{k} \mathbf{P} \hat{\mathbf{U}} \hat{\bm{\Lambda}}^{-(k+1)} = 
\sum_{k=0}^{\infty} (\mathbf{A} - \mathbf{P})^{k} \mathbf{U} \bm{\Lambda} \bU^{\top} \hat{\mathbf{U}} \hat{\bm{\Lambda}}^{-(k+1)}.
 \end{equation}
 with high probability. Eq.~\eqref{eq:Uhat_Sylvester} also implies
\begin{equation}
\label{eq:Uhat_Sylvester2}
 \bm{\Pi}^{\perp}_{\bU} \hat{\mathbf{U}} = 
\bm{\Pi}^{\perp}_{\bU} \sum_{k=1}^{\infty} (\mathbf{A} - \mathbf{P})^{k} \mathbf{U} \bm{\Lambda} \bU^{\top} \hat{\mathbf{U}} \hat{\bm{\Lambda}}^{-(k+1)}.
 \end{equation}
 Several key steps in our proof of Theorem~\ref{THM:GEN_D} and Theorem~\ref{THM:GEN_D_SPARSE} proceed by using Lemma~\ref{lem:A-P} to truncate the series expansions in Eq.~\eqref{eq:Uhat_Sylvester} and Eq.~\eqref{eq:Uhat_Sylvester2}. More specifically, we have the following result.
\begin{lemma}
\label{lem:von_Neuman_hatU}
Let $\mathbf{A} \sim \mathrm{GRDPG}_{p,q}(F)$ be a generalized random dot product graph on $n$ vertices with sparsity factor $\rho_n$. Then with $\mathbf{E} = \mathbf{A} - \mathbf{P}$, we have
\begin{equation}
\label{eq:approximation_transpose1}
\begin{split}
\bU^{\top} \hat{\bU} \hat{\bm{\Lambda}} - \bm{\Lambda} \bU^{\top} \hat{\bU} & = \bU^{\top} \mathbf{A} \hat{\bU} - \bU^{\top} \bP \hat{\bU} = \bU^{\top} \Bigl(\sum_{k=1}^{\infty} \mathbf{E}^{k} \bU \bm{\Lambda} \bU^{\top} \hat{\bU} \hat{\bm{\Lambda}}^{-k} \Bigr) \\
&= \bU^{\top} \mathbf{E} \bU \bm{\Lambda} \bU^{\top} \hat{\bU} \hat{\bm{\Lambda}}^{-1} + \bU^{\top} \mathbf{E}^{2} \bU \bm{\Lambda} \bU^{\top} \hat{\bU} \hat{\bm{\Lambda}}^{-2} + O_{\mathbb{P}}((n \rho_n)^{-1/2}) \\
&= \bU^{\top} \mathbf{E} \bU \bU^{\top} \hat{\bU} + \bU^{\top} \mathbf{E}^{2} \bU \bm{\Lambda}^{-1} \bU^{\top} \hat{\bU} + O_{\mathbb{P}}((n \rho_n)^{-1/2})
\\ &= O_{\mathbb{P}}(1).
\end{split}
\end{equation}
\begin{gather}
\label{eq:approximate_transpose2}
\bU^{\top} \hat{\bU} \hat{\bm{\Lambda}}^{-1} - \bm{\Lambda}^{-1} \bU^{\top} \hat{\bU} = O_{\mathbb{P}}((n \rho_n)^{-2}) \\
\label{eq:approximate_transpose3}
\bU^{\top} \hat{\bU} \hat{\bm{\Lambda}}^{-2} - \bm{\Lambda}^{-2} \bU^{\top} \hat{\bU} = O_{\mathbb{P}}((n \rho_n)^{-3}). 
\end{gather}
In addition, we also have
\begin{equation}
\begin{split}
\label{eq:I-UUT_Uhat1}
\bm{\Pi}_{\bU}^{\perp} \hat{\bU} & = \bm{\Pi}^{\perp}_{\bU} \mathbf{E} \bU \bm{\Lambda} \bU^{\top} \hat{\bU} \hat{\bm{\Lambda}}^{-2} + O_{\mathbb{P}}((n \rho_n)^{-1}) \\
&= \bm{\Pi}^{\perp}_{\bU} \mathbf{E} \bU \bm{\Lambda}^{-1} \bU^{\top} \hat{\bU} + O_{\mathbb{P}}((n \rho_n)^{-1}) \\
&= \mathbf{E} \mathbf{U} \bm{\Lambda}^{-1} \mathbf{U}^{\top} \hat{\bU} + O_{\mathbb{P}}((n \rho_n)^{-1}),
\end{split}
\end{equation}
\begin{equation}
\label{eq:I-UUT_Uhat2}
\begin{split}
\bm{\Pi}_{\bU}^{\perp} \hat{\bU} \hat{\bm{\Lambda}} & = \bm{\Pi}^{\perp}_{\bU} \mathbf{E} \bigl( \bU \bm{\Lambda} \bU^{\top} \hat{\bU} \hat{\bm{\Lambda}}^{-1} + \mathbf{E} \bU \bm{\Lambda} \bU^{\top} \hat{\bU} \hat{\bm{\Lambda}}^{-2} \bigr) + 
O_{\mathbb{P}}((n \rho_n)^{-1/2}) \\
&= \bm{\Pi}^{\perp}_{\bU} \mathbf{E} \bm{\Pi}_{\bU} \hat{\bU} + \bm{\Pi}^{\perp}_{\bU} \mathbf{E}^{2} \bU \bm{\Lambda}^{-1} \bU^{\top} \hat{\bU} + O_{\mathbb{P}}((n \rho_n)^{-1/2}).
\end{split}
\end{equation}
\end{lemma}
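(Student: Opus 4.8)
The plan is to treat the five displayed identities as successive consequences of the Sylvester series in Eq.~\eqref{eq:Uhat_Sylvester} and Eq.~\eqref{eq:Uhat_Sylvester2}, truncated using the operator-norm bounds of Lemma~\ref{lem:A-P} together with two auxiliary estimates that I would record first. The first is spectral: since $\mathbf{P} = \bU\bm{\Lambda}\bU^{\top}$ has its $d$ nonzero eigenvalues of order $n\rho_n$, one has $\|\bm{\Lambda}\| = O(n\rho_n)$ and $\|\bm{\Lambda}^{-1}\| = O((n\rho_n)^{-1})$; Weyl's inequality combined with Eq.~\eqref{eq:A-P} then gives $\|\hat{\bm{\Lambda}}^{-1}\| = O_{\mathbb{P}}((n\rho_n)^{-1})$, so every factor $\hat{\bm{\Lambda}}^{-j}$ contributes $(n\rho_n)^{-j}$. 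The second is a refinement of the crude bound $\|\bU^{\top}\mathbf{E}\| \le \|\mathbf{E}\| = O_{\mathbb{P}}((n\rho_n)^{1/2})$: because the columns of $\bU$ have entries of size $O(n^{-1/2})$ and the $\mathbf{E}_{ij}$ are independent, mean-zero, with variance $O(\rho_n)$, a Bernstein/Hanson--Wright bound for the quadratic forms gives the sharper estimate $\bU^{\top}\mathbf{E}\bU = O_{\mathbb{P}}(\rho_n^{1/2})$, while $\|\mathbf{E}\bU\| = O_{\mathbb{P}}((n\rho_n)^{1/2})$.

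With these in hand I would prove Eq.~\eqref{eq:approximation_transpose1} as follows. The first equality is the algebraic identity obtained from $\mathbf{A}\hat{\bU} = \hat{\bU}\hat{\bm{\Lambda}}$ and $\mathbf{P} = \bU\bm{\Lambda}\bU^{\top}$, giving $\bU^{\top}\hat{\bU}\hat{\bm{\Lambda}} - \bm{\Lambda}\bU^{\top}\hat{\bU} = \bU^{\top}\mathbf{E}\hat{\bU}$; substituting Eq.~\eqref{eq:Uhat_Sylvester} and shifting the summation index yields the stated series. Truncating after $k=2$ costs a tail whose $k$-th term is bounded by $\|\mathbf{E}\|^{k}\|\bm{\Lambda}\|\|\hat{\bm{\Lambda}}^{-k}\| = O_{\mathbb{P}}((n\rho_n)^{1-k/2})$, a convergent geometric series of ratio $O_{\mathbb{P}}((n\rho_n)^{-1/2})$ whose $k\ge 3$ part is $O_{\mathbb{P}}((n\rho_n)^{-1/2})$. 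I would then establish the remainder bound $R := \bU^{\top}\mathbf{E}\hat{\bU} = O_{\mathbb{P}}(1)$ directly by splitting $\hat{\bU} = \bm{\Pi}_{\bU}\hat{\bU} + \bm{\Pi}_{\bU}^{\perp}\hat{\bU}$ and using $\bU^{\top}\mathbf{E}\bU = O_{\mathbb{P}}(\rho_n^{1/2})$ on the first piece and $\|\bm{\Pi}_{\bU}^{\perp}\hat{\bU}\| = O_{\mathbb{P}}((n\rho_n)^{-1/2})$ from Eq.~\eqref{eq:I-UU_Uhat} on the second. This remainder bound feeds the commutator identities Eq.~\eqref{eq:approximate_transpose2} and Eq.~\eqref{eq:approximate_transpose3}: from $R = \bU^{\top}\hat{\bU}\hat{\bm{\Lambda}} - \bm{\Lambda}\bU^{\top}\hat{\bU}$ one gets $\bU^{\top}\hat{\bU}\hat{\bm{\Lambda}}^{-1} - \bm{\Lambda}^{-1}\bU^{\top}\hat{\bU} = -\bm{\Lambda}^{-1}R\hat{\bm{\Lambda}}^{-1} = O_{\mathbb{P}}((n\rho_n)^{-2})$, and iterating once more produces the $(n\rho_n)^{-3}$ bound for $\hat{\bm{\Lambda}}^{-2}$. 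Feeding Eq.~\eqref{eq:approximate_transpose2} and Eq.~\eqref{eq:approximate_transpose3} back into the two retained terms converts $\bm{\Lambda}\bU^{\top}\hat{\bU}\hat{\bm{\Lambda}}^{-1}$ into $\bU^{\top}\hat{\bU}$ and $\bm{\Lambda}\bU^{\top}\hat{\bU}\hat{\bm{\Lambda}}^{-2}$ into $\bm{\Lambda}^{-1}\bU^{\top}\hat{\bU}$, yielding the third line; the final order $O_{\mathbb{P}}(1)$ is read off from $\bU^{\top}\mathbf{E}\bU = O_{\mathbb{P}}(\rho_n^{1/2})$ and $\|\bU^{\top}\mathbf{E}^{2}\bU\|\,\|\bm{\Lambda}^{-1}\| = O_{\mathbb{P}}(1)$.

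The two remaining identities are handled identically but starting from the projected series Eq.~\eqref{eq:Uhat_Sylvester2}. For Eq.~\eqref{eq:I-UUT_Uhat1} I retain only the $k=1$ term $\bm{\Pi}_{\bU}^{\perp}\mathbf{E}\bU\bm{\Lambda}\bU^{\top}\hat{\bU}\hat{\bm{\Lambda}}^{-2}$, the $k\ge 2$ tail being $O_{\mathbb{P}}((n\rho_n)^{-1})$; then Eq.~\eqref{eq:approximate_transpose3} replaces $\bm{\Lambda}\bU^{\top}\hat{\bU}\hat{\bm{\Lambda}}^{-2}$ by $\bm{\Lambda}^{-1}\bU^{\top}\hat{\bU}$, and finally $\bm{\Pi}_{\bU}^{\perp} = \mathbf{I} - \bm{\Pi}_{\bU}$ is used to drop the projector. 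For Eq.~\eqref{eq:I-UUT_Uhat2} I retain the $k=1,2$ terms of the $\hat{\bm{\Lambda}}$-multiplied series (tail $O_{\mathbb{P}}((n\rho_n)^{-1/2})$) and apply Eq.~\eqref{eq:approximate_transpose2} and Eq.~\eqref{eq:approximate_transpose3} to simplify the surviving $\hat{\bm{\Lambda}}$-powers. The one step that requires genuine care, and which I expect to be the main obstacle, is the passage to the third line of Eq.~\eqref{eq:I-UUT_Uhat1}: dropping $\bm{\Pi}_{\bU}$ leaves the residual $\bm{\Pi}_{\bU}\mathbf{E}\bU\bm{\Lambda}^{-1}\bU^{\top}\hat{\bU}$, and the crude operator-norm bound only makes this $O_{\mathbb{P}}((n\rho_n)^{-1/2})$, which is too large, so it is essential to invoke the refined quadratic-form estimate $\bU^{\top}\mathbf{E}\bU = O_{\mathbb{P}}(\rho_n^{1/2})$, which upgrades the residual to $O_{\mathbb{P}}(n^{-1}\rho_n^{-1/2}) \le O_{\mathbb{P}}((n\rho_n)^{-1})$. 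Consequently the bulk of the careful work lies in establishing the sharp $\rho_n^{1/2}$-scaling of the $d\times d$ quadratic forms $\bU^{\top}\mathbf{E}\bU$, and where needed the analogous control of $\bU^{\top}\mathbf{E}^{2}\bU$, rather than in the algebraic truncations, which are routine once the spectral factors are tracked.
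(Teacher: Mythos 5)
Your proposal is correct and follows essentially the same route as the paper's proof: the algebraic identity $\bU^{\top}\hat{\bU}\hat{\bm{\Lambda}} - \bm{\Lambda}\bU^{\top}\hat{\bU} = \bU^{\top}\mathbf{E}\hat{\bU}$, truncation of the Sylvester series Eq.~\eqref{eq:Uhat_Sylvester}--\eqref{eq:Uhat_Sylvester2} via the spectral bounds of Lemma~\ref{lem:A-P}, entrywise concentration for the $d\times d$ matrix $\bU^{\top}\mathbf{E}\bU$, and commutator bounds to trade powers of $\hat{\bm{\Lambda}}$ for powers of $\bm{\Lambda}$. Your operator identity $\bU^{\top}\hat{\bU}\hat{\bm{\Lambda}}^{-1} - \bm{\Lambda}^{-1}\bU^{\top}\hat{\bU} = -\bm{\Lambda}^{-1}\bigl(\bU^{\top}\hat{\bU}\hat{\bm{\Lambda}} - \bm{\Lambda}\bU^{\top}\hat{\bU}\bigr)\hat{\bm{\Lambda}}^{-1}$ is exactly the paper's Hadamard-product computation written multiplicatively --- the two coincide entrywise because $\bm{\Lambda}$ and $\hat{\bm{\Lambda}}$ are diagonal --- so that difference is cosmetic, and your iteration to get the $(n\rho_n)^{-3}$ rate for Eq.~\eqref{eq:approximate_transpose3} matches the paper's second Hadamard step.

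One claim should be corrected: you assert that the refined Bernstein/Hanson--Wright estimate $\bU^{\top}\mathbf{E}\bU = O_{\mathbb{P}}(\rho_n^{1/2})$ is \emph{essential} for dropping the projector in the last line of Eq.~\eqref{eq:I-UUT_Uhat1}, on the grounds that the crude bound yields only $O_{\mathbb{P}}((n\rho_n)^{-1/2})$. The crude bound you have in mind is $\|\bm{\Pi}_{\bU}\mathbf{E}\bU\| \leq \|\mathbf{E}\| = O_{\mathbb{P}}((n\rho_n)^{1/2})$, but the residual being discarded is $\bU\bU^{\top}\mathbf{E}\bU\bm{\Lambda}^{-1}\bU^{\top}\hat{\bU}$, and the paper's Hoeffding bound $\|\bU^{\top}\mathbf{E}\bU\| = O_{\mathbb{P}}(1)$ --- with no $\rho_n$ scaling at all --- already gives $O_{\mathbb{P}}(1)\cdot O_{\mathbb{P}}((n\rho_n)^{-1}) = O_{\mathbb{P}}((n\rho_n)^{-1})$, which sits inside the stated error budget. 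What matters is only that one avoids the global spectral norm $\|\mathbf{E}\|$ in favor of \emph{some} concentration bound on the $d\times d$ quadratic form; the sharper $\rho_n^{1/2}$ scaling is true (up to a $\sqrt{\log n}$ factor under the paper's high-probability definition of $O_{\mathbb{P}}$, a looseness the paper's own $O_{\mathbb{P}}(1)$ claim shares) but is never needed, and the paper never uses it. So the ``main obstacle'' you identify is not where the work lies; the proof goes through with the simpler Hoeffding estimate throughout.
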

\begin{proof}
We first derive parts of Eq.~\eqref{eq:approximation_transpose1}. From Lemma~\ref{lem:A-P}, we obtain
\begin{equation*}
\begin{split}
 \Bigl \| \sum_{k=3}^{\infty} \mathbf{E}^{k} \bU \bm{\Lambda} \bU^{\top} \hat{\bm{\Lambda}}^{-k} \Bigr \| & \leq \sum_{k=3}^{\infty} \|\mathbf{E}^{k} \| \times \|\bm{\Lambda} \| \times
 \|\hat{\bm{\Lambda}}^{-k}\| \\ & \leq \sum_{k=3}^{\infty} O_{\mathbb{P}} (n \rho_n)^{-(k-1)/2}) = O_{\mathbb{P}}((n \rho_n)^{-1/2}).
 \end{split}
 \end{equation*}
 and hence
 \begin{equation} 
 \label{eq:approximation_transpose1b}
 \begin{split} \bU^{\top} \hat{\bU} \hat{\bm{\Lambda}} - \bm{\Lambda} \bU^{\top} \hat{\bU} = \bU^{\top} \mathbf{E} \bU \bm{\Lambda} \bU^{\top} \hat{\bU} \hat{\bm{\Lambda}}^{-1} & + \bU^{\top} \mathbf{E}^{2} \bU \bm{\Lambda} \bU^{\top} \hat{\bU} \hat{\bm{\Lambda}}^{-2} \\ &+ O_{\mathbb{P}}((n \rho_n)^{-1/2}). 
 \end{split}
 \end{equation}
 Let $\bm{u}_i$ denote the $i$-th column of $\bU$. 
 We note that $\bU^{\top} \mathbf{E} \bU$ is a $d \times d$ matrix whose $ij$-th entry can be written as $\bm{u}_i^{\top} \mathbf{E} \bm{u}_j$. Now, conditioned on $\mathbf{P}$, $\bm{u}_i^{\top} \mathbf{E} \bm{u}_j$ is a sum of independent mean $0$ random variables, and hence, by Hoeffding's inequality, $\bm{u}_i^{\top} \mathbf{E} \bm{u}_j = O_{\mathbb{P}}(1)$. A union bound over the $d(d+1)/2$ upper triangular entries of $\bU^{\top} \mathbf{E} \bU$ then yield
 $ \|\bU^{\top} \mathbf{E} \bU \| = O_{\mathbb{P}}(1)$.
We therefore have
 \begin{equation*}
 \begin{split}
 \|\bU^{\top} \hat{\bU} \hat{\bm{\Lambda}} - \bm{\Lambda} \bU^{\top} \hat{\bU} \| & \leq \|\bU^{\top} \mathbf{E} \bU \| \times \|\bm{\Lambda} \| \times \|\hat{\bm{\Lambda}}\|^{-1} + \|\mathbf{E}^{2} \| \times \|\bm{\Lambda} \| \times \|\hat{\bm{\Lambda}}\|^{-2} \\ &+ O_{\mathbb{P}}((n \rho_n)^{-1/2}) 
 = O_{\mathbb{P}}(1).
 \end{split}
 \end{equation*}
 We next show Eq.~\eqref{eq:approximate_transpose2}. Let $\omega_{ij}$ denote the $ij$-th entry of $\mathbf{U}^{\top} \hat{\mathbf{U}}$ and let $\hat{\lambda}_i$ and $\lambda_i$ denote the $i$-th diagonal element of $\hat{\bm{\Lambda}}$ and $\bm{\Lambda}$ respectively, i.e., $\hat{\lambda}_i$ and $\lambda_i$ are the $i$-th largest eigenvalue, in modulus, of $\mathbf{A}$ and $\mathbf{P}$. Then the $ij$- th entry of $\mathbf{U}^{\top} \hat{\mathbf{U}} \hat{\bm{\Lambda}}^{-1} - \bm{\Lambda}^{-1} \mathbf{U}^{\top} \hat{\mathbf{U}}$ can be written as
 $$ \omega_{ij} \bigl(\hat{\lambda}_j^{-1} - \lambda_i^{-1}) = \omega_{ij} \frac{\lambda_i - \hat{\lambda}_j}{\lambda_i \hat{\lambda}_j}.$$
 Therefore, letting $\mathbf{H}$ denote the $d \times d$ matrix whose $ij$-th entry is $\lambda_i^{-1} \hat{\lambda}_j^{-1}$, we have (with $\circ$ denoting the Hadamard product between matrices)
 $$ \mathbf{U}^{\top} \hat{\mathbf{U}} \hat{\bm{\Lambda}}^{-1} - \bm{\Lambda}^{-1} \mathbf{U}^{\top} \hat{\mathbf{U}} = (\bm{\Lambda} \bU^{\top} \hat{\bU} - \bU^{\top} \hat{\bU} \hat{\bm{\Lambda}}) \circ \mathbf{H} = O_{\mathbb{P}}((n \rho_n)^{-2}).$$
 Eq.~\eqref{eq:approximate_transpose3} is derived in an analogous manner. More specifically, let $\tilde{\mathbf{H}}$ denote the $d \times d$ matrix whose $ij$-th entry is $\hat{\lambda}_j^{-2} \lambda_i^{-2} (\hat{\lambda}_j + \lambda_i)$, we have
\begin{equation*}
 \mathbf{U}^{\top} \hat{\mathbf{U}} \hat{\bm{\Lambda}}^{-2} - \bm{\Lambda}^{-2} \mathbf{U}^{\top} \hat{\mathbf{U}} = (\bm{\Lambda} \bU^{\top} \hat{\bU} - \bU^{\top} \hat{\bU} \hat{\bm{\Lambda}}) \circ \tilde{\mathbf{H}} = O_{\mathbb{P}}((n \rho_n)^{-3}).
\end{equation*}
We then apply Eq.~\eqref{eq:approximate_transpose2} and Eq.~\eqref{eq:approximate_transpose3} to Eq.~\eqref{eq:approximation_transpose1b} and obtain another representation for $\bU^{\top} \hat{\bU} \hat{\bm{\Lambda}} - \bm{\Lambda} \bU^{\top} \hat{\bU}$, namely
$$ \bU^{\top} \hat{\bU} \hat{\bm{\Lambda}} - \bm{\Lambda} \bU^{\top} \hat{\bU} = \bU^{\top} \mathbf{E} \bU \bU^{\top} \hat{\bU} + \bU^{\top} \mathbf{E}^{2} \bU \bm{\Lambda}^{-1} \bU^{\top} \hat{\bU} + O_{\mathbb{P}}((n \rho_n)^{-1/2}).$$
Eq.~\eqref{eq:approximation_transpose1} is thereby established. Eq.~\eqref{eq:I-UUT_Uhat1} and Eq.~\eqref{eq:I-UUT_Uhat2} is derived in a similar manner to that of Eq.~\eqref{eq:approximation_transpose1}. 
\end{proof}

\subsection*{Deriving Eq.~\eqref{eq:xi_kl3} and Eq.~\eqref{eq:xi_kl2}}
 We start with the observation
\begin{equation*}
\begin{split}
\bU^{\top} \hat{\bU} \hat{\bU}^{\top} \bU - \mathbf{I} &= \bU^{\top} \hat{\bU} \mathbf{W}^{\top} \mathbf{W} \hat{\bU}^{\top} \bU - \mathbf{I} \\
&= - (\bU - \hat{\bU} \mathbf{W}^{\top})^{\top} (\bU - \hat{\bU} \mathbf{W}) + \mathbf{U}^{\top} (\bU - \hat{\bU} \mathbf{W}) (\bU - \hat{\bU} \mathbf{W})^{\top} \bU.
\end{split}
\end{equation*}
Now $\|\mathbf{U}^{\top} (\bU - \hat{\bU} \mathbf{W})\| = \|\mathbf{I} - \bm{\Sigma}\|$ where $\bm{\Sigma}$ is the diagonal matrix whose diagonal entries are the singular values of $\bU^{\top} \hat{\bU}$. Lemma~\ref{lem:A-P} then implies $\|\mathbf{U}^{\top} (\bU - \hat{\bU} \mathbf{W})\| = \mathbb{O}_{\mathbb{P}}((n \rho_n)^{-1})$ and hence 
\begin{equation}
\label{eq:form1}
\bU^{\top} \hat{\bU} \hat{\bU}^{\top} \bU - \mathbf{I} = - 
(\bU - \hat{\bU} \mathbf{W}^{\top})^{\top} (\bU - \hat{\bU} \mathbf{W}) + O_{\mathbb{P}}((n \rho_n)^{-2}).
\end{equation}
We recall the following bounds
\begin{gather}
\|s_k\| = \sqrt{n_k} = \Theta(\sqrt{n}); \quad \|s_{\ell}\| = \sqrt{n_{\ell}} = 
\Theta(\sqrt{n}) \\ n \rho_n = \omega(\sqrt{n}); \quad \|\bm{\Lambda}\| = O_{\mathbb{P}}(n \rho_n).
\end{gather}
Eq.~\eqref{eq:form1} and Lemma~\ref{lem:von_Neuman_hatU} then imply
\begin{equation*}
\label{eq:xi_kl}
\begin{split}
\xi_{k\ell}^{(3)} &= - \tfrac{n \rho_n^{-1/2}}{n_{k} n_{\ell}} \bm{s}_k^{\top} \bU \bLam (\bU - \hat{\bU} \mathbf{W}^{\top})^{\top} (\bU - \hat{\bU} \mathbf{W}) \bU^{\top} \bm{s}_{\ell} + O_{\mathbb{P}}(n^{-1} \rho_n^{-3/2}) \\
&= -\tfrac{n \rho_n^{-1/2}}{n_{k} n_{\ell}} \bm{s}_k^{\top} \bU \bU^{\top} (\mathbf{A} - \mathbf{P}) (\bU - \hat{\bU} \mathbf{W}) \bU^{\top} \bm{s}_{\ell} + O_{\mathbb{P}}(n^{-1/2} \rho_n) + O_{\mathbb{P}}(n^{-1} \rho_n^{-3/2}) \\
&= -\tfrac{n \rho_n^{-1/2}}{n_{k} n_{\ell}} \bm{s}_k^{\top} \bU \bU^{\top} (\mathbf{A} - \mathbf{P})^{2} \bU \bLam^{-1} \bU^{\top} \bm{s}_{\ell} + O_{\mathbb{P}}(n^{-1/2} \rho_n)
\end{split}
\end{equation*}
thereby establishing Eq.~\eqref{eq:xi_kl3}.

We next derive Eq.~\eqref{eq:xi_kl2}. We recall Eq.~\eqref{eq:approximation_transpose1} in Lemma~\ref{lem:von_Neuman_hatU}, namely that
 \begin{equation*}
 \begin{split} \mathbf{U}^{\top} \hat{\mathbf{U}} \hat{\bm{\Lambda}} - \bm{\Lambda} \mathbf{U}^{\top} \hat{\mathbf{U}} 
 &= \mathbf{U}^{\top} \mathbf{E} \mathbf{U} \mathbf{U}^{\top} \hat{\mathbf{U}}  + \mathbf{U}^{\top} \mathbf{E}^{2} \mathbf{U} \bm{\Lambda}^{-1} \mathbf{U}^{\top} \hat{\mathbf{U}} + O_{\mathbb{P}}((n \rho_n)^{-1/2}).
\end{split}
\end{equation*}
We therefore have, in conjunction with Eq.~\eqref{eq:form1}, that
$$ (\mathbf{U}^{\top} \hat{\mathbf{U}} \hat{\bm{\Lambda}} - \bm{\Lambda} \mathbf{U}^{\top} 
\hat{\mathbf{U}}) \hat{\bU}^{\top} \bU = \mathbf{U}^{\top} \mathbf{E} \mathbf{U} + \mathbf{U}^{\top} \mathbf{E}^2 \mathbf{U} \bm{\Lambda}^{-1} + O_{\mathbb{P}}((n \rho_n)^{-1/2}).$$
and hence
\begin{equation*}
\begin{split}
\xi_{k \ell}^{(2)} &= \tfrac{n \rho_n^{-1/2}}{n_{k} n_{\ell}} (\bm{s}_k^{\top} \bU (\bU^{\top} \hat{\bU} \hat{\bLam} - \bLam \bU^{\top} \hat{\bU}) 
\hat{\bU}^{\top} \bU \bU^{\top} \bm{s}_{\ell}) \\
&= \tfrac{n \rho_n^{-1/2}}{n_{k} n_{\ell}} s_k^{\top} \bU \Bigl(\mathbf{U}^{\top} \mathbf{E} \mathbf{U} + \mathbf{U}^{\top} \mathbf{E}^2 \mathbf{U} \bm{\Lambda}^{-1} + O_{\mathbb{P}}((n \rho_n)^{-1/2}) \Bigr) \bU^{\top} \bm{s}_{\ell} \\
&= \tfrac{n \rho_n^{-1/2}}{n_{k} n_{\ell}} s_k^{\top} \bigl(\bm{\Pi}_{\bU} \mathbf{E} \bm{\Pi}_{\bU} + \bm{\Pi}_{\bU} \mathbf{E}^2 \mathbf{P}^{\dagger} \bigr) \bm{s}_{\ell} + O_{\mathbb{P}}(n^{-1/2} \rho_n^{-1}) 
\\ &= \tfrac{n \rho_n^{-1/2}}{n_{k} n_{\ell}} s_k^{\top} \bm{\Pi}_{\bU} \mathbf{E} \bm{\Pi}_{\bU} \bm{s}_{\ell} - \xi_{kl}^{(3)} + O_{\mathbb{P}}(n^{-1/2} \rho_n^{-1}). 
\end{split}
\end{equation*}
as desired. 

\subsection*{Deriving Eq.~\eqref{eq:xi_kl1}}
We start with the decomposition
\begin{equation*}
\begin{split}
\hat{\mathbf{U}} \hat{\bm{\Lambda}} \hat{\mathbf{U}}^{\top} - \mathbf{U} \mathbf{U}^{\top} \hat{\mathbf{U}}
\hat{\bLam} \hat{\bU}^{\top} \bU \bU^{\top} & = \hat{\mathbf{U}} \hat{\bm{\Lambda}} \hat{\mathbf{U}}^{\top} - \bm{\Pi}_{\bU} \hat{\bU} \hat{\bLam} \hat{\bU}^{\top} \bm{\Pi}_{\bU} \\
&= \bm{\Pi}_{\bU}^{\perp} \hat{\bU} \hat{\bm{\Lambda}} \hat{\mathbf{U}}^{\top} \bm{\Pi}_{\bU}^{\perp} + 
\bm{\Pi}_{\bU}^{\perp} \hat{\bU} \hat{\bm{\Lambda}} \hat{\mathbf{U}}^{\top} \bm{\Pi}_{\bU} + \bm{\Pi}_{\bU} \hat{\bU} \hat{\bm{\Lambda}} \hat{\mathbf{U}}^{\top} \bm{\Pi}_{\bU}^{\perp}.
\end{split}
\end{equation*}
Now let $\omega_{kl}^{(1)} = \tfrac{n \rho_n^{-1/2}}{n_{k} n_{\ell}} \bm{s}_k^{\top} \bm{\Pi}_{\bU}^{\perp} \hat{\bU} \hat{\bm{\Lambda}} \hat{\bU}^{\top} \bm{\Pi}^{\perp} \bm{s}_{\ell}$. 
By Eq.~\eqref{eq:I-UUT_Uhat2} in Lemma~\ref{lem:von_Neuman_hatU}, we have
\begin{equation*}
\begin{split} 
\omega_{kl}^{(1)} &=
\tfrac{n \rho_n^{-1/2}}{n_{k} n_{\ell}} \bm{s}_k^{\top} \bm{\Pi}_{\bU}^{\perp} \hat{\bU} \hat{\bm{\Lambda}} \hat{\bU}^{\top} \bm{\Pi}_{\bU}^{\perp} \bm{s}_{\ell} \\
& = \tfrac{n \rho_n^{-1/2}}{n_{k} n_{\ell}} \bm{s}_k^{\top} \bigl(\bm{\Pi}^{\perp}_{\bU} \mathbf{E} \bm{\Pi}_{\bU} \hat{\bU} + \bm{\Pi}^{\perp}_{\bU} \mathbf{E}^{2} \bU \bm{\Lambda}^{-1} \bU^{\top} \hat{\bU} + O_{\mathbb{P}}((n \rho_n)^{-1/2})\bigr) \hat{\bU}^{\top} \bm{\Pi}_{\bU}^{\perp} \bm{s}_{\ell}
\end{split}
\end{equation*}
From Lemma~\ref{lem:A-P}, we have $\|\bm{\Pi}_{\bU}^{\perp} \hat{\bU}\| = O_{\mathbb{P}}((n \rho_n)^{-1/2})$, and hence
$$
\omega_{kl}^{(1)} = \tfrac{n \rho_n^{-1/2}}{n_{k} n_{\ell}}  \bm{s}_k^{\top} \bm{\Pi}_{\bU}^{\perp} 
\mathbf{E} \bU \bLam \bU^{\top} \hat{\bU} \hat{\bLam}^{-1} \hat{\bU}^{\top} \bm{\Pi}_{\bU}^{\perp} \bm{s}_{\ell} + O_{\mathbb{P}}(n^{-1/2} \rho_n^{-1})
$$
Now, conditional on $\mathbf{P}$, $\bm{s}_k^{\top} \bm{\Pi}_{\bU}^{\perp} 
(\bA - \bP) \bU$ is vector in $\mathbb{R}^{d}$ whose elements are sum of independent mean $0$ random variables. Therefore, by Hoeffding's inequality and the fact that $\|\bm{s}_k\| = \Theta(\sqrt{n})$ and $\|\bU\|_{F} = \sqrt{d}$, we have
$$ \bm{s}_k^{\top} \bm{\Pi}_{\bU}^{\perp} (\bA - \bP) \bU = O_{\mathbb{P}}(\sqrt{n}) $$
and thus
\begin{equation}
\label{eq:perp_perp} 
\begin{split}
|\omega_{kl}^{(1)}| &= 
|\tfrac{n \rho_n^{-1/2}}{n_{k} n_{\ell}} \bm{s}_k^{\top} \bm{\Pi}_{\bU}^{\perp} \hat{\bU} \hat{\bm{\Lambda}} \hat{\bU}^{\top} \bm{\Pi}^{\perp}_{\bU} \bm{s}_{\ell}|
\\& \leq \rho_n^{-1/2} \times \|\bLam\| \times \|\hat{\bLam}^{-1} \| \times \|\hat{\bU}^{\top} \bm{\Pi}_{\bU}^{\perp} \| + O_{\mathbb{P}}(n^{-1/2} \rho_n^{-1}) \\
&= O_{\mathbb{P}}(n^{-1/2} \rho_n^{-1}).
\end{split}
\end{equation}
Next let $\omega_{kl}^{(2)} := \tfrac{n \rho_n^{-1/2}}{n_{k} n_{\ell}} \bm{s}_k^{\top} \bm{\Pi}_{\bU}^{\perp} \hat{\bU} \hat{\bm{\Lambda}} \hat{\mathbf{U}}^{\top} \bm{\Pi}_{\bU} \bm{s}_{\ell}$. Once again Eq.~\eqref{eq:I-UUT_Uhat2} implies
\begin{equation*}
\begin{split}
 \omega_{kl}^{(2)} 
 &= \tfrac{n \rho_n^{-1/2}}{n_{k} n_{\ell}} \bm{s}_k^{\top} \bm{\Pi}_{\bU}^{\perp} \bigl(\mathbf{E} \bU \bLam \bU^{\top} \hat{\bm{U}}  \hat{\bLam}^{-1} + 
\mathbf{E}^{2} \bU \bLam \bU^{\top} \hat{\bm{U}}  \hat{\bLam}^{-2}  + O_{\mathbb{P}}((n \rho_n)^{-1/2}) \bigr) \hat{\bU} ^{\top} \bm{\Pi}_{\bU} \bm{s}_{\ell} \\
&=  
\tfrac{n \rho_n^{-1/2}}{n_{k} n_{\ell}} \bm{s}_k^{\top} \bm{\Pi}_{\bU}^{\perp} \bigl(\mathbf{E} \bU \bLam \bU^{\top} \hat{\bm{U}}  \hat{\bLam}^{-1} + 
\mathbf{E}^{2} \bU \bLam \bU^{\top} \hat{\bm{U}}  \hat{\bLam}^{-2}) \hat{\bU}^{\top} \bm{\Pi}_{\bU} \bm{s}_{\ell} + O_{\mathbb{P}}((n \rho_n)^{-1/2})
 \end{split}
\end{equation*}
Applying Eq.~\eqref{eq:approximate_transpose2} and Eq.~\eqref{eq:approximate_transpose3} to the above yield
\begin{equation*}
\omega_{kl}^{(2)} = \tfrac{n \rho_n^{-1/2}}{n_{k} n_{\ell}} \bm{s}_k^{\top} \bigl(\bm{\Pi}_{\bU}^{\perp} \mathbf{E} \bU \bU^{\top} \hat{\bm{U}} \hat{\bm{U}}^{\top} \bm{\Pi}_{\bU} + \bm{\Pi}_{\bU}^{\perp}
\mathbf{E}^{2} \bU \bLam^{-1} \bU^{\top} \hat{\bm{U}} \hat{\bU}^{\top} \bm{\Pi}_{\bU} \bigr) \bm{s}_{\ell} + O_{\mathbb{P}}((n \rho_n)^{-1/2}).
\end{equation*}
Using Eq.~\eqref{eq:form1}, we replace $\bU \bU^{\top} \hat{\bm{U}} \hat{\bm{U}}^{\top}$ by $\mathbf{I}$ and replace $\bU^{\top} \hat{\bm{U}} \hat{\bU}^{\top} \bm{\Pi}_{\bU} = \bU^{\top} \hat{\bm{U}} \hat{\bU}^{\top} \bU \bU^{\top}$ by $\bU^{\top}$ in the above display, thereby obtaining
$$ \omega_{kl}^{(2)} = \tfrac{n \rho_n^{-1/2}}{n_{k} n_{\ell}} \bm{s}_k^{\top} \bigl(\bm{\Pi}_{\bU}^{\perp} \mathbf{E} \bm{\Pi}_{\bm{U}}   + \bm{\Pi}_{\bU}^{\perp}
\mathbf{E}^{2} \bU \bLam^{-1} \bU^{\top} \bigr) \bm{s}_{\ell} + O_{\mathbb{P}}((n \rho_n)^{-1/2}).
$$
By exchanging $k$ and $\ell$, we also have
\begin{equation*}
\begin{split} \omega_{kl}^{(3)} &:= \tfrac{n \rho_n^{-1/2}}{n_{k} n_{\ell}} \bm{s}_{\ell}^{\top} \bm{\Pi}_{\bU}^{\perp} \hat{\bU} \hat{\bm{\Lambda}} \hat{\mathbf{U}}^{\top} \bm{\Pi}_{\bU} \bm{s}_{k} \\ &=  \tfrac{n \rho_n^{-1/2}}{n_{k} n_{\ell}} \bm{s}_{\ell}^{\top} (\bm{\Pi}_{\bU}^{\perp} \mathbf{E} \bm{\Pi}_{\bU} + \bm{\Pi}_{\bU}^{\perp} \mathbf{E}^2 \bU \bLam^{-1} \bU^{\top} \bigr) \bm{s}_{k} + O_{\mathbb{P}}(n^{-1/2} \rho_n^{-1})
\end{split}
\end{equation*}
Combining the above expressions for $\omega_{kl}^{(1)}, \omega_{kl}^{(2)}$ and $\omega_{kl}^{(3)}$, we obtain
\begin{equation*} 
\begin{split}
\xi_{kl}^{(3)} &= \omega_{kl}^{(1)} + \omega_{kl}^{(2)} + \omega_{kl}^{(3)} 
\\ &= \tfrac{n \rho_n^{-1/2}}{n_{k} n_{\ell}} \bigl(\bm{s}_k^{\top} 
\bm{\Pi}_{\bU}^{\perp} \mathbf{E} \bm{\Pi}_{\bU} \bm{s}_{\ell} + \bm{s}_{k}^{\top} \bm{\Pi}_{\bU}^{\perp} \mathbf{E}^2 \mathbf{P}^{\dagger} \bm{s}_{\ell} + 
\bm{s}_{\ell}^{\top} 
\bm{\Pi}_{\bU}^{\perp} \mathbf{E} \bm{\Pi}_{\bU} \bm{s}_{k} + \bm{s}_{\ell}^{\top} \bm{\Pi}_{\bU}^{\perp} \mathbf{E}^2 \mathbf{P}^{\dagger} \bm{s}_{k} \bigr) 
\\ &+ 
O_{\mathbb{P}}(n^{-1/2} \rho_n^{-1})
\end{split}
\end{equation*}
as desired. 

\subsection*{Deriving Eq.~\eqref{eq:sigma_kk}, Eq.~\eqref{eq:sigma_kl2}, Eq.~\eqref{eq:tilde_sigma_kk} and Eq.~\eqref{eq:tilde_sigma_kl}}
We first recall Eq.~\eqref{eq:Zkl},
$$ Z_{kl} = \tfrac{n \rho_n^{-1/2}}{n_{k} n_{\ell}} \mathrm{tr} \,(\mathbf{A} - \mathbf{P}) (\bm{\Pi}_{\bU} \bm{s}_{\ell} \bm{s}_k^{\top} + \bm{\Pi}_{\bU} \bm{s}_{k} \bm{s}_{\ell}^{\top}  - \bm{\Pi}_{\bU} \bm{s}_{k} \bm{s}_{\ell}^{\top} \bm{\Pi}_{\bU}). $$
With $\mathbf{M} = \bm{\Pi}_{\bU} \bm{s}_{\ell} \bm{s}_k^{\top} + \bm{\Pi}_{\bU} \bm{s}_{k} \bm{s}_{\ell}^{\top}  - \bm{\Pi}_{\bU} \bm{s}_{k} \bm{s}_{\ell}^{\top} \bm{\Pi}_{\bU}$, we have
\begin{equation}
\begin{split}
Z_{kl} &= \tfrac{n \rho_n^{-1/2}}{n_{k} n_{\ell}} \sum_{i} \sum_{j} (\mathbf{A}_{ij} - \mathbf{P}_{ij}) \mathbf{M}_{ij} \\
&= \tfrac{n \rho_n^{-1/2}}{n_{k} n_{\ell}} \Bigl(\sum_{i < j} (\mathbf{A}_{ij} - \mathbf{P}_{ij}) (\bM_{ij} + \bM_{ji}) + \sum_{i} (\mathbf{A}_{ii} - \mathbf{P}_{ii}) \bM_{ii} \Bigr) \\
\end{split}
\end{equation}
which is a sum of $n(n+1)/2$ independent mean $0$ random variables. By the Lindeberg-Feller central limit theorem, $Z_{kl} \overset{\mathrm{d}}{\longrightarrow} N(0, \mathrm{Var}[Z_{kl}])$. 
All that remains is to evaluate $\mathrm{Var}[Z_{kl}]$.

Let $\mathbf{X}$ be the $n \times d$ matrix such that $X_i$, the $i$-th row of $\mathbf{X}$, is $\nu_k$ if $\tau_i = k$, i.e., $\mathbf{X} \mathbf{I}_{p,q} \mathbf{X}^{\top} = \mathbb{E}[\mathbf{A}]$. We observe that $\bm{\Pi}_{\bU} = \bU \bU^{\top} = \mathbf{X} (\mathbf{X}^{\top} \mathbf{X})^{-1} \mathbf{X}^{\top}$ as $\bm{\Pi}_{\bU}$ is the orthogonal projection onto the column space of $\mathbf{X} \mathbf{I}_{p,q} \mathbf{X}^{\top}$ which coincides with that of $\mathbf{X}$. 
Let $\tau = (\tau_1, \dots, \tau_n)$ be the vertices to block assignments of $\mathbf{A}$. Then the $ij$-th entries of $\bm{\Pi}_{\bU} \bm{s}_{\ell} \bm{s}_k^{\top}$, $\bm{\Pi}_{\bU} \bm{s}_{\ell} \bm{s}_k^{\top}$, and $\bm{\Pi}_{\bU} \bm{s}_{k} \bm{s}_{\ell}^{\top} \bm{\Pi}_{\bU}$ are
\begin{gather*}
(\bm{\Pi}_{\bU} \bm{s}_{\ell} \bm{s}_k^{\top})_{ij} = n_{\ell} X_i^{\top} (\mathbf{X}^{\top} \mathbf{X})^{-1} \nu_{\ell} \ast \mathbbm{1}\{\tau_j = k\}, \\
(\bm{\Pi}_{\bU} \bm{s}_{k} \bm{s}_{\ell}^{\top})_{ij} = n_k X_i^{\top} (\mathbf{X}^{\top} \mathbf{X})^{-1} \nu_{k} \ast \mathbbm{1}\{\tau_j = \ell\},\\
(\bm{\Pi}_{\bU} \bm{s}_{k} \bm{s}_{\ell}^{\top} \bm{\Pi}_{\bU})_{ij} = n_{k} n_{\ell} X_i^{\top} (\mathbf{X}^{\top} \mathbf{X})^{-1} \nu_{k} \nu_l^{\top} (\mathbf{X}^{\top} \mathbf{X})^{-1} X_{j},
\end{gather*}
and hence
\begin{equation}
\label{eq:M_ij+M_ji}
\begin{split}
\mathbf{M}_{ij} + \mathbf{M}_{ji} &= n_{k} \nu_{\ell}^{\top}(\mathbf{X}^{\top} \mathbf{X})^{-1} \bigl(X_i \mathbbm{1}\{\tau_j = k\} + X_j \mathbbm{1}\{\tau_i = k \}\bigr)
\\
&+ n_{\ell} \nu_k^{\top} (\mathbf{X}^{\top} \mathbf{X})^{-1} \bigl( X_i \mathbbm{1}\{\tau_j = \ell\} + X_j \mathbbm{1}\{\tau_i = \ell\} \bigr)
\\& - n_{k} n_{\ell} X_i^{\top} (\mathbf{X}^{\top} \mathbf{X})^{-1} \bigl(\nu_{\ell} \nu_k^{\top} + \nu_{k} \nu_{\ell}^{\top}\bigr) (\mathbf{X}^{\top} \mathbf{X})^{-1} X_{j}.
\end{split}
\end{equation}
Next, we note that
\begin{equation*}
\begin{split}
\mathrm{Var}[Z_{kl}] &= \tfrac{n^{2} \rho_n^{-1}}{n_k^2 n_{\ell}^2} \sum_{i < j} \mathbf{P}_{ij} (1 - \mathbf{P}_{ij}) (\bM_{ij} + \bM_{ji})^{2} + \tfrac{n^{2} \rho_n^{-1}}{n_k^2 n_{\ell}^2} \sum_{i} \mathbf{P}_{ii} (1 - \mathbf{P}_{ii} )\mathbf{M}_{ii}^2 \\
&= \tfrac{n^{2} \rho_n^{-1}}{2 n_k^2 n_{\ell}^2} \sum_{i} \sum_{j} \mathbf{P}_{ij} (1 - \mathbf{P}_{ij}) (\bM_{ij} + \bM_{ji})^{2} + o_{\mathbb{P}}(1) \\
&= S_{kk} + 2 S_{k\ell}  + S_{\ell\ell} + 2 S_{ko} + 2 S_{\ell o} + S_{oo} + o_{\mathbb{P}}(1)
\end{split}
\end{equation*}
where each $S_{\ast\ast}$ correspond to summing $\eta_{ij} := \mathbf{P}_{ij} (1 - \mathbf{P}_{ij}) (\bM_{ij} + \bM_{ji})^{2}$ 
over some subset of the indices $(i,j)$, namely
 \begin{gather*} 
 S_{kk} = \tfrac{n^{2} \rho_n^{-1}}{2 n_{k}^{2} n_{\ell}^{2}} \sum_{\tau_i = k} \sum_{\tau_j = k} \eta_{ij}, \,\, S_{\ell\ell} =
 \tfrac{n^{2} \rho_n^{-1}}{2 n_{k}^{2} n_{\ell}^{2}}  \sum_{\tau_i = \ell} \sum_{\tau_j = \ell} 
 \eta_{ij}, \\ 
 S_{k\ell} = \tfrac{n^{2} \rho_n^{-1}}{2 n_{k}^{2} n_{\ell}^{2}}  \sum_{\tau_i = k} \sum_{\tau_j = \ell} \eta_{ij}, \,\,
 S_{ko} = \tfrac{n^{2} \rho_n^{-1}}{2 n_{k}^{2} n_{\ell}^{2}}  \sum_{\tau_i = k} \sum_{\tau_j \not \in \{k,\ell\}} \eta_{ij}, \\
S_{\ell o} = \tfrac{n^{2} \rho_n^{-1}}{2 n_{k}^{2} n_{\ell}^{2}}  \sum_{\tau_i = \ell} \sum_{\tau_j \not \in \{k,\ell\}} \eta_{ij}, \,\,
S_{oo} = \tfrac{n^{2} \rho_n^{-1}}{2 n_{k}^{2} n_{\ell}^{2}} \sum_{\tau_i \not \in \{k,\ell\}} \sum_{\tau_j \not \in \{k,\ell\}} \eta_{ij}.
 \end{gather*}
If $k \not = \ell$, then for $(i,j)$ such that $\tau_i = k$ and $\tau_j = k$, Eq.~\eqref{eq:M_ij+M_ji} yield
 \begin{equation*} 
\begin{split}
 \mathbf{M}_{ij} + \mathbf{M}_{ji} &= 2 n_{\ell} \nu_k^{\top} (\mathbf{X}^{\top} \mathbf{X})^{-1} \nu_{\ell} - 
2 n_{k} n_{\ell} \nu_k^{\top} (\mathbf{X}^{\top} \mathbf{X})^{-1} \bm{\nu}_{\ell} \nu_k^{\top} (\mathbf{X}^{\top} \mathbf{X})^{-1} \nu_k.
\end{split}
\end{equation*}
and hence, since $\mathbf{P}_{ij} = \rho_n \mathbf{B}_{\tau_i,\tau_j}$, 
\begin{equation*}
\begin{split}
S_{kk} &= \tfrac{2n^2 \rho_n^{-1}}{n_k^{2} n_{\ell}^2} n_k^2 \rho_n \mathbf{B}_{kk} (1 - \rho_n \mathbf{B}_{kk}) (n_\ell \nu_k^{\top} (\mathbf{X}^{\top} \mathbf{X})^{-1} \nu_{\ell} - 
n_k n_\ell \nu_k^{\top} (\mathbf{X}^{\top} \mathbf{X})^{-1} \bm{\nu}_{\ell} \nu_k^{\top} (\mathbf{X}^{\top} \mathbf{X})^{-1} \nu_k)^2 \\
&= 2 \mathbf{B}_{kk}(1 - \rho_n \mathbf{B}_{kk})(\bm{\nu}_k^{\top} (\tfrac{\mathbf{X}^{\top} \mathbf{X}}{n})^{-1} \bm{\nu}_{\ell} - \tfrac{n_k}{n} \nu_k^{\top} (\tfrac{\mathbf{X}^{\top} \mathbf{X}}{n})^{-1} \nu_{\ell} \bm{\nu}_k^{\top} (\tfrac{\mathbf{X}^{\top} \mathbf{X}}{n})^{-1} \bm{\nu}_{k})^{2}
\end{split}
\end{equation*}
We therefore have
\begin{equation*} S_{kk} \overset{\mathrm{a.s.}}{\longrightarrow} \begin{cases} 2 \mathbf{B}_{kk}(1 - \mathbf{B}_{kk}) \zeta_{k \ell}^{2} (1 - \pi_k \zeta_{kk})^2 & \text{if $\rho_n \equiv 1$ for all $n$} \\
2 \mathbf{B}_{kk} \zeta_{k \ell}^{2} (1 - \pi_k \zeta_{kk})^2 & \text{if $\rho_n \rightarrow 0$}
\end{cases}
\end{equation*}
as $n \rightarrow \infty$.
Similarly, we have
\begin{equation*}
S_{\ell \ell} \overset{\mathrm{a.s.}}{\longrightarrow} \begin{cases} 2 \mathbf{B}_{\ell \ell}(1 - \mathbf{B}_{\ell \ell}) \zeta_{k \ell}^{2} (1 - \pi_{\ell} \zeta_{\ell \ell})^2 & \text{if $\rho_n \equiv 1$ for all $n$} \\
2 \mathbf{B}_{\ell \ell} \zeta_{k \ell}^{2} (1 - \pi_{\ell} \zeta_{\ell \ell})^2 & \text{if $\rho_n \rightarrow 0$}
\end{cases}
\end{equation*}
as $n \rightarrow \infty$.
If $k \not = \ell$, then for $(i,j)$ with $\tau_i = k$ and $\tau_j = \ell$, Eq.~\eqref{eq:M_ij+M_ji} yield
\begin{equation*} 
\begin{split} \mathbf{M}_{ij} + \mathbf{M}_{ji} &= n_{\ell} \nu_{\ell}^{\top} (\mathbf{X}^{\top} \mathbf{X})^{-1} \nu_{\ell} + n_k \nu_{k}^{\top} (\mathbf{X}^{\top} \mathbf{X})^{-1} \nu_{k} \\ & - n_{k} n_{\ell} \nu_k^{\top} (\mathbf{X}^{\top} \mathbf{X})^{-1} \nu_{k} \nu_{\ell}^{\top} (\mathbf{X}^{\top} \mathbf{X})^{-1} 
\nu_{\ell} - n_{k} n_{\ell} (\nu_k^{\top} (\mathbf{X}^{\top} \mathbf{X})^{-1} \nu_{\ell})^2
\end{split}
\end{equation*}
and hence
\begin{equation*}
\begin{split}
S_{k\ell}  \overset{\mathrm{a.s.}}{\longrightarrow} \begin{cases} \tfrac{1}{2}\pi_{k} \pi_{\ell} \mathbf{B}_{k\ell}(1 - \mathbf{B}_{k\ell})  \bigl(\tfrac{1}{\pi_k} \zeta_{\ell \ell} + 
\tfrac{1}{\pi_{\ell}} \zeta_{kk} - \zeta_{kk} \zeta_{\ell \ell} - \zeta_{k \ell}^2\bigr)^2 & \text{if $\rho_n \equiv 1$} \\
\tfrac{1}{2}\pi_{k} \pi_{\ell} \mathbf{B}_{k\ell} \bigl(\tfrac{1}{\pi_k} \zeta_{\ell \ell} + 
\tfrac{1}{\pi_{\ell}} \zeta_{kk} - \zeta_{kk} \zeta_{\ell \ell} - \zeta_{k \ell}^2\bigr)^2 & \text{if $\rho_n \rightarrow 0$}
\end{cases}
\end{split}
\end{equation*}
as $n \rightarrow \infty$.
If $k \not = \ell$ then for $(i,j)$ with $\tau_i = k, \tau_j \not \in \{k,\ell\}$, Eq.~\eqref{eq:M_ij+M_ji} yield
\begin{equation*} 
\begin{split} \mathbf{M}_{ij} + \mathbf{M}_{ji} &= n_{\ell} X_j^{\top} (\mathbf{X}^{\top} \mathbf{X})^{-1} \nu_{\ell} - n_{k} n_{\ell} \nu_k^{\top} (\mathbf{X}^{\top} \mathbf{X})^{-1} \nu_{k} \nu_{\ell}^{\top} (\mathbf{X}^{\top} \mathbf{X})^{-1} 
X_j \\ & - n_{k} n_{\ell} \nu_k^{\top} (\mathbf{X}^{\top} \mathbf{X})^{-1} \nu_{\ell} \nu_{k}^{\top} (\mathbf{X}^{\top} \mathbf{X})^{-1} 
X_j
\end{split}
\end{equation*}
and hence
\begin{equation*}
S_{ko} \overset{\mathrm{a.s.}}{\longrightarrow} \begin{cases} \tfrac{1}{2} \sum_{r \not \in \{k,\ell\}} \pi_{k} \pi_{r} \mathbf{B}_{k r}(1 - \mathbf{B}_{kr})  (\tfrac{1}{\pi_k} \zeta_{\ell r} - \zeta_{kk} \zeta_{\ell r} - \zeta_{k \ell} \zeta_{kr})^2 & \text{if $\rho_n \equiv 1$} \\
\tfrac{1}{2} \sum_{r \not \in \{k,\ell\}} \pi_{k} \pi_{r} \mathbf{B}_{k r} (\tfrac{1}{\pi_k} \zeta_{\ell r} - \zeta_{kk} \zeta_{\ell r} - \zeta_{k \ell} \zeta_{kr})^2 & \text{if $\rho_n \rightarrow 0$}
\end{cases}
\end{equation*}
as $n \rightarrow \infty$. By symmetry, we also have
\begin{equation*}
S_{ko} \overset{\mathrm{a.s.}}{\longrightarrow} \begin{cases} \tfrac{1}{2} \sum_{r \not \in \{k,\ell\}} \pi_{\ell} \pi_{r} \mathbf{B}_{\ell r}(1 - \mathbf{B}_{\ell r})  (\tfrac{1}{\pi_\ell} \zeta_{k r} - \zeta_{\ell \ell} \zeta_{k r} - \zeta_{k \ell} \zeta_{\ell r})^2 & \text{if $\rho_n \equiv 1$} \\
\tfrac{1}{2} \sum_{r \not \in \{k,\ell\}} \pi_{\ell} \pi_{r} \mathbf{B}_{\ell r} (\tfrac{1}{\pi_\ell} \zeta_{k r} - \zeta_{\ell \ell} \zeta_{k r} - \zeta_{k \ell} \zeta_{\ell r})^2 & \text{if $\rho_n \rightarrow 0$}
\end{cases}
\end{equation*}
Finally, when $(i,j)$ is such that $\tau_i \not \in \{k,\ell\}$ and $\tau_j \not \in \{k,\ell\}$, Eq.~\eqref{eq:M_ij+M_ji} yield
$$
\mathbf{M}_{ij} + \mathbf{M}_{ji} = - n_{k} n_{\ell} X_i^{\top} (\mathbf{X}^{\top} \mathbf{X})^{-1} (\nu_{k} \nu_{\ell}^{\top} + \nu_{\ell} \nu_k^{\top}) (\mathbf{X}^{\top} \mathbf{X})^{-1} X_{j} 
$$
and thus
$$ S_{oo} \overset{\mathrm{a.s.}}{\longrightarrow} \begin{cases}  \tfrac{1}{2} \sum_{r \not \in \{k,\ell\}} \sum_{s \not \in \{k,\ell\}} \pi_r \pi_s 
\mathbf{B}_{rs} (1 - \mathbf{B}_{rs}) (\zeta_{kr} \zeta_{\ell s} + \zeta_{\ell r} \zeta_{ks})^2 & \text{if $\rho_n \equiv 1$} \\
\tfrac{1}{2} \sum_{r \not \in \{k,\ell\}} \sum_{s \not \in \{k,\ell\}} \pi_r \pi_s 
\mathbf{B}_{rs} (\zeta_{kr} \zeta_{\ell s} + \zeta_{\ell r} \zeta_{ks})^2 & \text{if $\rho_n \rightarrow 0$}
\end{cases}
 $$
 as $n \rightarrow \infty$.
Combining the above expressions for $S_{kk}, S_{k\ell}, S_{\ell\ell}, S_{\ell o}, S_{k o}$ and $S_{oo}$ yield $\sigma^{2}_{k \ell}$ and $\tilde{\sigma}^{2}_{k \ell}$. For example, with $\rho_n \equiv 1$,
\begin{equation*}
\begin{split}
\sigma_{k\ell}^{2} &= 2 \mathbf{B}_{kk}(1 - \mathbf{B}_{kk}) \zeta_{k \ell}^{2} (1 - \pi_{k} \zeta_{kk})^{2} + 2 \mathbf{B}_{\ell}(1 - \mathbf{B}_{\ell \ell}) \zeta_{k \ell}^{2} (1 - \pi_{\ell} \zeta_{\ell \ell})^{2} \\ &+ \pi_{k} \pi_{\ell} \mathbf{B}_{k\ell}(1 - \mathbf{B}_{k\ell})  \bigl(\tfrac{1}{\pi_k} \zeta_{\ell \ell} + 
\tfrac{1}{\pi_{\ell}} \zeta_{kk} - \zeta_{kk} \zeta_{\ell \ell} - \zeta_{k \ell}^2\bigr)^2 \\
&+ \sum_{r \not \in \{k,\ell\}} \pi_{k} \pi_{r} \mathbf{B}_{k r}(1 - \mathbf{B}_{kr})  (\tfrac{1}{\pi_k} \zeta_{\ell r} - \zeta_{kk} \zeta_{\ell r} - \zeta_{k \ell} \zeta_{kr})^2 \\ &+
\sum_{r \not \in \{k,\ell\}} \pi_{k} \pi_{r} \mathbf{B}_{k r}(1 - \mathbf{B}_{kr})  (\tfrac{1}{\pi_k} \zeta_{\ell r} - \zeta_{kk} \zeta_{\ell r} - \zeta_{k \ell} \zeta_{kr})^2
\\ &+ \tfrac{1}{2} \sum_{r \not \in \{k,\ell\}} \sum_{s \not \in \{k,\ell\}} \pi_r \pi_s 
\mathbf{B}_{rs} (1 - \mathbf{B}_{rs}) (\zeta_{kr} \zeta_{\ell s} + \zeta_{\ell r} \zeta_{ks})^2
\end{split}
\end{equation*}
for when $k \not = \ell$. Straightforward manipulations then yield the form given in Eq.~\eqref{eq:sigma_kl2}. 

When $k = \ell$, the term $\mathrm{Var}[Z_{kk}]$ is decomposed as $\mathrm{Var}[Z_{kk}] = S_{kk} + 2 S_{ko} + S_{oo}$ where we now have
\begin{equation*}
S_{kk} = \tfrac{n^{2} \rho_n^{-1}}{2 n_{k}^{4}} \sum_{\tau_i = k} \sum_{\tau_j = k} \eta_{ij}, \,\, 
 S_{ko} = \tfrac{n^{2} \rho_n^{-1}}{2 n_{k}^{4}}  \sum_{\tau_i = k} \sum_{\tau_j \not = k} \eta_{ij}, \,\,
S_{oo} = \tfrac{n^{2} \rho_n^{-1}}{2 n_{k}^{4}} \sum_{\tau_i \not = k} \sum_{\tau_j \not = k} \eta_{ij}.
\end{equation*} 
If $k = \ell$, then for $(i,j)$ such that $\tau_{i} = k = \ell$ and $\tau_j = k = \ell$, Eq.~\eqref{eq:M_ij+M_ji} yield
$$ \mathbf{M}_{ij} + \mathbf{M}_{ji} = 4 n_k \nu_k^{\top} (\mathbf{X}^{\top} \mathbf{X})^{-1} \nu_k - 2 n_k^2 (\nu_k^{\top} (\mathbf{X}^{\top} \mathbf{X})^{-1} \nu_k)^2 $$
from which we obtain
\begin{equation*}
 \mathbf{S}_{kk} \overset{\mathrm{a.s.}}{\longrightarrow} \begin{cases}
2 \mathbf{B}_{kk}(1 - \mathbf{B}_{kk}) \zeta_{k k}^{2} (2 - \pi_k \zeta_{kk})^2 & \text{if $\rho_n \equiv 1$ for all $n$} \\
2 \mathbf{B}_{kk} \zeta_{k k}^{2} (2 - \pi_k \zeta_{kk})^2 & \text{if $\rho_n \rightarrow 0$}.
\end{cases}
 \end{equation*}
 If $k = \ell$, then for $(i,j)$ such that $\tau_{i} = k$ and $\tau_j \not = k$, Eq.~\eqref{eq:M_ij+M_ji} yield
 $$ \mathbf{M}_{ij} + \mathbf{M}_{ji} = 2 n_k \nu_k^{\top} (\mathbf{X}^{\top} \mathbf{X})^{-1} X_j - 2 n_k^2 \nu_k^{\top} (\mathbf{X}^{\top} \mathbf{X})^{-1} \nu_k \nu_k^{\top} (\mathbf{X}^{\top} \mathbf{X})^{-1} X_j $$
 and thus
 \begin{equation*}
 \mathbf{S}_{ko} \overset{\mathrm{a.s.}}{\longrightarrow} \begin{cases}
2 \sum_{r \not = k} \pi_k \pi_r \mathbf{B}_{kr}(1 - \mathbf{B}_{kr}) \zeta_{kr}^{2} (\tfrac{1}{\pi_k} - \zeta_{kk})^2 & \text{if $\rho_n \equiv 1$ for all $n$} \\
2 \sum_{r \not = k} \pi_k \pi_r \mathbf{B}_{kr} \zeta_{kr}^{2} (\tfrac{1}{\pi_k} - \zeta_{kk})^2 & \text{if $\rho_n \rightarrow 0$}.
\end{cases}
 \end{equation*}
 Finally, for $k = \ell$ and $\tau_i \not = k$, $\tau_j \not = k$, we have
 \begin{gather*} \mathbf{M}_{ij} + \mathbf{M}_{ji} = - 2 n_k^2 X_i (\mathbf{X}^{\top} \mathbf{X})^{-1} \nu_k \nu_k^{\top} (\mathbf{X}^{\top} \mathbf{X})^{-1} X_j \\ 
 S_{oo} = \overset{\mathrm{a.s.}}{\longrightarrow} \begin{cases}  2 \sum_{r \not = k} \sum_{s \not = k} \pi_r \pi_s 
\mathbf{B}_{rs} (1 - \mathbf{B}_{rs}) \zeta_{kr}^{2} \zeta_{k s}^2 & \text{if $\rho_n \equiv 1$} \\
2 \sum_{r \not = k} \sum_{s \not = k} \pi_r \pi_s 
\mathbf{B}_{rs} \zeta_{kr}^{2} \zeta_{k s}^2 & \text{if $\rho_n \rightarrow 0$}.
\end{cases}
 \end{gather*}
 Combining the above expressions for $S_{kk}, S_{ko}$ and $S_{oo}$ and some straightforward manipulations yield us Eq.~\eqref{eq:sigma_kk} and Eq.~\eqref{eq:tilde_sigma_kk}. 

\subsection*{Deriving Eq.~\eqref{eq:mu_kl} and Eq.~\eqref{eq:tilde_theta_kl}}
Our argument is similar to that of \cite{tang14:_semipar} and is based on a log-Sobolev concentration inequality of \cite{boucheron2003} which yield
\begin{gather*} \tfrac{n}{n_k n_{\ell}} \bm{s}_{k}^{\top} \bm{\Pi}_{\bU}^{\perp} (\mathbf{A} - \mathbf{P})^{2} \bU \bLam^{-1} \bU^{\top} \bm{s}_{\ell} = 
\mathbb{E}[\tfrac{n}{n_{k} n_{\ell}} \bm{s}_{k}^{\top} \bm{\Pi}_{\bU}^{\perp} (\mathbf{A} - \mathbf{P})^{2} \bU \bLam^{-1} \bU^{\top} \bm{s}_{\ell}] + O_{\mathbb{P}}(n^{-1/2}) 
\\ \tfrac{n}{n_{k} n_{\ell}} \bm{s}_{\ell}^{\top} \bm{\Pi}_{\bU}^{\perp} (\mathbf{A} - \mathbf{P})^{2} \bU \bLam^{-1} \bU^{\top} \bm{s}_{k} = \mathbb{E}[\tfrac{n}{n_k n_{\ell}} \bm{s}_{\ell}^{\top} \bm{\Pi}_{\bU}^{\perp} (\mathbf{A} - \mathbf{P})^{2} \bU \bLam^{-1} \bU^{\top} \bm{s}_{k}] + O_{\mathbb{P}}(n^{-1/2})
\end{gather*}
where the expectations are taken with respect to $\mathbf{A}$, conditional on $\mathbf{P}$. We now evaluate $\theta_{k\ell}^{(1)} := \tfrac{n}{n_{k} n_{\ell}} \mathbb{E}[\bm{s}_{k}^{\top} \bm{\Pi}_{\bU}^{\perp} (\mathbf{A} - \mathbf{P})^{2} \bU \bLam^{-1} \bU^{\top} \bm{s}_{\ell}]$. Let $\mathbf{D} = \mathbb{E}[(\mathbf{A} - \mathbf{P})]^2$ be the diagonal matrix whose diagonal entries are
$$ \mathbf{D}_{ii} = \sum_{j} \mathbf{P}_{ij} (1 - \mathbf{P}_{ij}) = \rho_n \sum_{r=1}^{K} n_{r} X_i^{\top} \mathbf{I}_{p,q} \nu_r (1 - \rho_n X_i^{\top} \mathbf{I}_{p,q} \nu_r).$$
Next, we note $\mathbf{U} \bm{\Lambda} \mathbf{U}^{\top} = \mathbf{P} = \rho_n \mathbf{X} \mathbf{I}_{p,q} \mathbf{X}^{\top}$ and $\mathbf{U} \bm{\Lambda}^{-1} \mathbf{U}^{\top}$ is the Moore-Penrose pseudo-inverse $\mathbf{P}^{\dagger}$ of $\mathbf{P}$. 
Since the Moore-Penrose pseudoinverse of $\mathbf{P}$ is unique, we therefore have
$$\mathbf{U} \bm{\Lambda}^{-1} \mathbf{U}^{\top} = \mathbf{P}^{\dagger} = (\rho_n \mathbf{X} \mathbf{I}_{p,q} \mathbf{X}^{\top})^{\dagger} = \rho_n^{-1} \mathbf{X} (\mathbf{X}^{\top} \mathbf{X})^{-1} \mathbf{I}_{p,q} (\mathbf{X}^{\top} \mathbf{X})^{-1} \mathbf{X}^{\top}.$$
We then have
\begin{equation*}
\begin{split}
\theta_{k\ell}^{(1)} &= \tfrac{n}{n_{k} n_{\ell}} \bm{s}_k^{\top} \bm{\Pi}_{\mathbf{U}}^{\perp} \mathbb{E}[(\mathbf{A} - \mathbf{P})^2] \mathbf{U} \bm{\Lambda}^{-1} \mathbf{U}^{\top} \bm{s}_{\ell} \\
&= \tfrac{n}{\rho_n n_{k} n_{\ell}} \bm{s}_k^{\top} \bm{\Pi}_{\mathbf{U}}^{\perp} \mathbf{D} \mathbf{X} (\mathbf{X}^{\top} \mathbf{X})^{-1} \mathbf{I}_{p,q} (\mathbf{X}^{\top} \mathbf{X})^{-1} \mathbf{X}^{\top} \bm{s}_{\ell} \\
&= \tfrac{n}{\rho_n n_k n_{\ell}} \bm{s}_k^{\top} (\mathbf{I} - \mathbf{X} (\mathbf{X}^{\top} \mathbf{X})^{-1} \mathbf{X}^{\top}) \mathbf{D} \mathbf{X} (\mathbf{X}^{\top} \mathbf{X})^{-1} \mathbf{I}_{p,q} (\mathbf{X}^{\top} \mathbf{X})^{-1} \mathbf{X}^{\top} \bm{s}_{\ell} \\
&= \tfrac{n}{\rho_n n_k} \Bigl(\bm{s}_{k}^{\top} \mathbf{D} \mathbf{X}
(\mathbf{X}^{\top} \mathbf{X})^{-1} \mathbf{I}_{p,q} (\mathbf{X}^{\top} \mathbf{X})^{-1} \nu_{\ell} - n_k \nu_k^{\top} (\mathbf{X}^{\top} \mathbf{X})^{-1} \mathbf{X}^{\top} \mathbf{D} \mathbf{X} (\mathbf{X}^{\top} \mathbf{X})^{-1} \mathbf{I}_{p,q} (\mathbf{X}^{\top} \mathbf{X})^{-1} \nu_{\ell} \Bigr)
\end{split}
\end{equation*}
Letting $\zeta_{kl}^{(1)} = \tfrac{n}{\rho_n n_k} \bm{s}_{k}^{\top} \mathbf{D} \mathbf{X}
(\mathbf{X}^{\top} \mathbf{X})^{-1} \mathbf{I}_{p,q} (\mathbf{X}^{\top} \mathbf{X})^{-1} \nu_{\ell}$, some straightforward simplifications yield
\begin{equation*}
\begin{split}
\zeta_{kl}^{(1)} &= 
\frac{n}{\rho_n n_k} \bm{s}_k^{\top} \mathbf{D} \mathbf{X} (\bX^{\top} \bX)^{-1} \mathbf{I}_{p,q} (\mathbf{X}^{\top} \mathbf{X})^{-1} \nu_{\ell} \\ &= n \sum_{r = 1}^{K} n_r \nu_k^{\top} \mathbf{I}_{p,q} \nu_r (1 - \rho_n \nu_k^{\top} \mathbf{I}_{p,q} \nu_r) \nu_k^{\top} (\bX^{\top} \bX)^{-1} \mathbf{I}_{p,q} (\mathbf{X}^{\top} \mathbf{X})^{-1} \nu_{\ell} \\
&\overset{\mathrm{a.s.}}{\longrightarrow} \begin{cases} \sum_{r=1}^{K} \pi_r \mathbf{B}_{kr} (1 - \mathbf{B}_{kr}) \nu_k^{\top} \Delta^{-1} \mathbf{I}_{p,q} \Delta^{-1} \nu_{\ell} & \text{if $\rho_n \equiv 1$} \\
\sum_{r=1}^{K} \pi_r \mathbf{B}_{kr}  \nu_k^{\top} \Delta^{-1} \mathbf{I}_{p,q} \Delta^{-1} \nu_{\ell} & \text{if $\rho_n \rightarrow 0$}
\end{cases}
\end{split}
\end{equation*}
as $n \rightarrow \infty$. Similarly, 
letting $\zeta_{k \ell}^{(2)} = n \rho_n^{-1} \nu_k^{\top} (\bX^{\top} \bX)^{-1} \bX^{\top} \mathbf{D} \bX (\bX^{\top} \bX)^{-1} \mathbf{I}_{p,q} (\mathbf{X}^{\top} \mathbf{X})^{-1} \nu_{\ell}$,
\begin{equation*}
\begin{split}
\zeta_{kl}^{(2)} 
&= n \rho_n^{-1} \sum_{i} \nu_k^{\top} (\bX^{\top} \bX)^{-1} X_i \mathbf{D}_{ii} X_i^{\top} (\bX^{\top} \bX)^{-2} \nu_{\ell} \\
&= n \sum_{s=1}^{K} n_s \nu_k^{\top} (\bX^{\top} \bX)^{-1} \nu_s \sum_{r=1}^{K} n_r \nu_s^{\top} \mathbf{I}_{p,q} \nu_r (1 - \rho_n \nu_s^{\top} \mathbf{I}_{p,q} \nu_r) \nu_s^{\top} (\bX^{\top} \bX)^{-1} \mathbf{I}_{p,q} (\bX^{\top} \bX)^{-1} 
\nu_{\ell} \\ 
&\overset{\mathrm{a.s.}}{\longrightarrow} \begin{cases} \sum_{s=1}^{K} \sum_{r=1}^{K} \pi_r \pi_s \nu_k^{\top} \Delta^{-1} \nu_s \mathbf{B}_{sr} (1 - \mathbf{B}_{sr}) \nu_s^{\top} \Delta^{-1} \mathbf{I}_{p,q} \Delta^{-1} \nu_{\ell} & \text{if $\rho_n \equiv 1$} \\
 \sum_{s=1}^{K} \sum_{r=1}^{K} \pi_r \pi_s \nu_k^{\top} \Delta^{-1} \nu_s \mathbf{B}_{sr} \nu_s^{\top} \Delta^{-1} \mathbf{I}_{p,q} \Delta^{-1} \nu_{\ell} & \text{if $\rho_n \rightarrow 0$}
\end{cases}
\end{split}
\end{equation*}
as $n \rightarrow \infty$. Now $\theta_{k \ell}^{(1)} = \zeta_{k \ell}^{(1)} + \zeta_{k \ell}^{(2)}$. We therefore have, for $\rho_n \equiv 1$, that
\begin{equation*}
\begin{split}
 \theta_{k\ell}^{(1)} &= \sum_{r=1}^{K} \pi_r \mathbf{B}_{kr} (1 - \mathbf{B}_{kr}) \nu_k^{\top} \Delta^{-1} \mathbf{I}_{p,q}  \Delta^{-1} \nu_{\ell} \\ &- 
\sum_{r=1}^{K} \sum_{s=1}^{K} 
\pi_r \pi_s \mathbf{B}_{sr} (1 - \mathbf{B}_{sr}) \nu_s^{\top} \Delta^{-1} \mathbf{I}_{p,q} \Delta^{-1} \nu_{\ell} \nu_k^{\top} \Delta^{-1} \nu_s.
\end{split}
\end{equation*}
In contrast, if $\rho_n \rightarrow 0$, then
\begin{equation*}
\begin{split}
 \theta_{k\ell}^{(1)} = \sum_{r=1}^{K} \pi_r \mathbf{B}_{kr} \nu_k^{\top} \Delta^{-1} \mathbf{I}_{p,q}  \Delta^{-1} \nu_{\ell} - 
\sum_{r=1}^{K} \sum_{s=1}^{K} 
\pi_r \pi_s \mathbf{B}_{sr} \nu_s^{\top} \Delta^{-1} \mathbf{I}_{p,q} \Delta^{-1} \nu_{\ell} \nu_k^{\top} \Delta^{-1} \nu_s.
\end{split}
\end{equation*}
Swapping $\ell$ with $k$ in the above expression yield a similar expression for $\theta_{k\ell}^{(2)} = \mathbb{E}[n \bm{s}_{\ell}^{\top} \bm{\Pi}_{\bU}^{\perp} (\mathbf{A} - \mathbf{P})^{2} \bU \bLam^{-1} \bU^{\top} \bm{s}_{k}]$. Since $\theta_{k \ell}^{(1)} + \theta_{k \ell}^{(2)} = \theta_{k \ell}$ for $\rho_n \equiv 1$, we conclude 
\begin{equation*}
\begin{split}
\theta_{k\ell} &= \sum_{r=1}^{K} \pi_r \bigl(\mathbf{B}_{kr} (1 - \mathbf{B}_{kr}) + \mathbf{B}_{\ell r}(1 - \mathbf{B}_{\ell r})\bigr)\nu_k^{\top} \Delta^{-1} \mathbf{I}_{p,q} \Delta^{-1} \nu_{\ell} \\
& - \sum_{r=1}^{K} \sum_{s=1}^{K} \pi_r \pi_s \mathbf{B}_{sr} (1 - \mathbf{B}_{sr}) \nu_s^{\top} \Delta^{-1} \mathbf{I}_{p,q} \Delta^{-1} (\nu_{\ell} \nu_{k}^{\top} + \nu_{k} \nu_{\ell}^{\top}) \Delta^{-1} \nu_s.
\end{split}
\end{equation*} 
Similarly, $\theta_{k \ell}^{(1)} + \theta_{k \ell}^{(2)} = \tilde{\theta}_{k \ell}$ when $\rho_n \rightarrow 0$, and hence
\begin{equation*}
\begin{split}
\tilde{\theta}_{k\ell} &= \sum_{r=1}^{K} \pi_r \bigl(\mathbf{B}_{kr}  + \mathbf{B}_{\ell r} \bigr)\nu_k^{\top} \Delta^{-1} \mathbf{I}_{p,q} \Delta^{-1} \nu_{\ell} \\
& - \sum_{r=1}^{K} \sum_{s=1}^{K} \pi_r \pi_s \mathbf{B}_{sr} 
\nu_s^{\top} \Delta^{-1} \mathbf{I}_{p,q} \Delta^{-1} (\nu_{\ell} \nu_{k}^{\top} + \nu_{k} \nu_{\ell}^{\top}) \Delta^{-1} \nu_s
\end{split}
\end{equation*} 
as desired.

\subsection*{Proof of Lemma~\ref{LEM:PERFECT}}
We recall the notion of the $2 \to \infty$ norm for matrices, namely, for a $n \times m$ matrix $\mathbf{A}$ (with $\mathbf{A}_i$ denoting the $i$-th row of $\mathbf{A}$)
$$ \|\mathbf{A} \|_{2 \to \infty} = \max_{\|\bm{x}\|_2 = 1} \|\mathbf{A} \bm{x}\|_{\infty} = \max_{i \in [n]} \|\mathbf{A}_i\|_2.
$$
Eq.~\eqref{eq:perfect} in Lemma~\ref{LEM:PERFECT} can thus be rewritten as
\begin{equation}
\label{eq:perfect2} \|\hat{\mathbf{U}}_n - \mathbf{U}_n \mathbf{W} \|_{2 \to \infty} = O_{\mathbb{P}}\Bigl(\frac{\log^{c}{n}}{n \sqrt{\rho_n}}\Bigr)
\end{equation}
for some orthogonal $\mathbf{W}$. We now derive Eq.~\eqref{eq:perfect2}. For ease of exposition, we shall drop the index $n$ from our matrices $\mathbf{X}_n$, $\mathbf{A}_n$, $\hat{\mathbf{U}}_n$ and $\mathbf{U}_n$. We first note that for any matrices $\mathbf{A}$ and $\mathbf{B}$ whose product $\mathbf{A} \mathbf{B}$ is well-defined, 
$\|\mathbf{A} \mathbf{B}\|_{2 \to \infty} \leq \|\mathbf{A}\|_{2 \to \infty} \times \|\mathbf{B}\|$. Next, we note that $\|\mathbf{U} \|_{2 \to \infty} = O_{\mathbb{P}}(n^{-1/2})$ as the rows of $\mathbf{X} $ are sampled i.i.d. from $F$. Recalling Lemma~\ref{lem:A-P}, we then have
\begin{equation*} 
\begin{split} \|\hat{\mathbf{U}} - \mathbf{U} \mathbf{W} \|_{2 \to \infty} & \leq
\|\hat{\mathbf{U}} - \mathbf{U} \mathbf{U} ^{\top} \hat{\mathbf{U}} \|_{2 \to \infty} + \|\mathbf{U} \mathbf{U}^{\top} \hat{\mathbf{U}} - \mathbf{U} \mathbf{W}\|_{2 \to \infty} \\ 
&\leq \|\hat{\mathbf{U}} - \mathbf{U} \mathbf{U}^{\top} \hat{\mathbf{U}} \|_{2 \to \infty} + \|\mathbf{U}\|_{2 \to \infty} \|\mathbf{U}^{\top} \hat{\mathbf{U}} - \mathbf{W} \| \\
&\leq \|\hat{\mathbf{U}} - \mathbf{U} \mathbf{U}^{\top} \hat{\mathbf{U}} \|_{2 \to \infty} + O_{\mathbb{P}}\Bigl(\tfrac{1}{n^{3/2} \rho_n}\Bigr).
\end{split}
\end{equation*}
 Eq.~\eqref{eq:Uhat_Sylvester2} now implies (recall that $\bm{\Pi}_{\bU}^{\perp} = \mathbf{I} - \bU \bU^{\top}$)
 \begin{equation}
 \label{eq:lem_perfect_part0}
 \begin{split} \|\hat{\mathbf{U}} - \mathbf{U} \mathbf{U}^{\top} \hat{\mathbf{U}} \|_{2 \to \infty} & \leq \sum_{k=1}^{\infty} \|\bm{\Pi}_{\bU}^{\perp} (\mathbf{A} - \mathbf{P})^{k} \bU \bm{\Lambda} \bU^{\top} \hat{\bU} \hat{\bm{\Lambda}}^{-(k+1)} \|_{2 \to \infty} \\
 &\leq \sum_{k=1}^{\infty} \|(\mathbf{A} - \mathbf{P})^{k} \bU\|_{2 \to \infty} \times \|\bm{\Lambda}\| \times \|\hat{\bm{\Lambda}}^{-1}\|^{(k+1)} \\ &+ \sum_{k=1}^{\infty} \|\bU\|_{2 \to \infty} \times \|\bU^{\top} 
 (\mathbf{A} - \mathbf{P})^{k} \bU \| \times \|\bm{\Lambda}\| \times \|\hat{\bm{\Lambda}}^{-1}\|^{(k+1)}.
\end{split}
 \end{equation}
 Once again, by Lemma~\ref{lem:A-P}, we have
 \begin{equation}
 \label{eq:lem_perfect_part1}
 \begin{split} \sum_{k=1}^{\infty} \|\bU\|_{2 \to \infty} \times \|\bU^{\top} 
 (\mathbf{A} - \mathbf{P})^{k} \bU \| \times \|\bm{\Lambda}\| \times \|\hat{\bm{\Lambda}}^{-1}\|^{(k+1)} & \leq \sum_{k=1}^{\infty} \mathbb{O}_{\mathbb{P}}\Bigl(\frac{1}{\sqrt{n} (n \rho_n)^{k/2}}\Bigr) \\ &= \mathbb{O}_{\mathbb{P}}\Bigl(\frac{1}{n \sqrt{\rho_n}}\Bigr).
 \end{split}
\end{equation}
We now bound $\sum_{k=1}^{\infty} \|(\mathbf{A} - \mathbf{P})^{k} \bU\|_{2 \to \infty} \times \|\bm{\Lambda}\| \times \|\hat{\bm{\Lambda}}^{-1}\|^{(k+1)}$. We need the following slight restatement of Lemma~7.10 from \cite{erdos}. 
\begin{lemma}
\label{lem:erdos}
Assume the setting and notations in Lemma~\ref{LEM:PERFECT}. Let $\bm{u}_j$ be the $j$-th column of $\mathbf{U}$ for $j = 1,2, \dots, d$. Then there exists constants $c  > 0$ such that for all $k \leq \log{n}$ 
$$ \|(\mathbf{A} - \mathbf{P})^{k} \bU\|_{2 \to \infty} \leq
\sqrt{d} \max_{j \in [d]} \|(\mathbf{A} - \mathbf{P})^{k} \bm{u}_j\|_{\infty} = O_{\mathbb{P}}\Bigl(\frac{\sqrt{d} (n \rho_n)^{k/2} \log^{kc}(n)}{\sqrt{n}}\Bigr).$$
\end{lemma}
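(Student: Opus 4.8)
The plan is to reduce the matrix $2\to\infty$ bound to a scalar entrywise bound and then control that entry by the method of moments applied to the walk expansion of $(\mathbf{A}-\mathbf{P})^{k}$. The first inequality is immediate from the definition of the $2\to\infty$ norm: the $i$-th row of $(\mathbf{A}-\mathbf{P})^{k}\bU$ has $j$-th coordinate $\bm{e}_i^{\top}(\mathbf{A}-\mathbf{P})^{k}\bm{u}_j$, so its Euclidean norm is at most $\sqrt{d}\,\max_{j}|\bm{e}_i^{\top}(\mathbf{A}-\mathbf{P})^{k}\bm{u}_j|$; taking the maximum over $i$ gives the claim. It then suffices to prove that, uniformly over $j\in[d]$,
\begin{equation*}
\max_{i\in[n]}\bigl|\bm{e}_i^{\top}(\mathbf{A}-\mathbf{P})^{k}\bm{u}_j\bigr| = O_{\mathbb{P}}\Bigl(\tfrac{(n\rho_n)^{k/2}\log^{kc}(n)}{\sqrt{n}}\Bigr).
\end{equation*}

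Fix $i$ and $j$ and write $\mathbf{E}=\mathbf{A}-\mathbf{P}$, whose entries $\{\mathbf{E}_{ab}\}_{a\le b}$ are independent, mean zero, bounded by $1$, with variance $\mathbf{P}_{ab}(1-\mathbf{P}_{ab})=O(\rho_n)$. Expanding the matrix power over walks gives
\begin{equation*}
W_i := \bm{e}_i^{\top}\mathbf{E}^{k}\bm{u}_j = \sum_{v_1,\dots,v_k}\mathbf{E}_{iv_1}\mathbf{E}_{v_1v_2}\cdots\mathbf{E}_{v_{k-1}v_k}(\bm{u}_j)_{v_k},
\end{equation*}
and I would control $W_i$ through its high even moments. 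For an even integer $2m$, the quantity $\mathbb{E}[W_i^{2m}]$ is a sum over $2m$-tuples of rooted length-$k$ walks, weighted by the product of the $2m$ terminal coordinates of $\bm{u}_j$; because the $\mathbf{E}_{ab}$ are independent and centered, a term survives only when every distinct edge in the union of the $2m$ walks is traversed at least twice. Each surviving edge then carries a variance factor $O(\rho_n)$, each free vertex a summation factor $O(n)$, and the $2m$ terminal coordinates contribute at most $\|\bm{u}_j\|_{\infty}^{2m}=O(n^{-m})$, using $\|\bU\|_{2\to\infty}=O_{\mathbb{P}}(n^{-1/2})$ from the proof of Lemma~\ref{LEM:PERFECT}. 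This delocalization factor $n^{-m}$ is what sharpens the operator-norm scaling $(n\rho_n)^{k/2}$ into the entrywise scaling $(n\rho_n)^{k/2}/\sqrt{n}$.

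The combinatorial heart of the argument, which I expect to be the main obstacle, is to count the admissible (doubly covered) walk configurations of total edge-length $2mk$ and to verify that the number of free vertices is small enough that, after collecting the factors above, one obtains a bound of the form
\begin{equation*}
\mathbb{E}[W_i^{2m}] \le \frac{(n\rho_n)^{mk}}{n^{m}}\,(Cmk)^{2mk}
\end{equation*}
for an absolute constant $C$; this is precisely the bookkeeping of Lemma~7.10 in \cite{erdos}, which I would adapt. Taking the $2m$-th root gives $(\mathbb{E}[W_i^{2m}])^{1/2m}\le (Cmk)^{k}(n\rho_n)^{k/2}/\sqrt{n}$, and since $k\le\log n$, choosing $m=\lceil\log n\rceil$ and a large enough constant $c$ yields $(Cmk)^{k}\le\log^{kc}(n)$. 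Markov's inequality applied to $W_i^{2m}$ at the threshold $t=(n\rho_n)^{k/2}\log^{kc}(n)/\sqrt{n}$ then makes $\Pr[|W_i|>t]$ decay faster than any fixed power of $n$, so a union bound over the $n$ choices of $i$, the $d$ choices of $j$, and the at most $\log n$ relevant values of $k$ completes the proof. The delicate points are tracking the joint dependence of the combinatorial prefactor on $m$ and $k$ and respecting $k\le\log n$, which is exactly what keeps the error at the level $\log^{kc}(n)$.
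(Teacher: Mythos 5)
Your proposal is correct and takes essentially the same route as the paper: the paper's own proof consists of invoking Lemma~7.10 of \cite{erdos} (the even-moment, walk-counting bound), remarking only that its argument extends from $\bm{u}_j = n^{-1/2}\bm{1}$ to any delocalized vector with $\|\bm{u}_j\|_{\infty} = O_{\mathbb{P}}(n^{-1/2})$. Your sketch merely unpacks the internals of that same argument (reduction to entrywise bounds, doubly-covered walk expansion with the delocalization factor $n^{-m}$, then Markov and a union bound), while deferring the combinatorial bookkeeping to exactly the same source the paper cites.
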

We note that Lemma~7.10 from \cite{erdos} was originally
stated for the case when 
$\bm{u}_j = n^{-1/2} \bm{1}$ \footnote{There is a small typo in \cite{erdos} in that for Lemma~7.10, $\bm{e}$ is defined as $\bm{e} = \bm{1}$, while $\bm{e} = n^{-1/2} \bm{1}$ is used everywhere else in the paper.}, but the argument used in the proof of Lemma~7.10 can be easily extended to the setting where the entries of $\bm{u}_j$ are ``delocalized'', i.e., $\|\bm{u}_j\|_{\infty} = O_{\mathbb{P}}(n^{-1/2})$. Using Lemma~\ref{lem:erdos} and Lemma~\ref{LEM:PERFECT}, we obtain
\begin{equation*}
\begin{split}
\sum_{k=1}^{\infty} \|(\mathbf{A} - \mathbf{P})^{k} \bU\|_{2 \to \infty} \|\bm{\Lambda}\| \|\hat{\bm{\Lambda}}^{-1}\|^{(k+1)} &
\leq \sum_{k=1}^{\log{n}} O_{\mathbb{P}}\Bigl(\tfrac{\sqrt{d} \log^{kc}(n)}{\sqrt{n} (n \rho_n)^{k/2}}\Bigr) + \sum_{k > \log{n}} O_{\mathbb{P}}((n \rho_n)^{-k/2}) \\
& \leq O_{\mathbb{P}}\Bigl(\tfrac{\sqrt{d} \log^{c}(n)}{n \sqrt{\rho_n}}\Bigr) + O_{\mathbb{P}}((n \rho_n)^{-(\tfrac{1}{2} \log{n})}).
\end{split}
\end{equation*}
If we now assume $n \rho_n = \omega(\log^{2c}(n))$, then $$(n \rho_n)^{-(\tfrac{1}{2} \log{n})}) = o_{\mathbb{P}}\Bigl(\tfrac{\sqrt{d} \log^{c}(n)}{n \sqrt{\rho_n}}\Bigr)$$ and hence
\begin{equation}
\label{eq:lem_perfect_part2}
\begin{split}
\sum_{k=1}^{\infty} \|(\mathbf{A} - \mathbf{P})^{k} \bU\|_{2 \to \infty} \times \|\bm{\Lambda}\| \times \|\hat{\bm{\Lambda}}^{-1}\|^{(k+1)} \leq O_{\mathbb{P}}\Bigl(\tfrac{\sqrt{d} \log^{c}(n)}{n \sqrt{\rho_n}}\Bigr).
\end{split}
\end{equation}
Substituting Eq.~\eqref{eq:lem_perfect_part1} and Eq.~\eqref{eq:lem_perfect_part2} into Eq.~\eqref{eq:lem_perfect_part0} yield
Eq.~\eqref{eq:perfect2}, as desired.
\bibliography{../biblio}
\end{document}